\spnewtheorem{mylemma}[theorem]{Lemma}{\bf}{\it}
\spnewtheorem{myprop}[theorem]{Proposition}{\bf}{\it}
\newcommand{\outputstyle}[1]{\textbf{#1}}
\newcommand{\leftind}[1]{{#1}_{\mathrm{left}}}
\newcommand{\rightind}[1]{{#1}_{\mathrm{right}}}
\begin{document}

\mainmatter
\title{Cleaning Interval Graphs}
\author{D\'aniel Marx\inst{1} \and Ildik\'o Schlotter\inst{2}}

\institute{
Tel Aviv University, Israel
\and 
Budapest University of Technology and Economics, Hungary \\
\email{\{dmarx,ildi\}@cs.bme.hu}\\
}

\maketitle

\begin{abstract} We investigate a special case of the \textsc{Induced Subgraph Isomorphism} problem,
where both input graphs are interval graphs. We show the NP-hardness of this problem,
and we prove fixed-parameter tractability of the problem with non-standard parameterization,
where the parameter is the difference $|V(G)|-|V(H)|$, with $G$ and $H$ being the
larger and the smaller input graph, respectively.
Intuitively, we can interpret this problem as ``cleaning'' the graph $G$, regarded as a pattern containing
extra vertices indicating errors, in order to obtain the graph $H$ representing the original pattern.
We also prove W[1]-hardness for the standard parameterization where the parameter is $|V(H)|$.
\end{abstract}

\section{Introduction}
Problems related to graph isomorphisms play a significant role in algorithmic graph theory.
The \textsc{Induced Subgraph Isomorphism} problem is one of the basic problems of this area:
given two graphs $H$ and $G$, find an induced subgraph of $G$ isomorphic to $H$, if this is possible.
In this general form, \textsc{Induced Subgraph Isomorphism} is NP-hard, since it contains several
well-known NP-hard problems, such as \textsc{Independent Set} or \textsc{Induced Path}.
As shown in Sect.\ref{sect_hard}, the special case of  \textsc{Induced Subgraph Isomorphism}
when both input graphs are interval graphs is NP-hard as well. 

As \textsc{Induced Subgraph Isomorphism} has a wide range of important applications, polynomial time
algorithms have been given for numerous special cases, such as the case when both input graphs
are trees \cite{matula-subtree} or 2-connected outerplanar graphs \cite{lingas-outerplanar}.
However, \textsc{Induced Subgraph Isomorphism} remains NP-hard even if $H$ is a forest and $G$ is a tree,
or if $H$ is a path and $G$ is a cubic planar graph \cite{garey-johnson-book}.
In many fields where researchers face hard problems, parameterized complexity theory
(see e.g. \cite{downey-fellows-book}, \cite{niedermeier-book} or \cite{grohe-flum-book})
has proved to be successful in the analysis and design of algorithms that have a tractable running time
 in many applications. In parameterized complexity, a parameter $k$ is introduced
besides the input $I$ of the problem, and the aim is to find an algorithm with running time $O(f(k)|I|^c)$
where $f$ is an arbitrary function and $c$ is a constant, independent of $k$. A parameterized
problem is \emph{fixed-parameter tractable} (FPT), if it admits such an algorithm.

Note that \textsc{Induced Subgraph Isomorphism} is trivially solvable in 
time $O(|V(G)|^{|V(H)|}|E(H)|)$ on input graphs $H$ and $G$.
As $H$ is typically much smaller than $G$ in many applications related to pattern matching,
the usual parameterization of \textsc{Induced Subgraph Isomorphism} is to define the parameter to be $|V(H)|$.
FPT algorithms are known if $G$ is planar \cite{eppstein-subgraphs}, 
has bounded degree \cite{cai-random-separation},
or if $H$ is a log-bounded fragmentation graph and $G$ has bounded treewidth 
\cite{hajiaghayi-nishimura-logbounded-fragmentation}.
In Sect. \ref{sect_hard}, we show that the case when both input graphs are interval graphs
is W[1]-hard with this parameterization. 

Our main objective is to consider another parameterization of \textsc{Induced Subgraph Isomorphism},
where the parameter is the difference $|V(G)|-|V(H)|$.
Considering the presence of extra vertices as some kind of error or noise, the problem of finding
the original graph $H$ in the ``dirty'' graph $G$ containing errors is clearly meaningful.
In other words, the task is to ``clean'' the graph $G$ containing errors
in order to obtain $H$. For two graph classes $\mathcal{H}$ and $\mathcal{G}$ we define the
\textsc{Cleaning}($\mathcal{H}, \mathcal{G}$) problem: given a pair of graphs $(H,G)$ with
$H \in \mathcal{H}$ and $G \in \mathcal{G}$,
find a set of vertices $S$ in $G$ such that $G-S$ is isomorphic to $H$.
The parameter associated with the input $(H,G)$ is $|V(G)|-|V(H)|$.
For the case when $\mathcal{G}$ or $\mathcal{H}$ is the class of all graphs, we will use the notation
\textsc{Cleaning}($\mathcal{H},-$) or \textsc{Cleaning}($-,\mathcal{G}$), respectively.

In the special case when the parameter is 0, the problem is equivalent to the \textsc{Graph Isomorphism} problem,
so we cannot hope to give an FPT algorithm for the general problem \textsc{Cleaning}($-,-$).
Several special cases have already been studied.
FPT algorithms were given for the problems \textsc{Cleaning}(\textsl{Tree},$-$) 
\cite{marx-schlotter-dam-cleaning},
\textsc{Cleaning}(\textsl{3-Connected-Planar, Planar}) \cite{marx-schlotter-dam-cleaning} and
\textsc{Cleaning}(\textsl{Grid},$-$) \cite{diaz-thilikos-grid},
where \textsl{Tree}, \textsl{Planar}, \textsl{3-Connected-Planar} and \textsl{Grid} denote
the class of trees, planar graphs, 3-connected planar graphs, and rectangular grids, respectively.
Without parameterization, all of these problems are NP-hard.

Here we consider the special case where the input graphs are from \textsl{Interval}, denoting
the class of interval graphs.
In Sect. \ref{sect_algo}, we present an FPT algorithm for
\textsc{Cleaning}(\textsl{Interval, Interval}).


\section{Notation and preliminaries}

We denote $\{1, \dots, n\}$ by $[n]$.
We denote the neighbors of a vertex $x \in V(G)$ in $G$ by $N_G(x)$. For some $X \subseteq V(G)$, let
$N_G(X)$ denote those vertices in $V(G) \setminus X$ that are adjacent to a vertex in $X$ in $G$, and let
$N_G[X]=N_G(X) \cup X$. If $X \subseteq V(G)$ then $G-X$ is obtained from $G$ by deleting $X$,
and $G[X]=G-(V(G) \setminus X)$. For some vertex $x$, sometimes we will use only $x$ instead of $\{x\}$, 
but this will not cause any confusion.
We say that two subsets of $V(G)$ are \emph{independent} in $G$, if no edge of $G$ runs between them.
Otherwise, they are \emph{neighboring}.

Let $G$ be an interval graph, meaning that $G$ can be regarded as the intersection graph
of a set of intervals. Formally, an interval representation of $G$ is
a set $\{ I_i  \mid i \in [n] \}$ of intervals, where $I_i$ and $I_j$ intersect each other if
and only if $v_i$ and $v_j$ are adjacent.
We say that two intervals \emph{properly intersect}, if they intersect, but none of them contains the other. 

Let $\mathcal{C}(G)$ be the set of all maximal cliques in $G$, and let
$\mathcal{C}(v) = \{ C \mid v \in C, C \in \mathcal{C}(G)\}$ for some $v \in V(G)$.
It is known that a graph is an interval graph if and only if its maximal cliques can
be \emph{ordered consecutively}, i.e. there is an ordering of $\mathcal{C}(G)$ such that
the cliques in $\mathcal{C}(v)$ form a consecutive subsequence \cite{gilmore-hoffman}.
Note that any interval representation gives rise to a natural ordering of $\mathcal{C}(G)$, 
which is always a consecutive ordering.
The set of all consecutive orderings of $\mathcal{C}(G)$ are usually represented by PQ-trees, 
a data structure introduced by Booth and Lueker \cite{booth-lueker-consecutive}.

A \emph{PQ-tree} of $G$ is a rooted tree $T$ with ordered edges with the following properties:
every non-leaf node is either a Q-node or a P-node, each P-node has at least 2 children,
each Q-node has at least 3 children, and the leaves of $T$ are bijectively associated with
the elements of $\mathcal{C}(G)$.
The \emph{frontier} $F(T)$ of the PQ-tree $T$ is the permutation of $\mathcal{C}(G)$
that is obtained by ordering the cliques associated with the leaves of $T$
simply from left to right.
Two PQ-trees $T_1$ and $T_2$ are \emph{equivalent}, if one can be obtained from the other by
applying a sequence of the following transformations: permuting the children
of a P-node arbitrarily, or reversing the children of a Q-node.
The consecutive orderings of the maximal cliques of a graph can be represented by a PQ-tree
in the following sense: for each interval graph $G$ there exists a PQ-tree $T$, such that
$\{ F(T') \mid  T'$ is a PQ-tree equivalent to $T \}$ yields the set
of all consecutive orderings of $\mathcal{C}(G)$. Such a PQ-tree \emph{represents} $G$.
For any interval graph $G$ a PQ-tree representing it can be constructed in linear time 
\cite{booth-lueker-consecutive}.

This property of PQ-trees can be used in the recognition of interval graphs.
However, to examine isomorphism of interval graphs, the information stored in a PQ-tree is
not sufficient. For this purpose, a new data structure, the labeled PQ-tree has been defined
\cite{booth-lueker-interval-isomorphism,booth-colbourn}.
For a PQ-tree $T$ and some node $s \in V(T)$, let $T_s$ denote the subtree of $T$ rooted at $s$. 
For each vertex $v$ in $G$, let the \emph{characteristic node} $R(v)$ of $v$ in a PQ-tree $T$ 
representing $G$ be the deepest node $s$ in $T$ such that the frontier of $T_s$ contains $\mathcal{C}(v)$.
For a node $s \in V(T)$, we will also write $R^{-1}(s) = \{ x \in V(G) \mid R(x)=s \}$,
and if $T'$ is a subtree of $T$, then $R^{-1}(T')= \{ x \in V(G) \mid R(x) \in V(T')\}$.
Observe that if $R(v)$ is a P-node, then every clique in the frontier of $T_{R(v)}$ contains $v$.
It is also true that if $R(v)$ is a Q-node with children $x_1, x_2, \dots, x_m$, then 
those children of $R(v)$ whose frontier contains $v$ form a consecutive subseries of $x_1, \dots x_m$.
Formally, there must exist two indices $i < j$ such that 
$\mathcal{C}(v) = \{C \mid C \in F(T_{x_h})$ for some $i \leq h \leq j\}$.

A \emph{labeled PQ-tree} of $G$ is a labeled version of a PQ-tree $T$ of $G$
where the labels store the following information. If $x$ is a P-node or a leaf, then its label
is simply $|R^{-1}(x)|$. 
If $q$ is a Q-node with children $x_1, x_2, \dots, x_m$ (from left to right),
then for each $v \in R^{-1}(T_q)$ we define $Q_q(v)$ to be the pair $[a,b]$ 
such that $x_a$ and $x_b$ are the leftmost and rightmost children of $q$ 
whose frontier in $T$ contains $\mathcal{C}(v)$.
Also, if $Q_q(v)=[a,b]$ for some vertex $v$, then we let $Q_q^{\mathrm{left}}(v)=a$ and 
$Q_q^{\mathrm{right}}(v)=b$. 
For some $1 \leq a \leq b \leq m$, the pair $[a,b]$ is a \emph{block} of $q$.
Considering blocks of a Q-node, we will use a terminology that treats them like intervals, 
so two blocks can be disjoint, intersecting, they contain indices, etc.
The label $L(q)$ of $q$ encodes the values $|L_q(a,b)|$ for each $a<b$ in $[m]$, where
$L_q(a,b)$ is the set $\{ v \in R^{-1}(q) \mid Q_q(v)=[a,b] \}$.

Note that a PQ-tree can be labeled in linear time.
Two labeled PQ-trees are \emph{identical}, if they are isomorphic as rooted trees
and the corresponding vertices have the same labels. Two labeled PQ-trees are \emph{equivalent},
if they can be made identical by applying a sequence of transformations as above,
with the modification that when reversing the children of a Q-node, its label must also be adjusted correctly.
The key theorem that yields a way to handle isomorphism questions on interval graphs is the following:
\begin{theorem}[\cite{booth-lueker-interval-isomorphism}]
\label{LPQ_tree}
Let $G_1$ and $G_2$ be two interval graphs, and let $T^L(G_1)$ and $T^L(G_2)$ be the labeled
version of a PQ-tree representing $G_1$ and $G_2$, respectively. Then $G_1$ is isomorphic to $G_2$
if and only if $T^L(G_1)$ is equivalent to $T^L(G_2)$.
\end{theorem}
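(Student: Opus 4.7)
The plan is to prove the two directions separately, using the correspondence between interval graphs and consecutive orderings of their maximal cliques, together with the fact that PQ-trees encode all such orderings up to P-permutations and Q-reversals.

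For the forward direction, assume $G_1 \cong G_2$ via some isomorphism $\phi$. Since isomorphisms preserve cliques, $\phi$ induces a bijection $\Phi$ between $\mathcal{C}(G_1)$ and $\mathcal{C}(G_2)$, and $\Phi$ maps any consecutive ordering of $\mathcal{C}(G_1)$ to a consecutive ordering of $\mathcal{C}(G_2)$. Since all consecutive orderings of $\mathcal{C}(G_i)$ are captured by the equivalence class of $T^L(G_i)$ (forgetting labels, and using the unlabeled version of the result for PQ-trees), the underlying PQ-trees must be equivalent. I would then argue that $\Phi$ maps the characteristic node of each $v \in V(G_1)$ to the characteristic node of $\phi(v)$ in $G_2$ once the two PQ-trees are aligned: indeed, $R(v)$ depends only on $\mathcal{C}(v)$ and on the tree structure, both of which are preserved by $\Phi$. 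The same argument shows that the block $Q_q(v)$ is preserved up to Q-node reversal. Counting vertices with each characteristic node (and each block at Q-nodes) then shows the labels match.

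For the reverse direction, assume $T^L(G_1) \equiv T^L(G_2)$ and first apply the sequence of P-permutations and Q-reversals (with correct label adjustments) to make them identical, say both equal to $T^L$. For every node $s$ of $T^L$, the label value tells us $|R^{-1}(s)|$ in both $G_1$ and $G_2$; moreover, at a Q-node $q$ and each block $[a,b]$, $|L_q(a,b)|$ agrees between the two graphs. Thus I can fix an arbitrary bijection $\phi: V(G_1) \to V(G_2)$ that maps $R^{-1}(s)$ to $R^{-1}(s)$ for every node $s$, and in addition preserves the block $Q_q(\cdot)$ at each Q-node $q$. The crux is then to show $\phi$ is a graph isomorphism.

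For this, I would observe that adjacency is encoded entirely in the characteristic nodes and blocks. Specifically, for $u,v$ with characteristic nodes $R(u)=s$ and $R(v)=t$ in the same tree, three cases arise: (i) $s$ is a proper ancestor of $t$ (or vice versa); (ii) $s=t$; (iii) $s$ and $t$ are incomparable. In case (i), if $s$ is a P-node or a leaf, every clique in the frontier of $T_s$ contains $u$ and so $u \sim v$ automatically, while if $s$ is a Q-node we need the subtree containing $t$ to sit inside the block $Q_s(u)$, a condition intrinsic to the labeled tree. Case (ii) splits similarly: for a P-node or leaf $s=t$, $u$ and $v$ share all cliques of $F(T_s)$, hence are adjacent, and for a Q-node, adjacency is equivalent to $Q_s(u) \cap Q_s(v) \neq \emptyset$. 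In case (iii), walking up to the least common ancestor and analyzing its type (P versus Q, with the corresponding block test) yields a criterion phrased only in terms of the tree and the characteristic-node / block data. Since $\phi$ preserves all of these, it preserves adjacency. The main obstacle I expect is the careful case analysis for adjacency at a common Q-node ancestor, where one must verify that two vertices are adjacent if and only if their associated subtrees (or blocks) overlap; this follows from the fact that the frontier of the Q-node gives a consecutive ordering and each $\mathcal{C}(v)$ is consecutive in that ordering.
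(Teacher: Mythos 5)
The paper does not prove this statement at all: it is quoted verbatim from the cited Lueker--Booth work on interval graph isomorphism, so there is no in-paper proof to compare against. Your reconstruction is the standard argument and is sound: the forward direction correctly reduces to the uniqueness (up to equivalence) of the PQ-tree representing a given set of consecutive orderings, plus the observation that characteristic nodes and blocks are defined purely from $\mathcal{C}(v)$ and the tree; the reverse direction correctly identifies the crux, namely that adjacency is completely determined by the relative positions of $R(u)$ and $R(v)$ in the tree together with the blocks at Q-nodes, so that any block-respecting bijection between the fibers $R^{-1}(s)$ is an isomorphism (vertices sharing a characteristic node and block are true twins). One small simplification: in your case (iii), no block test at the least common ancestor is needed --- if $R(u)$ and $R(v)$ are incomparable, then $\mathcal{C}(u)$ and $\mathcal{C}(v)$ lie in the frontiers of disjoint subtrees and the vertices are never adjacent, regardless of whether the LCA is a P-node or a Q-node.
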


\begin{figure}[t]
\begin{center}
    \psfrag{a1}[][]{$a_1$}
    \psfrag{a2}[][]{$a_2$}
    \psfrag{a3}[][]{$a_3$}
    \psfrag{b1}[][]{$b_1$}
    \psfrag{b2}[][]{$b_2$}
    \psfrag{b3}[][]{$b_3$}
    \psfrag{c1}[][]{$c_1$}
    \psfrag{c2}[][]{$c_2$}
    \psfrag{c3}[][]{$c_3$}
    \psfrag{c4}[][]{$c_4$}
    \psfrag{d1}[][]{$d_1$}
    \psfrag{d2}[][]{$d_2$}
    \psfrag{d3}[][]{$d_3$}
    \psfrag{d4}[][]{$d_4$}
    \psfrag{e1}[][]{$e_1$}
    \psfrag{e2}[][]{$e_2$}
    \psfrag{f}[][]{$f$}
    \psfrag{g}[][]{$g$}
    \psfrag{T}[][]{$T$}
    \psfrag{G}[][]{$G$}
    \psfrag{p1}[][]{$p_1$}
    \psfrag{p2}[][]{$p_2$}
    \psfrag{q1}[][]{$q_1$}
    \psfrag{q2}[][]{$q_2$}
    \psfrag{Lf}[][]{$f:[1,2]$}
    \psfrag{Lg}[][]{$g:[2,3]$}
    \psfrag{Lc2}[][]{$c_2:[1,2]$}
    \psfrag{Lc3}[][]{$c_3:[1,3]$}
    \psfrag{Lc4}[][]{$c_4:[1,4]$}
    \psfrag{Ld1}[][]{$d_1:[2,5]$}
    \psfrag{Ld2}[][]{$d_2:[3,5]$}
    \psfrag{Ld3}[][]{$d_3:[4,5]$}
    \psfrag{Le1}[][]{$e_1:[2,4]$}
    \psfrag{Le2}[][]{$e_2:[2,4]$}
    \psfrag{--}[][]{$-$}
\includegraphics[scale=0.4]{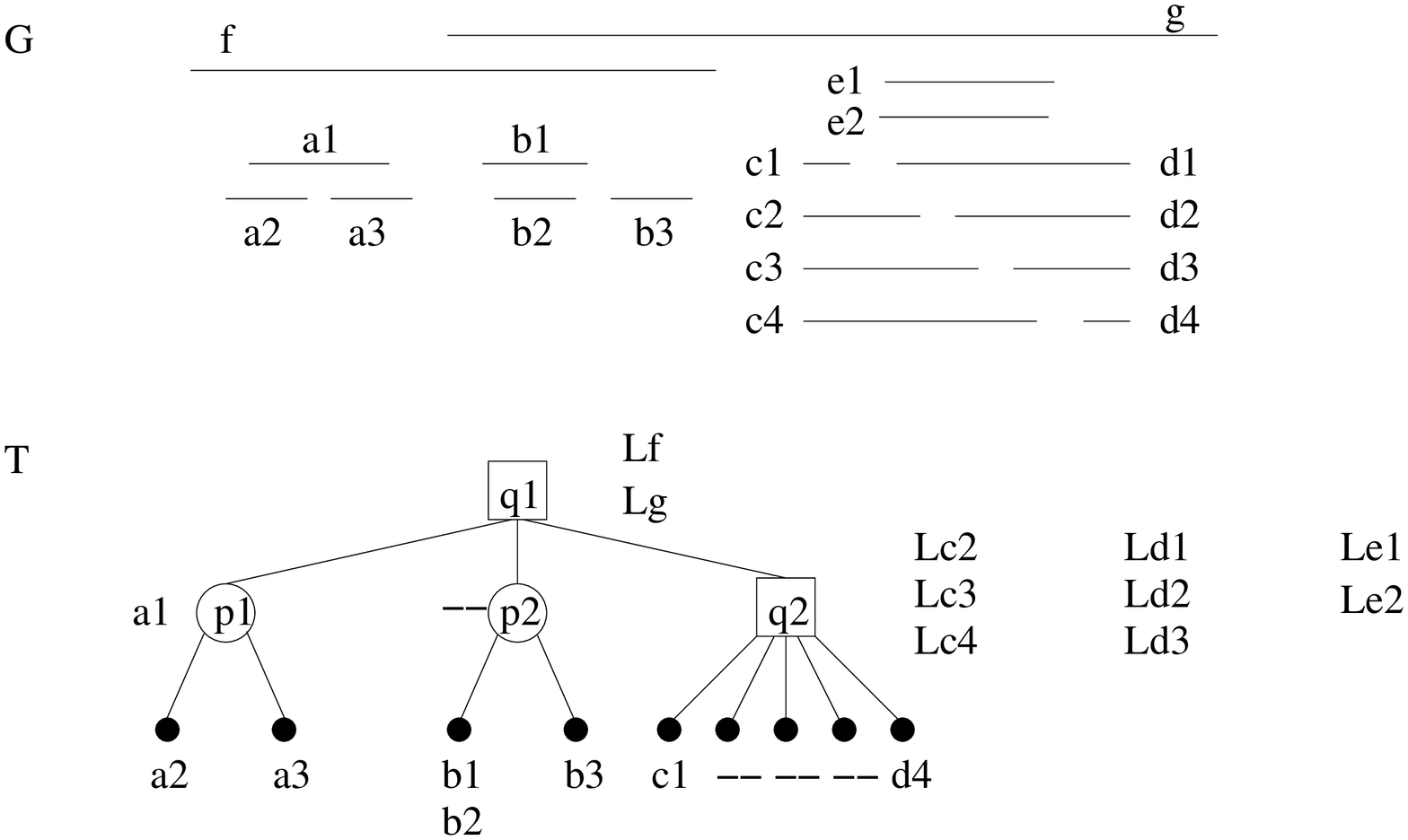}
\end{center}
\caption{
An interval representation of an interval graph $G$ and a labeled PQ-tree $T$ representing $G$. 
Squares, white and black circles represent Q-nodes, P-nodes and leaves, resp.
For each node $x$ in $T$, we listed the vertices in $R^{-1}(x)$, 
together with the values $Q_x(v)$ for each $v \in R^{-1}(x)$ where $x$ is a Q-node. 
As an example, the frontier of $p_2$ is $(\{b_1,b_2,f,g\},\{b_3,f,g\})$. 
}
\label{fig_pq_tree}
\end{figure}

Given a Q-node $q$ in a PQ-tree $T$, let $x_1, \dots, x_m$ denote its children from left to right.
For a given child $x_i$ of $q$, we define $M_q(i)$ to be the set of vertices $v \in R^{-1}(q)$ 
for which $Q_q(v)$ contains $i$, 
i.e. $M_q(i)$ is the union of those sets $L_q(a,b)$ for which $[a,b]$ contains $i$.
Clearly, $M_q(i) \neq M_q(j)$ if $i \neq j$, since this would imply the interchangeability of the
nodes $x_i$ and $x_j$. 
We say that some $w \in R^{-1}(q)$ \emph{starts} or \emph{ends} at $i$ 
if $Q_q^{\mathrm{left}}(v)=i$ or $Q_q^{\mathrm{right}}(v)=i$, respectively. 
We also denote by $M_q^+(i)$ and $M_q^-(i)$ the set of vertices that start or end at $i$, respectively. 
The maximality of the cliques in $F(T_{x_i})$ implies the following observation. 
\begin{myprop}
\label{notempty}
If $q$ is a Q-node in a PQ-tree $T$ and $x_i$ is the $i$-th child of $q$, then 
neither $R^{-1}(T_{x_i}) \cup M_q^+(i)$ nor $R^{-1}(T_{x_i}) \cup M_q^-(i)$ can be empty.
\end{myprop}

Given some interval representation $\rho$ for an interval graph $G$, we denote by 
$v^{\mathrm{left}}_\rho$ and $v^{\mathrm{right}}_\rho$ the left and right endpoints 
of the interval representing $v \in V(G)$. 
If no confusion arises, then we may drop the subscript $\rho$.


\section{Hardness results}
\label{sect_hard}

In this section, we prove the NP-hardness of \textsc{Induced Subgraph Isomorphism} 
for the case of interval graphs, and we also show the parameterized hardness of this problem, 
where the parameter is the size of the smaller graph.

\begin{theorem}
(1) The \textsc{Induced Subgraph Isomorphism} problem is W[1]-hard if both input graphs are interval graphs, 
and the parameter is the number of vertices in the smaller input graph. \\
(2) The \textsc{Induced Subgraph Isomorphism} problem is NP-complete, if both input graphs are interval graphs.
\end{theorem}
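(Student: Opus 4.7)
The plan is to prove both parts by a single reduction from \textsc{Multicolored Clique}, which is known to be both W[1]-hard (parameterized by the number of colors $k$) and NP-hard. Given an instance with graph $G'$ and color classes $V_1,\dots,V_k$ of size $n$, I would construct in polynomial time two interval graphs $H$ and $G$, with $|V(H)|$ bounded by a function $f(k)$, such that $H$ is an induced subgraph of $G$ iff $G'$ contains a multicolored clique. This simultaneously yields (1), because the reduction is an fpt-reduction, and (2), because it runs in polynomial time and \textsc{Induced Subgraph Isomorphism} is trivially in NP.

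The construction would use two kinds of gadgets, both realizable as interval graphs. For each color class $V_i$ I would build a \emph{selection rail} in $G$: a chain of intervals with $n$ distinguishable "attachment positions" $1,\dots,n$, one per vertex of $V_i$. The pattern $H$ contains a matching rail of prescribed shape whose only induced embeddings into the $G$-rail are the $n$ translations that align it with a single attachment, thereby \emph{selecting} a vertex $v^\star_i \in V_i$. For each unordered pair $\{i,j\}$ and every edge $uv \in E(G')$ with $u \in V_i$ and $v \in V_j$, I would place an \emph{edge gadget} in $G$ whose intervals cross the $i$-rail and the $j$-rail precisely at the positions of $u$ and $v$. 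In $H$ there is one verification gadget per pair $\{i,j\}$, designed so that its only induced embeddings into $G$ land on a genuine edge gadget and simultaneously pin down the attachment positions used by the $i$-th and $j$-th rails. Any induced copy of $H$ in $G$ will then specify a vertex $v^\star_i$ per color and, for every pair, an edge witnessing $v^\star_i v^\star_j \in E(G')$; conversely, a multicolored clique yields an obvious embedding.

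The main obstacle is \emph{rigidity}: interval graphs forbid induced $C_4$ (and longer induced cycles), so the structural tricks available for forcing a unique embedding are limited. I would handle this by giving each rail and each verification gadget a distinctive local PQ-tree shape — for instance by attaching pendant cliques of prescribed, pairwise distinct sizes at the rail endpoints and at each attachment position — so that by Theorem~\ref{LPQ_tree} no induced embedding of $H$ can cross gadget boundaries, pick up spurious intervals, or map a rail onto a non-rail region. Once rigidity is established, the verification direction of the proof proceeds by a routine structural analysis of the labeled PQ-tree of $G$, showing that any embedding must select exactly one attachment in each rail and match each verification gadget in $H$ to an edge gadget whose endpoints agree with these selections, which is precisely the condition for a multicolored clique in $G'$.
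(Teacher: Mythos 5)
Your high-level template (per-color selection gadgets plus per-pair verification gadgets, reducing from \textsc{Multicolored Clique}) is legitimate and in the same spirit as the paper's reduction, which uses plain \textsc{Clique}, realizes selection by $2k$ induced four-vertex paths placed on two symmetric half-lines anchored at two star centers $g^-,g^+$, and realizes verification by connector intervals $f_{i,j}$ spanning from position $i$ on the negative axis to position $j$ on the positive axis. But the step you yourself identify as the main obstacle --- rigidity --- is exactly where your argument has a genuine gap, and the tool you propose for it does not work. Theorem~\ref{LPQ_tree} characterizes \emph{isomorphism} of interval graphs via labeled PQ-trees; it says nothing about \emph{induced subgraph} embeddings. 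An induced embedding of $H$ into $G$ is an isomorphism onto $G-S$ for an arbitrary, possibly large, deletion set $S$, and deleting vertices can completely restructure the PQ-tree, so a ``distinctive local PQ-tree shape'' in $G$ gives no control over where $H$ can land. Likewise, pendant cliques of pairwise distinct sizes do not rigidify anything under induced containment: a clique of size $t$ in $H$ embeds into any clique of size at least $t$ in $G$, so distinct sizes separate gadgets only under isomorphism, not under embedding.

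A second missing piece is the mechanism by which a verification gadget in $H$ (of size bounded by $f(k)$) learns \emph{which} of the $n$ attachment positions the rail embedding selected; this is the crux of any selection/verification reduction and you leave it unspecified. The paper solves both problems with deletion-robust combinatorial certificates rather than PQ-tree shape: $g^-$ and $g^+$ are forced to be the images of the two star centers because they are the only vertices of $G$ adjacent to $k$ pairwise independent vertices avoiding the origin; the images of the selection paths are then forced to be induced paths of length $4$ hanging off $g^-$ or $g^+$, which pins each one to a position $c(i,s)\in[n]$; and the connector $\widetilde f_{i,j}$ is forced onto $f_{c(i,-),c(j,+)}$ because it is the unique vertex with a prescribed number of neighbors inside each selected path --- and that vertex exists in $G$ only when $v_{c(i,-)}v_{c(j,+)}$ is an edge of the \textsc{Clique} instance. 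To repair your proof you would need gadgets whose forced-embedding arguments are of this degree- and neighborhood-counting kind, surviving arbitrary vertex deletions, rather than arguments about the PQ-tree of $G$.
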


\begin{proof}
To prove (1), we give an FPT reduction from the parameterized \textsc{Clique} problem.
Let $F=(V,E)$ and $k$ be the input graph and the parameter given for \textsc{Clique}.
We assume w.l.o.g. that $F$ is simple and $V=\{v_i \mid i \in [n]\}$.
We construct two interval graphs $G$ and $H$ with $|V(H)|=O(k^2)$ 
such that $H$ is an induced subgraph of $G$ if and only if $F$ has a $k$-clique. 

The vertex set of $G$ consist of the vertices $a_i^{s}, b_i^{s}, c_i^{s}, d_i^{s}, f_{i,i}$ for each $i \in [n]$ and $s \in \{-,+\}$, 
vertices $f_{i,j}, f_{j,i}$ for each $v_i v_j \in E$, and two vertices $g^-$ and $g^+$. 
Note that $|V(G)|=9n+2|E|+2$, which is polynomial in $n$.  
We define the edge set of $G$ by giving an interval representation for $G$.
The intervals $I(x)$ representing a vertex $x \in V(F)$ are defined below.
See also the illustration of Fig.~\ref{fig_gadget}.

\begin{figure}[t]
\begin{center}
    \psfrag{ai+}[][]{$a_i^+$}
    \psfrag{bi+}[][]{$b_i^+$}
    \psfrag{ci+}[][]{$c_i^+$}
    \psfrag{di+}[][]{$d_i^+$}
    \psfrag{ai-}[][]{$a_i^-$}
    \psfrag{bi-}[][]{$b_i^-$}
    \psfrag{ci-}[][]{$c_i^-$}
    \psfrag{di-}[][]{$d_i^-$}
    \psfrag{aj+}[][]{$a_j^+$}
    \psfrag{bj+}[][]{$b_j^+$}
    \psfrag{cj+}[][]{$c_j^+$}
    \psfrag{dj+}[][]{$d_j^+$}
    \psfrag{aj-}[][]{$a_j^-$}
    \psfrag{bj-}[][]{$b_j^-$}
    \psfrag{cj-}[][]{$c_j^-$}
    \psfrag{dj-}[][]{$d_j^-$}
    \psfrag{fii}[][]{$f_{i,i}$}
    \psfrag{fij}[][]{$f_{i,j}$}
    \psfrag{fji}[][]{$f_{j,i}$}
    \psfrag{fjj}[][]{$f_{j,j}$}
    \psfrag{g+}[][]{$g^+$}
    \psfrag{g-}[][]{$g^-$}
    \psfrag{0}[][]{$0$}
\includegraphics[scale=0.4]{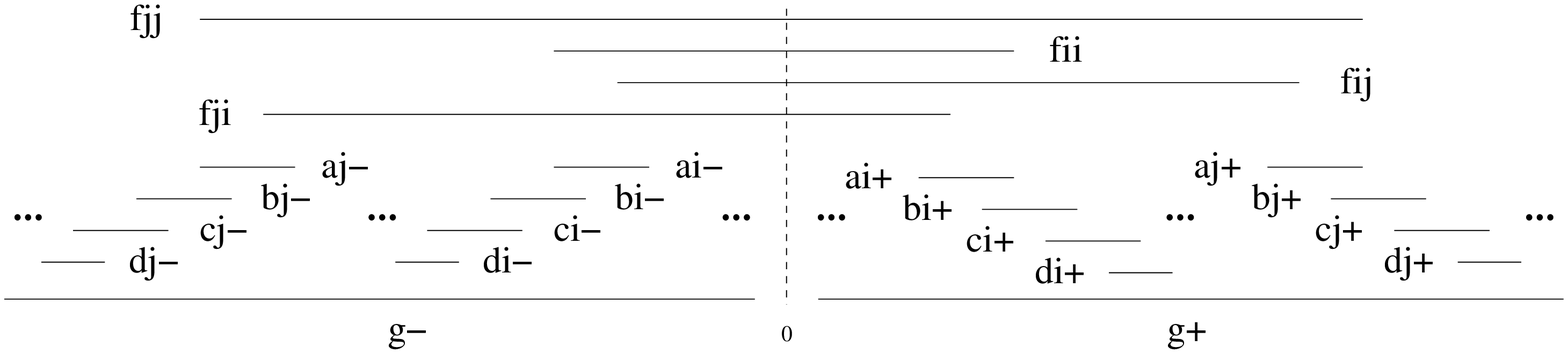}
\end{center}
\caption{
Illustration of the construction of the graph $G$. (The picture assumes $v_i v_j \in E$.)
}
\label{fig_gadget}
\end{figure}

\begin{tabular}{lll}
$I(a_i^+)=[10i-8, 10i-5]$   &   $I(a_i^-)=[-10i+5, -10i+8] \;\;$ & if $i \in [n]$ \\
$I(b_i^+)=[10i-6, 10i-3]$   &   $I(b_i^-)=[-10i+3, -10i+6] \;\;$ & if $i \in [n]$ \\
$I(c_i^+)=[10i-4, 10i-1]$   &   $I(c_i^-)=[-10i+1, -10i+4] \;\;$ & if $i \in [n]$ \\
$I(d_i^+)=[10i-2, 10i]$     &   $I(d_i^-)=[-10i, -10i+2] \;\;$ & if $i \in [n]$ \\
$I(f_{i,i})=[-10i+5, 10i-5]$    &   & if $i \in [n]$ \\
$I(f_{i,j})=[-10i+7, 10j-7] \;$ &  $I(f_{j,i})=[-10j+7, 10i-7] \;\;$ & if $v_i v_j \in E$ \\
$I(g^-)=[-10n,-1]$          &   $I(g^-)=[1,10n]$ & 
\end{tabular}

Note that this construction is symmetric in the sense that for any interval $[x_1,x_2]$ in this interval representation, 
the interval $[-x_2,-x_1]$ is also present.

Also, we define the graph $H$, having $k^2+8k+2$ vertices, as follows. Let the vertex set of $H$ consist of the vertices 
$\widetilde{a}_i^{s}, \widetilde{b}_i^{s}, \widetilde{c}_i^{s}, \widetilde{d}_i^{s}$ for each $i \in [k]$ and $s \in \{-,+\}$, the vertices 
$\widetilde{f}_{i,j}$ for each $(i,j) \in [k]^2$, and two vertices $\widetilde{g}^-$ and $\widetilde{g}^+$. 
Again, we define the edge set of $H$ by giving an interval representation for $H$ as follows.

\begin{tabular}{lll}
$I(\widetilde{a}_i^+)=[10i-8, 10i-5]$   &   $I(\widetilde{a}_i^-)=[-10i+5, -10i+8] \;\;$ & if $i \in [k]$ \\
$I(\widetilde{b}_i^+)=[10i-6, 10i-3]$   &   $I(\widetilde{b}_i^-)=[-10i+3, -10i+6] \;\;$ & if $i \in [k]$ \\
$I(\widetilde{c}_i^+)=[10i-4, 10i-1]$   &   $I(\widetilde{c}_i^-)=[-10i+1, -10i+4] \;\;$ & if $i \in [k]$ \\
$I(\widetilde{d}_i^+)=[10i-2, 10i]$     &   $I(\widetilde{d}_i^-)=[-10i, -10i+2] \;\;$ & if $i \in [k]$ \\
$I(\widetilde{f}_{i,i})=[-10i+5, 10-5]$ &  & if $i \in [k]$ \\
$I(\widetilde{f}_{i,j})=[-10i+7, 10j-7] \;$ &  $I(\widetilde{f}_{j,i})=[-10j+7, 10i-7] \;\;$ & if $i,j \in [k]$, $i \neq j$\\
$I(\widetilde{g}^-)=[-10k,-1]$          &   $I(\widetilde{g}^-)=[1,10k]$ \
\end{tabular}

First, if $C$ is a set of $k$ vertices in $F$ that form a clique, 
then $H$ is isomorphic to the subgraph of $G$ induced by the vertices $a_i^{s}, b_i^{s}, c_i^{s}, d_i^{s}, f_{i,i}$ 
for each $v_i \in C$ and $s \in \{-,+\}$, 
the vertices $f_{i,j}, f_{j,i}$ for each $\{v_i, v_j\} \subseteq C$, and the two vertices $g^-$ and $g^+$. 
This can be proven by presenting an isomorphism $\varphi$ from $H$ to the subgraph of $G$ induced by these vertices. 
It is easy to verify that the function $\varphi$ defined below indeed yields an isomorphism. 
Here, $c(i)$ denotes the index of the $i$-th vertex in the clique $C$, i.e. $C=\{v_{c(i)} \mid i \in [k]\}$. 

\begin{tabular}{ll}
$\varphi(\widetilde{x}_i^s)=x_{c(i)}^s \quad$  & for each $x \in \{a,b,c,d\}, s \in \{-,+\}, i \in [k]$ \\
$\varphi(\widetilde{f}_{i,j})=f_{c(i),c(j)}^s \quad$ & for each $i,j \in [k]^2$ \\
$\varphi(\widetilde{g}^s)=g^s \quad$ & for each $s \in \{-,+\}$ \\
\end{tabular}

For the other direction, suppose that $\varphi$ is an isomorphism from $H$ to an induced subgraph of $G$.
We set $F=\{ f_{i,j} \mid i=j$ or $v_i v_j \in E \}$, and we define $Z$ to contain those vertices of $G$ whose interval contains $0$.  

{\bf Claim.} 
If there is a disjoint union of two $k$-stars in $K$ with centers $u_1$ and $u_2$, induced by vertices $\{u_1,u_2\} \cup J$,
then $\{\varphi(u_1), \varphi(u_2)\} = \{g^-, g^+ \}$ and $\varphi(J) \cap F = \emptyset$. 
To prove this claim, note that the vertices of $J$ are independent, so 
there can be at most one vertex in $\varphi(J)$ whose interval contains $0$. Thus, either
$\varphi(u_1)$ or $\varphi(u_2)$ must not be in $Z$, and must be adjacent to at least $k$ vertices not in $Z$.
This implies that $\varphi(u_1)$ or $\varphi(u_2)$ must indeed be $g^-$ or $g^+$. Assuming, say, $\varphi(u_1) =  g^-$ 
(the remaining cases are analogous), we obtain that the only common neighbor of the $k$ vertex of $\varphi(J)$ 
not adjacent to $\varphi(u_1)$ can be $g^+$. This immediately implies $\{\varphi(u_1), \varphi(u_2)\} = \{g^-, g^+ \}$. 
From this, $\varphi(J) \cap F = \emptyset$ is clear, since no vertex of $J$ is adjacent to both $u_1$ and $u_2$. 
Hence, the claim is true.

Now, note that for some $x \in \{a,b,c,d\}$, the vertex set
$\{ \widetilde{x}_i^s \mid i \in [k], s \in \{-,+\} \} \cup \{ \widetilde{g}^-, \widetilde{g}^+\}$ 
induces the disjoint union of two $k$-stars having centers $\widetilde{g}^-$ and $\widetilde{g}^+$ in $H$. 
Therefore, applying the above claim to each these vertex sets with $x \in \{a,b,c,d\}$, 
we obtain that $\{ \varphi(\widetilde{g}^-), \varphi(\widetilde{g}^+)\}=\{ g^-,g^+\}$, and also that
$\varphi(\widetilde{X}) \cap F = \emptyset$ for the set $\widetilde{X}$ 
containing the vertices of the form $\widetilde{x}_i^s$ where 
$x \in \{a,b,c,d\}, s \in \{-,+\}$ and $i \in [k]$. 
By the symmetry of $H$ and $G$, we can assume w.l.o.g. that $\varphi(\widetilde{g}^-)=g^-$ and $\varphi(\widetilde{g}^+)=g^+$.

From this, we have that exactly $4k$ vertices of $\varphi(\widetilde{X}) $ 
are represented by an interval whose left endpoint is positive, 
and the remaining $4k$ vertices of  $\varphi(\widetilde{X}) $ are represented by an interval whose right endpoint is negative.
Now, observe that the vertices of $\widetilde{X}$ induce exactly $2k$ paths of length 4 in $H$, which leads us to 
the fact that their images by $\varphi$ must also induce 4-paths. 
Using this, it follows that for each $i \in [k]$ we can define $c(i,+), c(i,-) \in [n]$ such that 
$$
\varphi(\{\widetilde{a}_i^s,\widetilde{b}_i^s,\widetilde{c}_i^s,\widetilde{d}_i^s \}) 
= \{a_{c(i,s)}^s, b_{c(i,s)}^s, c_{c(i,s)}^s, d_{c(i,s)}^s \}
$$
for each $i \in [k]$ and $s \in \{-,+\}$.

Note also that for both $s \in \{-,+\}$, the vertex $\widetilde{f}_{i,i}$ is adjacent to exactly two vertices from 
$\{\widetilde{a}_i^s,\widetilde{b}_i^s,\widetilde{c}_i^s,\widetilde{d}_i^s\}$, 
but the only vertex adjacent to exactly two vertices from $\{a_{c(i,s)}^s, b_{c(i,s)}^s, c_{c(i,s)}^s, d_{c(i,s)}^s\}$
is the vertex $f_{c(i,s),c(i,s)}$. From this, we get that $\varphi(\widetilde{f}_{i,i})=f_{c(i,-),c(i,-)} =f_{c(i,+),c(i,+)}$, 
implying also $c(i,-) = c(i,+)$. 

Finally, note that if $i \neq j$, then $\widetilde{f}_{i,j}$ is adjacent to exactly one vertex both from 
$\{\widetilde{a}_i^-,\widetilde{b}_i^-,\widetilde{c}_i^-,\widetilde{d}_i^-\}$ and from 
$\{\widetilde{a}_j^+,\widetilde{b}_j^+,\widetilde{c}_j^+,\widetilde{d}_j^+\}$. This implies that 
$\varphi(\widetilde{f_{i,j}} = f_{c(i,-),c(j,+)}$ must hold, but $f_{c(i,-),c(j,+)}$ only exists if 
$v_{c(i,-)}$ and $v_{c(j,+)}$ are adjacent in $F$. Clearly, this implies 
that the vertices $\{ v_{c(i,-)} = v_{c(i,+)}  \mid i \in [k] \}$ form a clique in $F$, 
hence the second direction of the reduction is correct as well. 

Observe that by the size of $G$ and $H$, this yields an FPT-reduction from the parameterized \textsc{Clique} problem to 
the \textsc{Cleaning}(\textsl{Interval, Interval}) problem 
(i.e. the \textsc{Induced Subgraph Isomorphism} problem for interval graphs)
parameterized by the number of vertices in the smaller input graph, proving (1). 
Also, note that the construction of $G$ and $H$ takes time polynomial in $|V(F)|$ and $k$, so by the NP-hardness of the (unparameterized) 
\textsc{Maximum Clique} problem, this proves that the (unparameterized) \textsc{Cleaning}(\textsl{Interval, Interval}) problem 
is NP-hard as well. Its containment in NP is trivial, finishing the proof of (2). 
\qed
\end{proof}


\section{Cleaning an interval graph}
\label{sect_algo}
In this section, we present an algorithm that solves the \textsc{Cleaning}(\textsl{Interval, Interval}) 
problem. Given an input $(G',G)$ of this problem, we call a set $S \subseteq V(G)$ a 
\emph{solution} for $(G',G)$, if $G'$ is isomorphic to $G-S$. In this case, 
let $\phi_S$ denote an isomorphism from $G'$ to $G-S$.
Remember that $k=|V(G)|-|V(G')|$ is the parameter of the instance $(G',G)$.
We denote by $T$ and $T'$ the labeled PQ-tree representing $G$ and $G'$, respectively.
Let us fix an interval representation of $G$.
For a subset $X$ of $V(G)$, let $X^{\mathrm{left}}=\min \{x^{\mathrm{left}} \mid x \in X\}$ 
and $X^{\mathrm{right}}=\max \{x^{\mathrm{right}} \mid x \in X\}$.

Our algorithm for \textsc{Cleaning}(\textsl{Interval, Interval}) is based on an algorithm denoted by
$\mathcal{A}$ whose output on an input $(G',G)$ can be one of the following three concepts:
\begin{itemize}
\item a \outputstyle{necessary set}. 
We call a set $N \subseteq V(G)$ a \emph{necessary set} for $(G',G)$, if $(G',G)$
has a solution if and only if there is a vertex $x \in N$ such that $(G',G-x)$ has a solution.
Given a necessary set for $(G',G)$, we can branch on including one of its vertices in the solution.
\item a \outputstyle{reduced input}.
For subgraphs $H$ and $H'$ of $G$ and $G'$, respectively, we say that 
$(H',H)$ is a \emph{reduced input} for $(G',G)$, if $(G',G)$ is solvable if and only
if $(H',H)$ is solvable, every solution for $(H',H)$ is a solution for $(G',G)$,
and $|V(H')|+|V(H)|<|V(G')|+|V(G)|$.
Given a reduced input for $(G',G)$, we can clearly solve it instead of solving $(G',G)$.
\item an \outputstyle{independent subproblem}.
For subgraphs $H$ and $H'$ of $G$ and $G'$, respectively, we say that 
$(H',H)$ is an \emph{independent subproblem} of $(G',G)$ having parameter $k$, 
if its parameter is at least 1 but at most $k-1$, 
for any solution $S$ of $(G',G)$ the set $S \cap V(H)$ is a solution for $(H',H)$,
and if $(G',G)$ admits a solution then any solution $S$ of $(H',H)$ 
can be extended to be a solution for $(G',G)$.
Note that given an independent subproblem of $(G',G)$, we can find a vertex of the solution
by solving the independent subproblem having parameter smaller than $k$.
\end{itemize}
Observe that if $N$ is a necessary set for either an independent subproblem or a 
reduced input for $(G',G)$, then $N$ must be a necessary set for $(G',G)$ as well.

In Section~\ref{structure} we make some useful observations 
about the structure of an interval graph.
In Sections \ref{reductions} and \ref{qq_case}, we describe algorithm $\mathcal{A}$, 
that, given an input instance of \textsc{Cleaning}(\textsl{Interval, Interval}) with parameter $k$, 
does one of the followings in linear time:
\begin{itemize}
\item
either determines a reduced input for $(G',G)$, 
\item 
or branches into at most $f_1(k)=k^{O(k)}$ possibilities, in each of the branches
producing a necessary set of size at most $2k+1$ or an independent subproblem of $(G',G)$.
\end{itemize}
Note that in the first case no branching is involved. 
If the second case applies and $\mathcal{A}$ branches, 
then the collection of outputs returned in the obtained branches must contain a correct output. 
In other words, at least one of the branches must produce an output that is 
indeed a necessary set of the required size or an independent subproblem of $(G',G)$.

\begin{figure}[t]
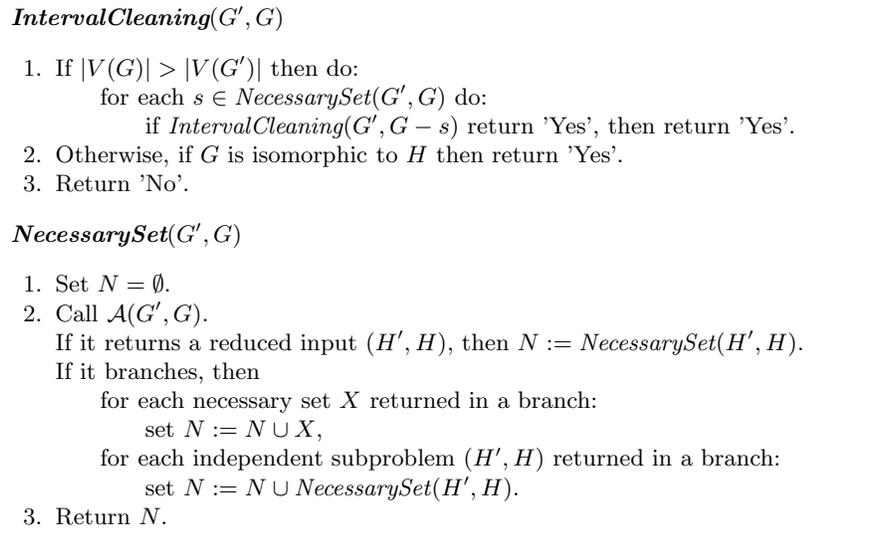

\begin{center}
\fbox{ 
\parbox{11.7cm}{ \textbf{$\textit{IntervalCleaning}(G',G)$}
\begin{enumerate}
\item If $|V(G)|>|V(G')|$ then do: 
\begin{enumerate}
\item[ ] for each $s \in \textit{NecessarySet}(G',G)$ do:
\begin{enumerate}
\item[ ] if  $\textit{IntervalCleaning}(G',G-s)$ return 'Yes', then return 'Yes'.
\end{enumerate}
\end{enumerate}
\item Otherwise, if $G$ is isomorphic to $H$ then return 'Yes'.
\item Return 'No'.
\end{enumerate}
\textbf{$\textit{NecessarySet}(G',G)$}
\begin{enumerate}
\item Set $N=\emptyset$.
\item Call  $\mathcal{A}(G',G)$. \\
If it returns a reduced input  $(H',H)$, then $N:= \textit{NecessarySet}(H',H)$. \\
If it branches, then
\begin{enumerate}
\item[ ] for each necessary set $X$ returned in a branch:
\begin{enumerate}
\item[ ]  set $N:=N \cup X$,
\end{enumerate}
\item[ ] for each independent subproblem $(H',H)$ returned  in a branch: 
\begin{enumerate}
\item[ ] set $N:=N \cup \textit{NecessarySet}(H',H)$. 
\end{enumerate}
\end{enumerate}
\item Return $N$.
\end{enumerate}
}}
\end{center}
\caption{Outline of algorithms $\textit{IntervalCleaning}$ and $\textit{NecessarySet}$.}
\label{algo}
\end{figure}

Let us show how such an algorithm can be used as a sub-procedure in 
order to solve the \textsc{Cleaning}(\textsl{Interval, Interval}) problem.
(See Fig.~\ref{algo} for an outline of the algorithm.) 
First, we construct an algorithm called $\textit{NecessarySet}$ that
given an instance $(G',G)$ of \textsc{Cleaning}(\textsl{Interval, Interval}) finds 
a necessary set for $(G',G)$ in quadratic time.
$\textit{NecessarySet}$ works by running $\mathcal{A}$ repeatedly, starting with the given input.
In the case when $\mathcal{A}$ returns a reduced input, $\textit{NecessarySet}$ 
runs $\mathcal{A}$ with this reduced input again. 
In the case when $\mathcal{A}$ branches, returning a necessary set or an independent subproblem
in each branch, $\textit{NecessarySet}$ runs $\mathcal{A}$
on each independent subproblem produced in any of the branches. 
Applying this method iteratively (and thus possibly branching again), 
we will get a necessary set at the end of each branch.
Note that the parameter of the input decreases whenever a branching happens, 
and thus the corresponding search tree has at most $f_1(k)^k$ leaves. 
Since at least one of the branches is correct, by taking the union of the necessary sets 
produced in the leaves of the search tree, 
we get a necessary set of size $f_2(k)=(2k+1)f_1(k)^k$ for $(G',G)$.
As each run of $\mathcal{A}$ takes linear time, and the number of calls of $\mathcal{A}$
is also linear in a single chain of branches, the whole algorithm takes quadratic time. 

Now, we can solve \textsc{Cleaning}(\textsl{Interval, Interval}) by  using
$\textit{NecessarySet}$. 
First, given an input $(G',G)$, we run $\textit{NecessarySet}$ on it. 
We branch on choosing a vertex $s$ from the produced output to put into the solution,  
and repeat the whole procedure with input $(G',G-s)$.
This means a total of $f_2(k)=(2k+1)f_1(k)^k$ new inputs to proceed with.
We have to repeat this at most $k$ times, so 
the whole algorithm has running time $O(f_2(k)^k |I|^2)$, 
where $|I|$ is the size of the original input of the problem.
We can state this in the following theorem:

\begin{theorem}
\label{FPT}
\textsc{Cleaning}(\textsl{Interval, Interval}) on input $(G',G)$ can be  solved in
time $O(f(k)n^2)$ for some function $f$, where $|V(G')|=n$ and $|V(G)|=n+k$.
\end{theorem}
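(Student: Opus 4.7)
The proof is essentially a book-keeping argument that combines the branching behaviour of the promised algorithm $\mathcal{A}$ (whose construction is the main technical content, deferred to Sections~\ref{reductions} and~\ref{qq_case}) into a full FPT procedure. I would assume the specification of $\mathcal{A}$ as stated: on input $(G',G)$ with parameter $k \geq 1$, it runs in linear time and either produces a reduced input, or branches into at most $f_1(k)=k^{O(k)}$ alternatives, each alternative outputting either a necessary set of size at most $2k+1$ or an independent subproblem of parameter strictly between $0$ and $k$. The real work thus lies in proving that the outer wrappers $\textit{NecessarySet}$ and $\textit{IntervalCleaning}$ of Fig.~\ref{algo} compute a correct answer within the claimed time bound.

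I would first argue correctness of $\textit{NecessarySet}$ by induction on $k$. In the base case $k=0$ the problem reduces to graph isomorphism on equivalent labeled PQ-trees by Theorem~\ref{LPQ_tree} and any single vertex works; otherwise I run $\mathcal{A}$. If $\mathcal{A}$ returns a reduced input $(H',H)$, correctness follows from the defining property of reduced inputs: any necessary set of $(H',H)$ is a necessary set of $(G',G)$. If $\mathcal{A}$ branches, I invoke the inductive hypothesis on every independent subproblem (which has smaller parameter, so the recursion is well-founded); by definition of ``branches'' at least one of the alternatives is correct, and by the defining property of independent subproblems and the observation that a necessary set of a correct subproblem is a necessary set of $(G',G)$, the union of all outputs across all branches is itself a necessary set for $(G',G)$. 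Combining this with the necessary sets returned directly in some branches gives the desired output.

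For the complexity analysis of $\textit{NecessarySet}$, I would model the recursion as a search tree: internal nodes correspond to branching steps of $\mathcal{A}$, each having at most $f_1(k)$ children, and each branch strictly decreases the parameter. Hence the tree has depth at most $k$ and at most $f_1(k)^k$ leaves. Between two branchings, the repeated reduced-input calls of $\mathcal{A}$ only shrink $|V(H')|+|V(H)|$, so at most $O(n)$ such calls occur along any root-to-leaf path, each taking linear time. Summing the size bound $2k+1$ over all leaves yields a necessary set of size $f_2(k)=(2k+1)f_1(k)^k$ and a total running time of $O(f_1(k)^k \cdot n^2)$ for one call.

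Finally, for $\textit{IntervalCleaning}$ I would observe that at the top level we branch on inserting some vertex $s$ from the necessary set returned by $\textit{NecessarySet}$ into the solution and recurse on $(G',G-s)$, which has parameter $k-1$. This yields a search tree of depth $k$ and branching factor $f_2(k)$, hence at most $f_2(k)^k$ leaves; at each leaf we test isomorphism via Theorem~\ref{LPQ_tree} in polynomial time. Multiplying by the cost of $\textit{NecessarySet}$ at each node of this outer tree gives the claimed $O(f(k) n^2)$ bound for $f(k)=f_2(k)^k \cdot \mathrm{poly}(k)$. The only genuine obstacle, as noted, is exhibiting $\mathcal{A}$ itself; once that is in hand everything above is routine verification.
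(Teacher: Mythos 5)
Your proposal is correct and follows essentially the same route as the paper: assume the specification of $\mathcal{A}$, prove $\textit{NecessarySet}$ correct by induction on the parameter, bound the inner search tree by $f_1(k)^k$ leaves to get a necessary set of size $f_2(k)=(2k+1)f_1(k)^k$ in quadratic time, and then wrap this in an outer depth-$k$ branching to obtain the $O(f_2(k)^k n^2)$ bound. The only nit is your base case: for $k=0$ an arbitrary single vertex is \emph{not} a necessary set (deleting anything makes the instance unsolvable), but this is harmless since independent subproblems have parameter at least $1$ and $\textit{IntervalCleaning}$ handles $k=0$ by a direct isomorphism test, so $\textit{NecessarySet}$ is never invoked there.
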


\subsection{Some structural observations}
\label{structure}

A nonempty set $M \subseteq V(G)$ is a \emph{module} of $G$, if 
for every $x \in V(G) \setminus M$, $N_G(x)$ either includes $M$ or is disjoint from $M$.
A module $M$ in $G$ is \emph{complete}, if $G[M]$ is connected and there is no vertex in
$x \in N_G(M)$ such that $N_G(x) \subseteq N_G[M]$.
Lemma~\ref{module_char} gives a characterization of the complete modules of an interval graph.
For an illustration, see Fig.~\ref{fig_pq_tree}. Note that $\{a_1\}$ and $\{a_2,a_3\}$ are modules of $G$
that are not complete. The sets $\{a_1,a_2,a_3\}$, $\{b_1,b_2\}$ and
$\{c_1, c_2,c_3,c_4,d_1,d_2,d_3,d_4,e_1,e_2\}$ are examples of complete module characterized by 
(a) of Lemma~\ref{module_char}, and the set $\{e_1,e_2\}$ illustrates the complete modules characterized by 
(b) of Lemma~\ref{module_char}.

\begin{mylemma}
\label{module_char} Given an interval graph $G$ and a labeled PQ-tree $T$ representing $G$, 
some set $M \subseteq V(G)$ is a complete module of $G$,
if and only if one of the following statements holds:  \\
(a) $M=R^{-1}(T_z)$ for some $z \in V(T)$, and if $z$ is a P-node then $R^{-1}(z) \neq \emptyset$ \\
(b) $M=L_q(a,b) $ for some Q-node $q \in V(T)$ having children $x_1,\dots, x_m$ 
and some pair $(a,b)$ with $a<b$, such that $R^{-1}(T_{x_i}) = \emptyset$ for each  $i$ contained in $[a,b]$,
and $L_q(a',b')=\emptyset$ for each $[a',b']$ properly contained in $[a,b]$.
\end{mylemma}

\begin{proof}
First, let $M$ be a complete module in $G$. 
Let us choose a vertex $v \in M$ such that $R(v)$ is the closest possible 
to the root of $T$. Since $G[M]$ is connected, $v$ is unique, 
and we also get $R^{-1}(T_{R(v)}) \supseteq M$.
First, suppose that $R(v)$ is a P-node or a leaf. 
Then $v$ is contained in each clique of $F(T_{R(v)})$.
Thus, if $R(x)$ is in $T_{R(v)}$ for some vertex $x$, then $N_G(x) \subseteq N_G(v) \subseteq N_G[M]$.
By the completeness of $M$, we get $x \in M$. Hence, $R^{-1}(T_{R(v)}) \subseteq M$
implying $R^{-1}(T_{R(v)}) = M$. Therefore, (a) holds in this case.

Now, suppose that $R(v)$ is a Q-node $q$ with children $x_1, \dots, x_m$, and let 
$M_q = M \cap R^{-1}(q)$. Let $a = \min \{ Q_q^{\mathrm{left}}(w) \mid w \in M_q\}$ and 
$b = \max \{ Q_q^{\mathrm{right}}(w) \mid w \in M_q\}$
Using the completeness of $M$, we can argue again that 
$R^{-1}(T_{x_h}) \subseteq M$ for each $h$ contained in $[a,b]$ and that $w \in M$ 
holds for each $w \in R^{-1}(q)$ such that $Q_q(w)$ is contained in $[a,b]$.
Thus, if $[a,b]=[1,m]$ then $M=R^{-1}(T_q)$, implying that (a) holds. 
Otherwise, as $q$ is a Q-node, there must exist a  vertex $u \in R^{-1}(q) \setminus M$ 
such that $Q_q(u)$ properly intersects $[a,b]$.
As $u$ must be adjacent to each vertex of $M$ (as $M$ is a module), we get that 
$R^{-1}(T_{x_h}) = \emptyset$ for every $h$ in $[a,b]$ that is not contained in $Q_q(u)$. 
In particular, we get that either $R^{-1}(T_{x_a})=\emptyset$ or $R^{-1}(T_{x_b})=\emptyset$.
We can assume w.l.o.g. that $R^{-1}(T_{x_a})=\emptyset$ holds.
Thus, $M_q^-(a) \neq \emptyset$, and since $M_q^-(a) \cap M \neq \emptyset$, 
using again that $M$ is a module, we obtain that each $w \in M_q$ must start in $a$ 
and also that $R^{-1}(T_{x_h}) = \emptyset$ for every $h$ in $[a,b]$.
Note that this implies $M_q=M$.
Now, from $R^{-1}(T_{x_b})=\emptyset$ we get in a similar way that each $w \in M$ must end in $b$, 
proving $Q_q(w)=[a,b]$ for every $v,w \in M$. Now, using the completeness of $M$ 
and putting together these facts, we get that the conditions of (b) must hold.

For the other direction, it is easy to see that if (a) holds for some $M$, 
then $M$ indeed must be a complete module of $G$.
Second, if $M=L_q(a,b)$ for some $q$ and $[a,b]$,  then $M$ is clearly a module, 
and the remaining conditions of (b) ensure that $M$ is complete.
\qed
\end{proof}

We will say that a complete module $M$ is \emph{simple}, if the conditions in (b) hold for $M$.
Clearly, $N_G(M)$ is a clique if and only if $M$ is not simple, 
and if $M$ is simple then $G[M]$ is a clique. 
In Fig.~\ref{fig_pq_tree}, $\{e_1,e_2\}$ is a simple complete module.

For a graph $H$, some set $M \subseteq V(G)$ is \emph{an occurrence of $H$ in $G$ as a complete module}, 
if $M$ is a complete module for which $G[M]$ is isomorphic to $H$.
Let $\mathcal{M}(H,G)$ be the set of the occurrences of $H$ in $G$ as a complete module.
Using that each element of $\mathcal{M}(H,G)$ is a subset of $V(G)$ having size $|V(H)|$, 
we obtain the following consequence of Lemma \ref{module_char}. 

\begin{myprop}
\label{occurrences}
For a graph $H$, the elements $\mathcal{M}(H,G)$ are pairwise disjoint. 
\end{myprop}

Moreover, if the graph $H$ is not a clique, then none of the occurrences of $H$ in 
$G$ as a complete module can be simple, so each set in $\mathcal{M}(H,G)$ must be 
of the form $R^{-1}(T_z)$ for some non-leaf node $z$ of $T$. 
This yields that the sets in $\mathcal{M}(H,G)$ are independent 
(where two vertex sets in a graph are independent if there is no edge between them). 
Lemma~\ref{module_indep} below states some observations about what happens to a set of 
disjoint and independent complete modules in a graph after adding or deleting a vertex. 

\begin{mylemma}
\label{module_indep} 
Suppose that $s \in V(G)$. \\
(1) If $M_1, \dots, M_\ell$ are disjoint independent complete modules in $G-s$,
then $M_i$ is a complete module in $G$ for at least $\ell-4$ indices $i \in [\ell]$. \\
(2) If $M_1, \dots, M_\ell$  are disjoint independent complete modules in $G$,
then $M_i$ is a complete module in $G-s$ for at least $\ell-4$ indices $i \in [\ell]$.
\end{mylemma}

\begin{figure}[t]
\begin{center}
    \psfrag{s1}[][]{$s_1$}
    \psfrag{s2}[][]{$s_2$}
    \psfrag{s3}[][]{$s_3$}
    \psfrag{s4}[][]{$s_4$}
    \psfrag{s5}[][]{$s_5$}
    \psfrag{s6}[][]{$s_6$}
    \psfrag{s7}[][]{$s_7$}
    \psfrag{s8}[][]{$s_8$}
    \psfrag{Mi}[][]{$M_i$}
    \psfrag{Mi-1}[][]{$M_{i-1}$}
    \psfrag{Mi+1}[][]{$M_{i+1}$}
\includegraphics[scale=0.4]{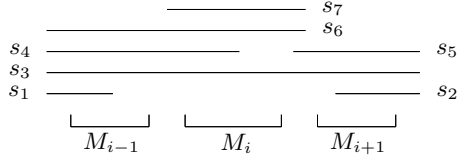}
\end{center}
\caption{
$M_{i-1}$, $M_i$, and $M_{i+1}$ illustrate complete modules of $G-S$. The set
$M_i$ is untouched by $s_1$, $s_2$, and $s_3$, but this is not true for any vertex $s_j$, $j \geq 4$.
}
\label{fig_modules}
\end{figure}

\begin{proof}
As $M_i$ and $M_j$ are independent if $i \neq j$, we can
assume that $M_1^{\mathrm{left}} \leq M_1^{\mathrm{right}} < \dots 
< M_{\ell}^{\mathrm{left}} \leq M_{\ell}^{\mathrm{right}}$.
Recall that each $M_i$ is connected by the definition of a complete module. 
We say that $M_i$ is \emph{untouched} (by $s$), if either 
$s^{\mathrm{left}} \leq M_{i-1}^{\mathrm{right}}$ and  $s^{\mathrm{right}} \geq M_{i+1}^{\mathrm{left}}$, 
or $s^{\mathrm{right}} < M_{i-1}^{\mathrm{right}}$, or  $s^{\mathrm{left}} > M_{i+1}^{\mathrm{left}}$.
(See also Fig.~\ref{fig_modules}.)
If $M_a$ and $M_b$ are the first and the last one, respectively, among the sets 
$M_1, \dots, M_\ell$ that have a vertex adjacent to $s$, then each $M_i$ except for 
$M_{a-1}$, $M_{a}$, $M_{b}$, and $M_{b+1}$  must be untouched by $s$.

To see (1), we show that if a complete module of $G-s$ is untouched, 
then it is a complete module of $G$.
So assume that $M_i$ is untouched. Clearly, $s \notin M_i$.
Since either $N_G(s) \supseteq M_i$ or $N_G(s) \cap M_i=\emptyset$,
$M_i$ remains to be a module in $G$.
Also, if $s \in N_G(M_i)$, then $s$ must have a neighbor in $M_{i-1}$ and $M_{i+1}$. 
Thus, $N_G(s) \not\subseteq N_G[M_i]$, so the completeness of $M_i$ in $G-s$ implies 
its completeness in $G$ as well.

To prove (2), suppose that $M_i$ is an untouched complete module in $G$.
Clearly, $M_i$ is a module in $G-s$ as well, and since $s \notin M_i$, 
$M_i$ remains connected in $G-s$. Let $x$ be a vertex in $N_G(M_i)$. 
By the completeness of $M_i$, $x$ is adjacent to some vertex $y \notin N_G[M_i]$.
Suppose that $x$ doesn't have a neighbor outside $N_{G-s}[M_i]$ in $G-s$.
This can only happen if $y = s$. Now, since $y \notin N_G[M_i]$ and $M_i$ is untouched by $s$, 
$x$ must also be adjacent to a vertex of $M_{i-1}$ or $M_{i+1}$.  
Thus, $x$ has a neighbor in $V(G-s) \setminus N_{G-s}[M_i]$, proving the completeness of $M_i$.
As $M_i$ is untouched for at least $\ell-4$ indices $i \in [\ell]$, the statement follows.
\qed
\end{proof}

In the case when $H$ is a clique and $K$ is an occurrence of $H$ in $G$
as a complete module, we get that 
either $K = R^{-1}(\ell)$ for some leaf $ \ell \in V(T)$, or $K$ is simple, i.e.
$K =L_q(a,b)$ for some Q-node $q \in V(T)$ and some block $[a,b]$.
In the latter case, Lemma~\ref{h-short_module} states a useful observation 
about the block $[a,b]$. This lemma uses the following definition:
we say that a complete module $K$ of $G$ is \emph{$h$-short}, if either 
$K = R^{-1}(\ell)$ for some leaf $\ell \in V(T)$, or $K = L_q(a,b)$ 
for some Q-node $q \in V(T)$ and some block $[a,b]$ with $b-a \leq h$. 
The sets $\{e_1,e_2\}$ and $\{b_1,b_2\}$ are $2$-short complete modules 
of $G$ in Fig.~\ref{fig_pq_tree}.

\begin{mylemma}
\label{h-short_module}
If $K$ is a complete module in $G$ such that $G[K]$ is a clique but $K$ is not $h$-short, 
then $|N_G(K)| \geq 2(h+1)$.
\end{mylemma}

\begin{proof}
By the conditions of the lemma, we know that $K =L_q(a,b)$ for some Q-node $q \in V(T)$ 
with children $x_1, \dots, x_m$ and some block $[a,b]$ such that $b-a \geq h+1$.
By the completeness of $K$, we get that $R^{-1}(T_{x_h})=\emptyset$ for any $h$ contained in $[a,b]$,
so $M^+(h)$ and $M^-(h)$ cannot be empty.
Taking these sets for all $h$ in $[a,b]$, with the exception of the sets $M^+(a)$ and $M^-(b)$, 
we get $2(b-a) \geq 2(h+1)$ nonempty sets that are pairwise disjoint, 
each containing some vertex of $N_G(K)$. This implies the bound $N_G(K) \geq 2(h+1)$.
\qed
\end{proof}

Observe that if two different $h$-short complete modules $K_1$ and $K_2$ in $G$ are not independent, 
then $K_1 = L_q(a,b)$ and $K_2 = L_q(c,d)$ must hold for some Q-node $q$ in $T$ 
and some blocks $[a,b]$ and $[c,d]$ that properly intersect each other.
Now, if $b-a \leq h$, then there can be at most $2h$ such blocks $[c,d]$ for which these conditions hold. 
This implies that given a $h$-short complete module $K$, there can be at most $2h$ different
$h$-short complete modules of $G$ neighboring $K$ (but not equal to $K$).
It is also easy to see that the maximum number of pairwise neighboring $h$-short complete modules 
in a graph is at most $h+1$. Making use of these facts, Lemma~\ref{module_clique} 
states some results about $h$-short complete modules of a graph 
in a similar fashion as Lemma~\ref{module_indep}.
As opposed to Lemma~\ref{module_indep}, here we do not require the 
complete modules to be independent.

\begin{mylemma}
\label{module_clique}
Suppose that $s \in V(G)$. \\
(1) If $M_1, \dots, M_\ell$ are disjoint $h$-short complete modules in $G-s$,
then $M_i$ is a complete module in $G$ for at least $\ell-(3h+5)$ indices $i \in [\ell]$. \\
(2) If $M_1, \dots, M_\ell$  are disjoint $h$-short complete modules in $G$,
then $M_i$ is a complete module in $G-s$ for at least $\ell-(4h+3)$ indices $i \in [\ell]$.
\end{mylemma}

\begin{proof} 
The proof relies on the observation that there are only a few indices $i$ such that 
$[M_i^{\mathrm{left}},M_i^{\mathrm{right}}]$ contains $s^{\mathrm{left}}$ or $s^{\mathrm{right}}$.

To see (1), suppose that $M_i$ is not a $h$-short complete module in $G$ for some $i$. 
Clearly, $G[M_i]$ is connected. First, assume that $M_i$ is not a module 
because there are some $x,y \in M_i$ such that $s$ is adjacent to $x$ but not to $y$.
In this case, either $x^{\mathrm{left}}< s^{\mathrm{right}} <y^{\mathrm{left}}$ 
or $y^{\mathrm{right}}<s^{\mathrm{left}} <x^{\mathrm{right}}$. It is not hard to see that 
this implies that there can be at most two such modules $M_i$. 
Now, assume that $M_i$ is a module, but is not complete. 
This implies that $M_i \subseteq N_G(s) \subseteq N_G[M_i]$ is true. 
Note that if $j \neq i$ then $M_j \subseteq N_G(s) \subseteq N_G[M_j]$ is only possible
if $M_i$ and $M_j$ are neighboring. Thus, there can be at most $h+1$ such indices $i$.

Finally, if $M_i$ is complete module in $G$ but it is not $h$-short, then 
the number of maximal cliques containing the vertices of $M_i$ must be more in $G$ than in $G-s$, 
implying that either $M_i^{\mathrm{left}} < s^{\mathrm{left}} \leq M_i^{\mathrm{right}}$ 
or $M_i^{\mathrm{left}} \leq  s^{\mathrm{right}} \leq M_i^{\mathrm{right}}$.
As $M_i^{\mathrm{left}} < s^{\mathrm{left}} \leq M_i^{\mathrm{right}}$ 
and $M_j^{\mathrm{left}} < s^{\mathrm{left}} \leq M_j^{\mathrm{right}}$ can only 
hold simultaneously if $M_i$ and $M_j$ are neighboring, 
and such a statement is also true for the latter condition, 
we get that there can be at most $2(h+1)$ indices $i$ for which $M_i$ is $h$-short in $G-s$ but not in $G$.
Summing up these facts, we obtain that there can be at most $2+(h+1)+2(h+1) = 3h+5$ indices $i$
for which $M_i$ is not a $h$-short complete module in $G$.

To prove (2), notice that each $M_i$ remains a module in $G-s$ as well.
Observe also that if $s \notin M_i$, then $M_i$ remains connected in $G-s$.
By the disjointness of the sets $M_1, \dots, M_\ell$, each of them is connected in $G-s$ except for at most one.
Suppose that $M_{i_1}$, $M_{i_2}$, and $M_{i_3}$ are independent, and for each $j \in \{1,2,3\}$, 
$M_{i_j}$ is a connected module in $G-s$ but it is not complete. 
This means that there are vertices $x_1$, $x_2$, and $x_3$
such that $x_j \in N_G(M_{i_j})$, but $N_G(x_j) \subseteq  N_G[M_{i_j}]$ for each $j$.
By the completeness of these modules in $G$, this implies that each of 
$x_1$, $x_2$, and $x_3$ are adjacent to $s$, and $s \notin N_G[M_{i_j}]$ for any $j$.
But this can only hold if some $x_j$ 
is adjacent to each vertex of $M_{i_{j'}}$ for some $j \neq j'$, 
and since $M_{i_{j}}$ and $M_{i_{j'}}$ are independent, this
contradicts the assumption that $N_G(x_j) \subseteq  N_G[M_{i_j}]$.
Thus, there cannot exist such indices $i_1$, $i_2$ and $i_3$, implying that 
we can fix two indices $j$ and $j'$ such that for any $M_i$ that is a 
connected module in $G-s$ but not complete,  $M_i$ is neighboring either $M_j$ or $M_{j'}$, 
implying that there can be at most $2(2h)+2$ such indices $i$.
To finish, observe that if $M_i$ is a complete module in $G-s$, then it must be $h$-short, 
as the deletion of $s$ cannot increase the number of maximal cliques that contain $M_i$.
\qed
\end{proof}

\subsection{Reduction rules}
\label{reductions}
In this section, we introduce some reduction rules, each of which can be applied in linear time,
and provides a necessary set, an independent subproblem, or a reduced input, as described earlier.
Our aim is to handle all cases except for the case when both $G$ and $G'$ have 
a PQ-tree with a Q-node root. We always apply the first possible reduction.
From now on, we assume that $S$ is a solution for $(G',G)$ and
$\phi_S$ is an isomorphism from $G'$ to $G-S$.


{\bf Rule 1. Isomorphic components.}
Lemma \ref{iso_comp} yields a simple reduction: if $G$ and $G'$ have isomorphic components, then 
algorithm $\mathcal{A}$ can output a \outputstyle{reduced input} of $(G',G)$.
Note that partitioning a set of interval graphs into isomorphism equivalence classes can be done in linear time
\cite{booth-lueker-interval-isomorphism} (see also \cite{dinitz-itai-rodeh-subtree-iso,zemlyachenko-canonical,zemlyachenko-tree-iso}).
Hence, this reduction can also be performed in linear time.

\begin{mylemma}
\label{iso_comp}
If $K$ and $K'$ are connected components of $G$ and $G'$, respectively, 
and $K$ is isomorphic to $K'$,
then $(G'-K',G-K)$ is a reduced input of $(G',G)$.
\end{mylemma}

\begin{proof}
Trivially, $G'-K'$ has fewer vertices than $G'$, and any solution for $(G'-K',G-K)$
is a solution for $(G',G)$ as well, by the isomorphism of $K'$ and $K$.
Therefore, we only have to prove that if $(G',G)$ is solvable then $(G'-K',G-K)$ is also solvable.
Clearly, if $S \cap V(K) = \emptyset$, then we can assume w.l.o.g. 
that $\phi_S(K')=K$. In this case, $S$ is a solution for $(G'-K',G-K)$.

On the other hand, if $S \cap V(K) \neq \emptyset$ then 
$K$ and $\phi_S(K')$ are disjoint. Moreover, $K$ and $\phi_S(K')$ are disjoint 
isomorphic connected components of $G-S_0$ where $S_0=S \setminus V(K)$. 
Let $\kappa$ be an isomorphism from $K$ to $\phi_S(K')$.
Notice that the role of $K$ and $\phi_S(K')$ can be interchanged, 
and we can replace $S \cap V(K)$ with $\kappa(S \cap V(K))$ in the solution.
Thus, $S_0 \cup \kappa(S \cap V(K))$ is a solution for $(G',G)$ that is disjoint from $K$.
Since this yields a solution for $(G'-K',G-K)$ as well, this finishes the proof.
\qed 
\end{proof}


{\bf Rule 2. Many components in $G'$.} 
This reduction is possible in the case when $G'$ has at least $4k+1$ components.
Since Rule 1 cannot be applied, none of the components of $G$ is isomorphic to a component of $G'$.
Our aim is to locate $\phi_S(K')$ in $G$ for one of the components $K'$ of $G'$. 
If we find $\phi_S(K')$ then we know that $N_G(\phi_S(K'))$ must be contained in $S$, 
so we can produce a \outputstyle{necessary set} of size $1$ by outputting any of the vertices of $N_G(\phi_S(K'))$.

Given a graph $H$, recall that $\mathcal{M}(H,G)$ denotes the occurrences of $H$ in $G$ as a complete module. 
By Prop. \ref{occurrences}, the elements of $\mathcal{M}(H,G)$ are disjoint subsets of $V(G)$.
We can find $\mathcal{M}(H,G)$ in linear time, using the labeled PQ-tree of $G$
and the characterization of Lemma \ref{module_char}.

Relying on Lemmas \ref{module_indep} and \ref{module_clique}, the algorithm performs the following reduction.
Suppose that $K'_1, K'_2, \dots, K'_{k'}$ are the $k'=4k+1$ largest connected components of $G'$, 
ordered decreasingly by their size, and let $S$ be a solution for $(G',G)$.
As the vertex sets of the connected components of $G'$ are complete modules of $G'$,
the sets $K_i=\phi_S(V(K'_i))$ for $i \in [k']$ are complete modules of $G-S$. 
By definition, these sets are also disjoint and independent.
As a consequence of (1) in Lemma \ref{module_indep}, we get that for at least $k'-4k = 1$ indices $i \in [k']$ 
the set $K_i$ will be a complete module of $G$.
We branch on the choice of $i$ to find such a set $K_i$, resulting in at most $k'$ possibilities. 
Observe that w.l.o.g. we can assume that the subgraph  $G[K_i]$ is the first one 
(according to the given representation of $G$) among the components of $G-S$ isomorphic to $K'_i$.

It remains to describe how we can find $K_i$ in $G$. 
To begin, we suppose now that $K'_i$ is not a clique.
Let us discuss a simplified case first, where we assume that $K'_i$ is not 
contained as an induced subgraph in any of the components $K'_j$ if $j \neq i$.  
Let $\mathcal{M}(K'_i,G)=\{A_1, A_2, \dots \}$, where the sets in $\mathcal{M}(K'_i,G)$ 
are ordered according to their order in the interval representation of $G$.
Let $i^*$ denote the index for which $A_{i^*}$ is the first
element in $\mathcal{M}(K'_i,G)$ that is a complete module in $G-S$ as well.
Since $K'_i$ is not contained in a component of $G'$ having more vertices than $|V(K'_i)|$, 
$G[A_{i^*}]$ must be a connected component of $G-S$. Also, $G[A_{i^*}]$ is isomorphic to $K'_i$, 
and by the definition of $A_{i^*}$, it must be the first such component of $G-S$. 
Thus, we can conclude that $A_{i^*}$ equals $K_i$.
By (2) of Lemma~\ref{module_indep}, there can be at most $4k$ sets in  $\mathcal{M}(K'_i,G)$
that are not complete modules in $G-S$, so we get that $i^* \leq 4k+1=k'$. 
Hence, we can find $K_i$ by guessing $i^*$ and branching into $k'$ directions.

\begin{figure}[t]
\begin{center}
    \psfrag{A}[][]{$A$}
    \psfrag{B}[][]{$B$}
    \psfrag{C}[][]{$C$}
    \psfrag{D}[][]{$D$}
    \psfrag{E}[][]{$K'_i$}
    \psfrag{F}[][]{$F$}
    \psfrag{fA}[][]{$\phi_S(A)$}
    \psfrag{fB}[][]{$\phi_S(B)$}
    \psfrag{fC}[][]{$\phi_S(C)$}
    \psfrag{fD}[][]{$\phi_S(D)$}
    \psfrag{fE}[][]{$K_i$}
    \psfrag{fF}[][]{$\phi_S(F)$}
    \psfrag{G}[][]{$G$}
    \psfrag{G'}[][]{$G'$}
\includegraphics[scale=0.4]{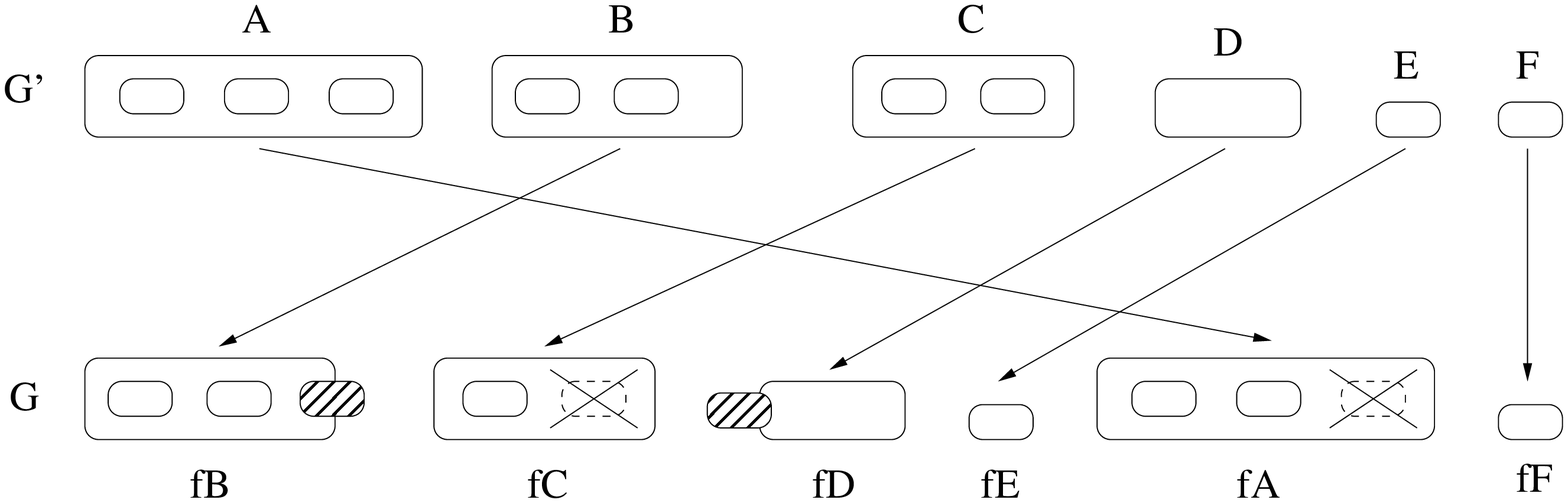}
\end{center}
\caption{
An illustration of Rule 2. 
In this example, the small rectangles denote elements of $\mathcal{M}(K'_i,G')$ and $\mathcal{M}(K'_i,G)$.
Rectangles with a skew pattern are elements of $\mathcal{M}(K'_i,G)$ that are not complete modules of $G-S$.
Crossed rectangles with a dashed border indicate if some set $\phi_S(X)$ is not a complete module of $G$ 
for some $X \in \mathcal{M}(K'_i,G')$. 
In this example, $i=5$, $\delta(1)=0, \delta(2)=\delta(3)=\delta(4)=1$, $|\mathcal{M}'|=4$ and $i^*=6$.
}
\label{fig_components}
\end{figure}

Let us consider now the general case, where some of the components $K'_j$ can 
contain $K'_i$ in $G'$. (We still suppose that $K_i$ is not a clique.) 
For each $j<i$, we define an indicator variable $\delta(j)$
which has value 1 if and only if $K_j$ precedes $K_i$ in $G-S$. 
We guess $\delta(j)$ for each $j \in [i-1]$, which means at most $2^{k'-1}$ possibilities.

Again, let $\mathcal{M}(K'_i,G)=\{A_1, A_2, \dots \}$, where the sets $A_h$
are ordered according to their order in the interval representation of $G$.Let $i^*$ denote the index for which $K_i=A_{i^*}$, and 
let $\mathcal{M}'$ stand for $\bigcup_{j<i, \delta(j)=1} \mathcal{M}(K'_i,K'_j)$,
which is a collection of subsets of $V(G')$, each inducing a subgraph of $G'$ isomorphic to $K'_i$.
As $K_i$ is not a clique, the elements of $\mathcal{M}'$ in $G'$ are disjoint and independent,
so by (1) of Lemma~\ref{module_indep} we get that for at least $|\mathcal{M}'|-4k$ 
sets $A \in \mathcal{M}'$, the set $\phi_S(A)$ will be a complete module of $G$ as well.
As all these sets precede $K_i$ in $G$, we get that $\phi_S(A) \in \{A_1, \dots, A_{i^*-1}\}$ 
holds for at least $|\mathcal{M}'|-4k$ sets $A \in \mathcal{M}'$.
From this,  $i^*-1 \geq |\mathcal{M}'|-4k$ follows.

On the other hand, for all those sets $A \in \{ A_1, \dots, A_{i^*-1}\}$
which are complete modules in $G-S$ as well, 
$\phi_S^{-1}(A)$ must be contained in a component of $G'$ which is larger than $K'_i$.
Here we used again the assumption that $G[K_i]$ is the first one among 
the components of $G-S$ isomorphic to $K_i$. 
Since such an $A$ precedes $K_i$, we obtain $\phi_S^{-1}(A) \in \mathcal{M}'$.
By (2) of Lemma~\ref{module_indep}, there can be at most $4k$ sets among $A_1, \dots, A_{i^*-1}$ 
that are not complete modules in $G-S$, so we get that $\phi_S^{-1}(A) \in \mathcal{M}'$
for at least $i^*-1-4k$ sets $A$ in $\{A_1, \dots, A_{i^*-1}\}$.
This implies $i^*-1 \leq |\mathcal{M}'|+4k$. Altogether, we get the bounds 
$|\mathcal{M}'|-4k+1 \leq i^* \leq |\mathcal{M}'|+4k+1$.
Since $|\mathcal{M}'|$  can be determined in linear time, by branching on the at most $8k+1$
possibilities to choose $i^*$, we can find the vertex set $K_i$.

Now, we suppose that $K'_i$ is a clique. As $K_i$ is a component of $G-S$, 
$|N_G(K_i)| \leq k$, which by Lemma~\ref{h-short_module} implies that $K_i$ must be $k/2$-short.
Using Lemma~\ref{module_clique}, we can find $K_i$ in a similar manner to the previous case.
We denote by $\mathcal{N}(H,G)$ the occurrences of a graph $H$ in $G$ as a $k/2$-short complete module.
Analogously to the previous case, let $\mathcal{N}(K'_i,G)=\{B_1, B_2, \dots \}$, 
where the sets in $\mathcal{N}(K'_i,G)$ are ordered according to their 
order in the fixed representation of $G$. 
We also let $K_i=B_{i^*}$ and $\mathcal{N}'=\bigcup_{j<i, \delta(j)=1} \mathcal{N}(K'_i,K'_j)$.
Now, using Lemma~\ref{module_clique} just as in the reasoning above, we get
the bounds $|\mathcal{N}'|-k(3k/2+5) \leq i^*-1 \leq |\mathcal{N}'|+k(2k+3)$.
%
Again, $|\mathcal{N}'|$  can be determined in linear time, so by branching on the at most $k(7k/2+8)+1$
possibilities to choose $i^*$, we can find the vertex set $K_i$.

Since Rule~1 cannot be applied, none of the components of $G$ can be isomorphic 
to a component of $G'$, hence $S_i=N_G(K_i)$ is not empty.
Clearly $S_i \subseteq S$, so we get that $\{s\}$ is a necessary set for any $s \in S_i$.
The total number of possible branches in this reduction is at most 
$(4k+1)2^{4k}(k(7k/2+8)+1) = 2^{O(k)}$.


{\bf Rule 3. Disconnected $G$.} Here we give a reduction for the case when $G$ is not connected,
but the previous reductions cannot be performed.
First, observe that each component of $G$ contains at least one vertex from $S$, as none of
them is isomorphic  to a component of $G'$.
Thus, if $G$ has more than $k$ components then there cannot exist a solution of size $k$,
so we can \outputstyle{reject}. Otherwise, let us fix an arbitrary component $K$ of $G$.
We branch on the choice of those components of $G'$ whose vertices are in $\phi_S^{-1}(K-S)$,
for some fixed solution $S$. Let the union of these components be $G'_K$.
Note that guessing $G'_K$ yields at most $2^{4k}$ possibilities, since $G'$ has at most $4k$ components.
By our assumptions, $1 \leq k' <k$ holds for the parameter $k'$ of the instance $(G'_K,K)$,
so $(G'_K,K)$ is clearly an \outputstyle{independent subproblem} of $(G',G)$.


{\bf Rule 4. Universal vertex in $G$.} A vertex $x$ is \emph{universal} in $G$, if $N_G(x)=V(G-x)$.
Such vertices imply a simple reduction by Lemma \ref{universal} which allows $\mathcal{A}$
to output either a \outputstyle{necessary set} of size 1 or a \outputstyle{reduced input} of $(G',G)$.

\begin{mylemma}
\label{universal}
Let $x$ be universal in $G$.
If there is no universal vertex in $G'$, then $\{x\}$ is a necessary set for $(G',G)$.
If $x'$ is universal in $G'$, then $(G'-x',G-x)$ is a reduced input of $(G',G)$.
\end{mylemma}

\begin{proof}
Clearly, if $x$ is universal in $G$ and $x \notin S$ for a solution $S$,
then it remains universal in $G-S$. Thus, if no vertex is universal in $G'$, then $x \in S$ must hold.

Suppose $x'$ is universal in $G'$, and $S$ is an arbitrary solution.
Clearly, if $x \notin S$, then $x$ and $y=\phi_S(x')$ are both universal in $G-S$,
so if $x\neq y$ then we can swap the role of $x$ and $y$ such that $\phi_S$ maps $x'$ to $x$.
Now, if $x \in S$ then $S'=S \cup \{y\} \setminus \{x\}$ is a solution in which the
isomorphism from $G'$ to $G-S'$ can map $x'$ to $x$. 
This implies that $(G'-x',G-x)$ is a reduced input of $(G',G)$.
\qed
\end{proof}


{\bf Rule 5. Disconnected $G'$.} Suppose that none of the previous 
reductions can be applied, and $G'$ is disconnected.
This means that $G$ must be connected, and $G'$ has at most $4k$ components.
Let $S$ be a solution. For each component $K'$ in $G'$, let 
$I(K')$ be the union of the intervals representing $\phi_S(K')$ in the fixed representation of $G$, 
i.e. let $I(K')=[\phi_S(V(K'))^{\mathrm{left}},\phi_S(V(K'))^{\mathrm{right}}]$.
Since the components of $G'$ are connected and independent, the intervals 
$I(K'_1)$ and $I(K'_2)$ are disjoint for two different components $K'_1$ and $K'_2$ of $G'$.

Let $Q$ be the component of $G'$ such that $I(Q)$ is the first among the 
intervals $\{ I(K') \mid K'$ is a component of $G'\}$.
Clearly, if $x^{\mathrm{right}} \leq \phi_S(V(Q))^{\mathrm{right}}$ for some vertex $x \in V(G)$,
then either $x \in S$ or $x \in \phi_S(Q)$,
thus the number $s_Q$ of such vertices is at least $|V(Q)|$ but at most $|V(Q)|+k$.
Therefore, we first guess $Q$, and then we guess the value of $s_Q$,
which yields at most $4k(k+1)$ possibilities. 
Now, ordering the vertices of $G$ such that $x$ precedes $y$ if $x^{\mathrm{right}}<y^{\mathrm{right}}$ and
putting the first $s_Q$ vertices in this ordering into a set $B$, 
we get $\phi_S(Q) \subseteq B \subseteq \phi_S(Q) \cup S$. Since $G$ is connected, there must exist an
edge $e=xy$ running between $B$ and $V(G) \setminus B$. 
Clearly, at least one endpoint of $e$ must be in $S$,
thus we can output the \outputstyle{necessary set} $\{x,y\}$.


{\bf Rule 6. Universal vertex in $G'$.} Suppose that some vertex $x'$ is universal in $G'$.
Let $a$ and $b$ be vertices of $G$ defined such that 
$a^{\mathrm{right}}=\min \{ x^{\mathrm{right}} \mid x \in V(G)\}$ and
$b^{\mathrm{left}}=\max \{ x^{\mathrm{left}} \mid x \in V(G)\}$. 
As there is no universal vertex in $G$, we know
that $x^{\mathrm{left}}>a^{\mathrm{right}}$ or $x^{\mathrm{right}}<b^{\mathrm{left}}$ 
for each $x \in V(G)$, i.e. no vertex in $G$ is adjacent to both $a$ and $b$. 
As $\phi_S(x')$ is universal in $G-S$ for any fixed solution $S$, 
we get that $\{a,b\}$ is a \outputstyle{necessary set}.


\section{The Q-Q case}
\label{qq_case}

From now on, we assume that none of the reductions given in Sect. \ref{reductions} can be applied.
Thus, $G$ and $G'$ are connected, and none of them contains universal vertices,
so in particular, none of them can be a clique. This implies that if $r$ and $r'$ is the root of $T$
and $T'$, respectively, then both $r$ and $r'$ are Q-nodes.
Let $m$ and $m'$ denote the number of the children of $r$ and $r'$, respectively. 
When indexing elements of $[m]$ and $[m']$, we will try to use $i$ and $j$, respectively,
whenever it makes sense.
Let $x_i$ and $x'_j$ denote the $i$-th and $j$-th child of $r$ and $r'$, respectively, and let
$X_i=R^{-1}(T_{x_i})$ and $X'_j=R^{-1}(T'_{x'_j})$, for all $i \in [m]$ and $j \in [m']$.

Let us call a solution $S$ \emph{local}, if there is an $i \in [m]$ such that
$S \supseteq V(G) \setminus N_G[X_i]$, i.e. $S$ contains every vertex of $G$ except for 
the closed neighborhood of some $X_i$.
Suppose that $S$ is a solution that is not local, and $\phi_S$ is an isomorphism 
from $G'$ to $G-S$. The following definitions try to give a bound on those indices $i$ 
in $[m]$ which somehow contribute to $\phi_S(X'_j)$ for some $j \in [m']$.
For an index $j \in [m']$, let:
\begin{eqnarray*}
\alpha_S(j) &=& \min \{ Q_{r}^{\mathrm{left}}(\phi_S(v)) \mid v \in X'_j \cup M^+_{r'}(j) \} \\
\beta_S(j) &=& \max \{ Q_{r}^{\mathrm{right}}(\phi_S(v)) \mid v \in X'_j \cup M^-_{r'}(j) \} .
\end{eqnarray*}
Observe that by
$$ 
\min \{ Q_{r}^{\mathrm{left}}(\phi_S(v)) \mid v \in X'_j \cup M^+_{r'}(j) \} 
\leq \min \{ Q_{r}^{\mathrm{left}}(\phi_S(v)) \mid v \in X'_j \} 
$$
$$
\leq \max \{ Q_{r}^{\mathrm{right}}(\phi_S(v)) \mid v \in X'_j \} \leq 
\max \{ Q_{r}^{\mathrm{right}}(\phi_S(v)) \mid v \in X'_j \cup M^-_{r'}(j) \}
$$
we obtain that $\alpha_S(j) \leq \beta_S(j)$ holds for any $j \in [m']$.
We let $I_S(j)$ to be the block $[\alpha_S(j),\beta_S(j)]$.

The following lemma summarizes some useful observations.
\begin{mylemma}
\label{I_S_prop} 
Suppose that $S$ is a solution for $(G',G)$ that is not local. 
Then either all of the following statements hold, 
or all of them hold after reversing the children of $r'$: \\
(1) $Q_{r'}^{\mathrm{dir_1}}(v) < Q_{r'}^{\mathrm{dir_2}}(w)$ for some $v,w \in V(G')$ and 
$\mathrm{dir_1},\mathrm{dir_2} \in \{\mathrm{left,right}\}$ implies that
$Q_{r}^{\mathrm{dir_1}}(\phi_S(v)) < Q_{r}^{\mathrm{dir_2}}(\phi_S(w))$ holds as well. \\
(2) For any $j_1 <j_2$ in $[m']$, the block $I_S(j_1)$ precedes $I_S(j_2)$. \\
(3) If $m=m'$, then $I_S(j)=[j,j]$ for each $j \in [m']$. \\
(4) If $i \in [m]$ then $X_i \setminus S$ is contained in $\phi_S(X'_j)$ for some $j \in [m']$. \\
(5) $\phi_S(R^{-1}(r')) \subseteq R^{-1}(r)$.
\end{mylemma}

\begin{proof}
Let the sets $A_1, \dots, A_s$ be disjoint subsets of some set $\{a_i | i \in [r]\}$. 
We say that the series  $a_1, \dots, a_r$,  \emph{respects} the series $A_1, \dots, A_s$
if for any $p$, the elements of $A_p$ precede the elements of $A_{p+1}$ in $a_1, \dots, a_r$.

For some $j \in [m']$, let $C'_j$ be the set of maximal cliques of $G'$ contained in $F(T'_{x'_j})$. 
As $T'$ represents $G'$ and its root is a Q-node, any consecutive ordering of the maximal cliques of $G'$ 
respects either $C'_1, C'_2, \dots,  C'_{m'}$ or $C'_{m'}, C'_{m'-1}, \dots,  C'_1$.

Let $C_j$ be the set of maximal cliques of $G$ that contain $\phi_S(K)$ for some $K \in C'_j$. 
Clearly, the sets $C_j$ ($j \in [m']$) are disjoint by the maximality of the cliques in $F(T')$.
The interval representation of $G$ yields an interval representation of $G-S$ and hence of $G'$, 
which implies that any consecutive ordering of the cliques in $F(T)$ must respect 
either $C_1, \dots,  C_{m'}$ or $C_{m'}, \dots,  C_1$.
This implies that if $j_1 \neq j_2$, then the deepest node in $T$ whose frontier contains 
every clique in $C_{j_1} \cup C_{j_2}$ must be some unique Q-node $q$, the same for each pair $(j_1,j_2)$.
Note also that if $q$ is contained in $T_{x_i}$ for some $i \in [m]$, then 
every vertex of $\phi_s(V(G'))$ must be contained in some element of $F(T_{x_i})$. 
This yields that $S \supseteq V(G) \setminus N_G[X_i]$, contradicting to the assumption that $S$ is not local.
Hence $q=r$. The definition of $q$ shows that by possibly reversing the children of $r$ in $T$, 
we can find disjoint blocks $B_1, B_2, \dots, B_{m'}$ 
following each other in this order in [1,m] such that for each $j \in [m']$
every clique of $C_j$ is contained in $\bigcup_{i \in B_j} F(T_{x_i})$.

This observation can be easily seen to imply claim (1), by simply recalling 
what $Q_{r'}^{\mathrm{dir_1}}(v) < Q_{r'}^{\mathrm{dir_2}}(w)$ means by definition for some 
$v,w \in V(G')$ and $\mathrm{dir_1},\mathrm{dir_2} \in \{\mathrm{left,right}\}$, 
considering the maximal cliques of $G'$.

Now, it is easy to prove (2), (3), (4), and (5), using the assumption that (1) holds 
(which can indeed be achieved by possibly reversing the children of $r'$).
To prove (2), observe that for any $j \in [m']$, we have that
$Q_{r'}^{\mathrm{left}}(v)=j$ for each $v \in X'_j \cup M^+_{r'}(j)$ and 
$Q_{r'}^{\mathrm{right}}(v)=j$ for each $v \in X'_j \cup M^-_{r'}(j)$.
Thus, for any $j_1<j_2$ in $[m']$, $v \in X'_{j_1} \cup M^-_{r'}(j_1)$, 
and $w \in X'_{j_2} \cup M^+_{r'}(j_2)$ we obtain
$Q_{r'}^{\mathrm{right}}(v) < Q_{r'}^{\mathrm{left}}(w)$, which immediately
implies $Q_{r}^{\mathrm{right}}(\phi_S(v)) < Q_{r}^{\mathrm{left}}(\phi_S(w))$ by using the
assumption that (1)  holds. 
By definition, this means $\beta_S(j_1)<\alpha_S(j_2)$, from which (2) follows.

Note that (3) is directly implied by (2).

To see (4), consider an $x \in X_i \setminus S$ and let $x'$ be the vertex in $G'$ for which $x=\phi_S(x')$.
Since $Q_r^{\mathrm{left}}(x)=Q_r^{\mathrm{right}}(x)=i$, by (1)
we must have $Q_{r'}^{\mathrm{left}}(x')=Q_{r'}^{\mathrm{right}}(x')$ as well.
Thus, $x' \in X'_j$ for some $j \in [m']$. Now, assuming that $\phi_S(y')$
is also in $X_i$ but $y' \notin X'_j$, we obtain 
that either $Q_{r'}^{\mathrm{right}}(x') < Q_{r'}^{\mathrm{left}}(y')$ or 
$Q_{r'}^{\mathrm{right}}(y')<Q_{r'}^{\mathrm{left}}(x')$.  But using (1), these both 
contradict $\phi_S(y') \in X_i$.

Finally, (5) follows immediately from (4).
\qed
\end{proof}

Using Lemma~\ref{I_S_prop}, we can handle an easy case when no branching is needed, 
and a reduced input can be constructed. Let $L(r)$ and $L'(r')$ denote the 
labels of $r$ and $r'$ in $T$ and $T'$, respectively.

\begin{mylemma}
\label{qq_subtree}
If $m=m'$, $L(r)=L'(r')$ and there is an $i \in [m]$ such that
$G[X_j]$ is isomorphic to $G'[X'_j]$ for all $j \neq i$, then $(G'[X'_i],G[X_i])$ 
is a reduced input of $(G',G)$.
\end{mylemma}

\begin{proof}
First, we show that a solution $S$ for $(G'[X'_i],G[X_i])$ is a solution for $(G',G)$.
Clearly, the conditions of the lemma imply that there is
an isomorphism from $G'-X'_i$ to $G-X_i$ mapping $N_{G'}(X'_i)$ to $N_G(X_i)$.
This isomorphism can be extended to map $X'_i$ to $X_i \setminus S$, 
since $S$ is a solution for $(G'[X'_i],G[X_i])$.
In the other direction, suppose that $S$ is a solution for $(G',G)$. 
Note that $S$ need not be a solution for $(G'[X'_i],G[X_i])$, as $S$ may contain vertices not in $X_i$.
However, to prove the lemma it suffices to show that a solution exists for $(G'[X'_i],G[X_i])$, 
meaning that $G'[X'_i]$ is isomorphic to an induced subgraph of $G[X_i]$.

Let us assume first that $S$ is not local.
Suppose $S \cap X_j \neq \emptyset$ for some $j \neq i$.
Since $G[X_j]$ is isomorphic to $G'[X'_j]$, 
we have $|\phi_S(X'_j)| = |X_j| > |X_j \setminus S|$.
By Lemma~\ref{I_S_prop}, this implies that $I_S(j)=[j,j]$ becomes true only after 
reversing the children of $r'$, meaning that $I_S(h)=[m-h+1,m-h+1]$ for each $h \in [m]$ 
(according to the PQ-tree $T'$). 
In particular, this means that $\phi_S(X'_i) \subseteq X_{m-i+1}$. 
Hence, $G'[X'_i]$ is isomorphic to an induced subgraph of $G[X_{m-i+1}]$. 
Since $G[X_{m-i+1}]$ is isomorphic to $G'[X'_{m-i+1}]$, $G'[X'_i]$ is also isomorphic 
to an induced subgraph of $G'[X'_{m-i+1}]$. 
From this, by $\phi_S(X'_{m-i+1}) \subseteq X_i$
we get that $G'[X'_i]$ is isomorphic to an induced subgraph of $G[X_i]$.

Now, suppose that $S$ is a local solution, and 
$S \supseteq V(G) \setminus N_G[X_h]$ for some $h \in [m]$.
Since each vertex of $R^{-1}(r) \setminus S$ must be adjacent to 
every vertex in $N_G[X_h]$, such vertices would be universal in $G-S$. 
Thus, as $G'$ contains  no universal vertices, $S \supseteq R^{-1}(r)$ follows.
Hence, we get $\phi_S(V(G')) \subseteq X_h$ and thus $|X_h| \geq |V(G')| > |X'_h|$, implying $h=i$.
But $\phi_S(V(G')) \subseteq X_i$ clearly implies that $G'$ and therefore also $G'[X'_i]$ must be 
isomorphic to an induced subgraph of $G[X_i]$.
This finishes the proof.
\qed
\end{proof}

Observe that it can be tested in linear time whether the conditions of Lemma \ref{qq_subtree} hold
after possibly reversing the children of $r'$. If this is the case, algorithm $\mathcal{A}$ proceeds with
the \outputstyle{reduced input} guaranteed by Lemma \ref{qq_subtree}.
Otherwise it branches into a few directions as follows. In each of these branches, 
$\mathcal{A}$ will output either a necessary set of size at most $2k+1$, or an independent subproblem of $(G',G)$.

In the first branch, it assumes that the solution $S$ is local. 
In this case, given any two vertices $a \in X_1$ and $b \in X_m$, 
a solution must include at least one of $a$ and $b$. (Note that $X_1$ and $X_m$ cannot be empty.)
Thus, $\mathcal{A}$ outputs the \outputstyle{necessary set} $\{a,b\}$. 
In all other branches, we assume that $S$ is not local.
Algorithm $\mathcal{A}$ branches into two more directions, according to whether the children of $r'$ 
have to be reversed to achieve the properties of Lemma~\ref{I_S_prop}.
Thus, in the followings we may assume that these properties hold.

First, observe that Lemma~\ref{I_S_prop} implies that $m \geq m'$ must hold,
so otherwise the algorithm can \outputstyle{reject}. 
First, we examine the case $m=m'$, and then we deal with the case $m>m'$ in Section~\ref{sect_fragments}.
Observe that Lemma~\ref{I_S_prop} also implies that $\phi_S$ must map every
$L_{r'}(a,b)$ to a vertex in $L_r(a,b)$, for any block $[a,b]$ in $[1,m]$.
So, if there is a block $[a,b]$ such that $|L_r(a,b)|<|L_{r'}(a,b)|$, then the
algorithm has to \outputstyle{reject}. If the converse is true, i.e. $|L_r(a,b)|>|L_{r'}(a,b)|$
for some block $[a,b]$, then some vertex $v \in L_r(a,b)$ must be included in $S$.
Since each vertex in $L_{r}(a,b)$ has the same neighborhood, 
the algorithm can choose $v$ arbitrarily from $L_r(a,b)$ and 
output the \outputstyle{necessary set} $\{ v\}$.

If none of these cases happen, then $L(r)=L'(r')$, so as the conditions of 
Lemma~\ref{qq_subtree} do not hold, there must exist two indices $i_1 \neq i_2 \in [m]$, 
such that $G[X_{i_1}]$ and $G[X_{i_2}]$ is not isomorphic to $G'[X'_{i_1}]$ and $G'[X'_{i_2}]$, respectively. 
As $L(r)=L'(r')$, by Lemma~\ref{I_S_prop} we get $\phi_S(R^{-1}(r')) =R^{-1}(r)$ 
and $\phi_S(X'_h)=X_h \setminus S$ for each $h \in [m]$.
Thus, it is easy to see that for each $h \in [m]$, the set $S \cap X_h$ yields a solution 
for the instance $(G'[X'_h],G[X_h])$, 
and conversely, if $(G',G)$ is solvable then
any solution for $(G'[X'_h],G[X_h])$ can be extended to a solution for $(G',G)$.
Now, using $X_{i_1} \cap S \neq \emptyset$ and $X_{i_2} \cap S\neq \emptyset$, 
we know that the parameter of the instance
$(G'[X'_{i_1}],G[X_{i_1}])$ must be at least $1$ but at most $k-1$.
If this indeed holds, then $\mathcal{A}$ outputs $(G'[X'_{i_1}],G[X_{i_1}])$ as an \outputstyle{independent subproblem},
otherwise it \outputstyle{rejects} the instance. 

\subsection{Identifying fragments for the case $m>m'$.}
\label{sect_fragments}
The rest of the paper deals with the case where $S$ is not local, and $m>m'$. 
In this case, we will try to determine $I_S(j)$ for each $j \in [m']$. 
To do this, $\mathcal{A}$ will branch several times on determining $I_S(j)$ for some $j \in [m']$.
Proposition~\ref{I_S_bounds} helps us to bound the number of resulting branches.
To state this proposition, we use the following notation: given some block $[i_1,i_2]$ in $[1,m]$, 
let $W_r(i_1,i_2)$ contain those vertices $v$ for which $Q_r(v)$ is contained in $[i_1,i_2]$.
Let also $w_r(i_1,i_2) = |W_r(i_1,i_2)|$. We define $W_{r'}(j_1,j_2)$ and $w_{r'}(j_1,j_2)$ for some 
block $[j_1,j_2]$ in $[1,m']$ similarly. 
Using Lemma~\ref{I_S_prop} and the definition of $\alpha_S(j)$ and $\beta_S(j)$, 
it is easy to prove the following:

\begin{myprop} 
\label{I_S_bounds} Suppose that $S$ is a solution that is not local, and the properties of Lemma~\ref{I_S_prop} hold. 
This implies the followings.\\
(1)
$\phi_S(W_{r'}(j_1,j_2)) = W_r(\alpha_S(j_1),\beta_S(j_2)) \setminus S$
for every block $[j_1,j_2]$ in $[1,m']$.
\\
(2)
$w_{r'}(1,j) \leq w_r(1,\beta_S(j)) \leq w_{r'}(1,j)+k$ and
$w_{r'}(j,m') \leq w_r(\alpha_S(j),m) \leq w_{r'}(j,m')+k$
hold for every $j \in [m']$.
\end{myprop}

Note that $w_r(1,i)<w_r(1,i+1)$ for every $i \in [m-1]$, even if $X_{i+1} = \emptyset$ 
(as in this case $M^-_r(i+1) \neq \emptyset$), and similarly we get 
$w_r(i,m)>w_r(i+1,m)$ for every $i \in [m-1]$.
Therefore, the bounds of Prop.~\ref{I_S_bounds} yield at most $(k+1)^2$
possibilities for choosing $[\alpha_S(j),\beta_S(j)]$, for some $j \in [m']$.

Since determining $I_S(j)$ for each $j \in [m']$ using Prop.~\ref{I_S_bounds}
would result in too many branches, we need some other tools.
Hence, we introduce a structure called fragmentation that can be used to 
``approximate'' the sets $I_S(j)$ for each $j \in [m']$.
By iteratively refining the fragmentation, we can get closer and closer to actually determine these sets.
Given a set of disjoint blocks  $\{ [a'_h,b'_h] \mid h \in [f]\}$ in $[1,m']$
and a corresponding set of disjoint blocks $\{ [a_h,b_h] \mid h \in [f]\}$ in $[1,m]$
with $F_h$ denoting the pair $([a'_h,b'_h],[a_h,b_h])$,
the set $\{F_h \mid h \in [f]\}$ is a \emph{fragmentation} for $(T,T',S)$, if 
\begin{itemize}
\item $a_h \leq \alpha_S(a'_h)$ and $\beta_S(b'_h) \leq b_h$ for each $h \in [f]$, and
\item $a'_{h+1} = b'_h+1$ and $a_{h+1}=b_h+1$ for each $h \in [f-1]$.
\end{itemize}
We will call the element $F_h$ for some $h \in [f]$ a \emph{fragment}.
We define $\sigma(F_h)=(b_h-a_h)-(b'_h-a'_h)$
and $\delta(F_h)=a_h-a'_h$, which are both clearly non-negative integers.
Note that $\delta(F_{h+1})=\delta(F_h)+ \sigma(F_h)$ holds for each $h \in [f-1]$. 
We say that some $j \in [m']$ is \emph{contained} in the fragment $F_h$, if $a'_h \leq j \leq b'_h$.
In this case, we write $\delta(j)=\delta(F_h)$ and $\sigma(j)=\sigma(F_h)$.
We will say that a fragment $F$ is \emph{trivial} if $\sigma(F_h)=0$, and \emph{non-trivial} otherwise.
We also call an index in $[m']$ trivial (or non-trivial) in a fragmentation, 
if the fragment containing it is trivial (or non-trivial, respectively).
An \emph{annotated fragmentation} for $(T,T',S)$ is a pair $(\mathcal{F},U)$ formed by a 
fragmentation $\mathcal{F}$ for $(T,T',S)$ and a set $U \subseteq [m']$ 
such that each $j \in U$ is trivial in $\mathcal{F}$.
We say that the trivial indices contained in $U$ are \emph{important}.
See Fig.~\ref{fig_fragments} for an illustration.

\begin{figure}[t]
\begin{center}
    \psfrag{0}[][]{$0$}
    \psfrag{1}[][]{$1$}
    \psfrag{2}[][]{$2$}
    \psfrag{3}[][]{$3$}
    \psfrag{4}[][]{$4$}
    \psfrag{5}[][]{$5$}
    \psfrag{6}[][]{$6$}
    \psfrag{7}[][]{$7$}
    \psfrag{8}[][]{$8$}
    \psfrag{9}[][]{$9$}
    \psfrag{10}[][]{$10$}
    \psfrag{r}[][]{$r$}
    \psfrag{r'}[][]{$r'$}
    \psfrag{F1}[][]{$F_1$}
    \psfrag{F2}[][]{$F_2$}
    \psfrag{F3}[][]{$F_3$}
    \psfrag{F4}[][]{$F_4$}
    \psfrag{F5}[][]{$F_5$}
    \psfrag{F}[][]{$F$}
    \psfrag{dF}[b][b]{$\delta(F)$}
    \psfrag{sF}[b][b]{$\sigma(F)$}
\includegraphics[scale=0.4]{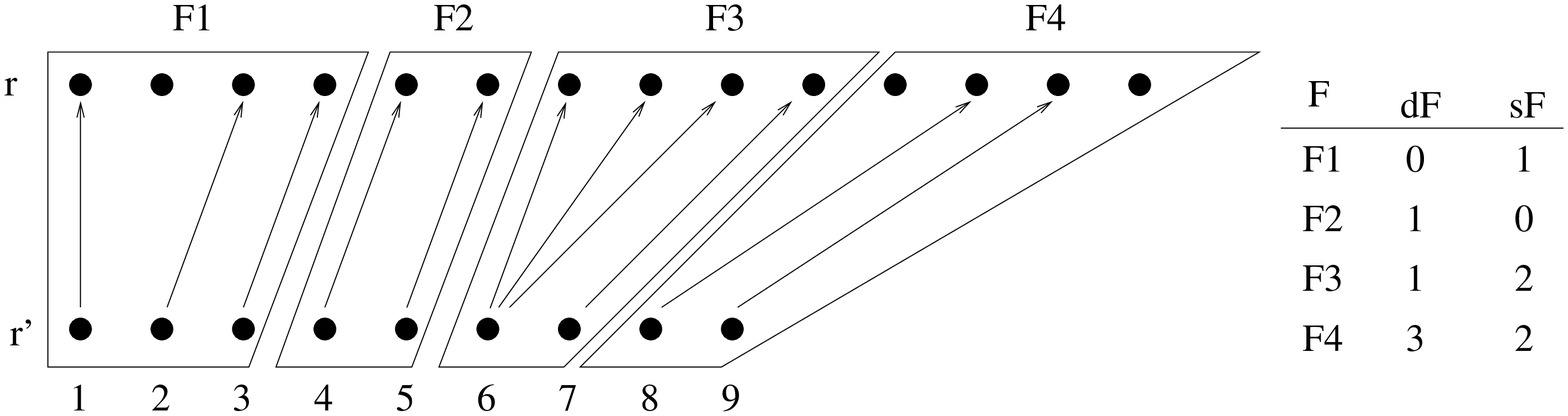}
\end{center}
\caption{
Illustration of a fragmentation containing four fragments. 
An arrow leading from $j$ to $i$ indicates $i \in I_S(j)$.
There are three non-trivial fragments: $F_1, F_3, F_4$. 
The indices $1,4,5$ are left-aligned, $2,3,4,5,7$ are right-aligned, $6$ is wide and $8,9$ are skew. 
}
\label{fig_fragments}
\end{figure}

Let us suppose that we are given a fragmentation $\mathcal{F}$ for $(T,T',S)$, 
and some $j \in [m']$ contained in a fragment $F \in \mathcal{F}$.
We will use the notation $\leftind{j} = j + \delta(F)$ and $\rightind{j} = j + \delta(F)+\sigma(F)$.
Also, we will write $B^+_r(i)=M_{r}^+(i) \cup X_i$ and $B^+_{r'}(j)=M_{r'}^+(j) \cup X'_j$.
For some block $[i_1,i_2]$ in $[1,m]$ let $B^+_{r}(i_1,i_2) = \bigcup_{h \in [i_1,i_2]} B^+_{r}(h)$, and
we define $B^+_{r'}(j_1,j_2)$ for some block $[j_1,j_2]$ in $[1,m']$ analogously. 
Proposition ~\ref{limit_indices} is easy to prove, using Lemma~\ref{I_S_prop}. 

\begin{myprop}
\label{limit_indices} 
For each $j \in [m']$, the followings hold:  \\
(i) 
$\leftind{j} \leq \alpha_S(j) \leq \beta_S(j) \leq \rightind{j}$. \\
(ii) 
$\phi_S(M_{r'}^+(j)) \subseteq \bigcup_{h \in I_S(j)} M_r^+(h) $ and
$\phi_S(M_{r'}^-(j)) \subseteq \bigcup_{h \in I_S(j)} M_r^-(h) $. \\
(iii)
if $I_S(j)=[i,i]$, then 
$M_r^+(i) \setminus S = \phi_S(M_{r'}^+(j))$ and
$M_r^-(i) \setminus S = \phi_S(M_{r'}^-(j))$.\\
(iv) 
if  $j<j'$, $\alpha_S(j)=i$ and $\beta_S(j')=i'$, then 
$\phi_S (B^+_{r'}(j,j')) = B^+_r(i,i') \setminus S$.
\end{myprop}

We will classify the index $j$ as follows:
\begin{itemize}
\item 
If $|I_S(j)|>1$, then $j$ is \emph{wide}.
\item 
If $I_S(j)=[\leftind{j},\leftind{j}]$, then $j$ is \emph{left-aligned}.
\item 
If $I_S(j)=[\rightind{j},\rightind{j}]$, then $j$ is \emph{right-aligned}.
\item 
If $I_S(j)=[i,i]$ such that $\leftind{j} < i < \rightind{j}$, then $j$ is \emph{skew}.
\end{itemize}
If $F$ is trivial, then by Prop.~\ref{limit_indices}, 
only $\alpha_S(j)=\beta_S(j)= \leftind{j}= \rightind{j} $ is possible. 
Thus, each trivial index must be both left- and right-aligned. 

Lemma~\ref{max_fragments} shows that a solvable instance can only contain at most $2k$ non-trivial fragments. 
Thus, if a given fragmentation contains more than $2k$ non-trivial fragments, then
$\mathcal{A}$ can correctly reject, as such a fragmentation does not correspond to any solution.

\begin{mylemma} 
\label{max_fragments}
Any fragmentation $\mathcal{F}$ for $(T,T',S)$ can have at most $2k$ non-trivial fragments.
\end{mylemma}

\begin{proof}
We will show that every non-trivial fragment of $\mathcal{F}$ contains an index $j$ for which 
$X_j \cup M^+_r(j) \cup M^-_r(j)$ contains a vertex of $S$. Since any $s \in S$
can be contained in at most two sets of this form, 
this proves that there can be at most $2k$ non-trivial fragments  in $\mathcal{F}$.

Observe that if $F=([a',b'],[a,b])$ is a non-trivial fragment in $\mathcal{F}$, 
then either $|I_S(j)|>1$ must hold for some $j$ in $[a',b']$, 
or some $i$ in $[a,b]$ is not contained in any of the blocks $\{ I_S(h) \mid a' \leq h \leq b'\}$.
In the latter case we have $X_i \cup M^+_r(i) \cup M^-_r(i) \subseteq S$.
By Prop.~\ref{notempty}, $X_i \cup M^+_r(i) \cup M^-_r(i) \neq \emptyset$, so it indeed contains a vertex of $S$.
Now, suppose that the former case holds, and $j$ is wide.
Reversing the children of $r$ between $\alpha_S(j)$ and $\beta_S(j)$ 
cannot result in a PQ-tree representing $G$, hence there must be a vertex 
$z  \in R^{-1}(r)$ such that $Q_r(z)$ properly intersects $I_S(j)$.

Let $Q_r(z) =[z_1,z_2]$. We assume $z_1 < \alpha_S(j) \leq z_2 < \beta_S(j)$, 
as the case $\alpha_S(j)< z_1 \leq \beta_S(j) < z_2$ can be handled analogously. 
First, if $z \in S$, then $M^-_r(z_2)$ contains a vertex of $S$, so we are done.
Otherwise, $z \notin S$ implies that $z$ must be contained in $\phi_S(M^-_{r'}(j))$, from which 
$X_{\beta_S(j)} \cup M^+_r(\beta_S(j)) \subseteq S$ follows.
Again, $X_{\beta_S(j)} \cup M^+_r(\beta_S(j)) \neq \emptyset$ by Prop.~\ref{notempty}, 
so in this case we obtain $\emptyset \neq X_{\beta_S(j)} \cup M^+_r(\beta_S(j)) \subseteq S$.
This proves the lemma.
\qed
\end{proof}

Before describing the remaining steps of the algorithm, we need some additional notation.

Let $T^{\mathrm{rev}}$ and $T'^{\mathrm{rev}}$ denote the labeled PQ-trees obtained by 
reversing the children of $r$ and $r'$, respectively.
We write $j^{\mathrm{rev}}$ for the index $m'-j+1$ corresponding to $j$ in $T^{\mathrm{rev}}$, 
and we also let $X^{\mathrm{rev}} = \{ j^{\mathrm{rev}} \mid j \in X\}$ for any set $X \subseteq [m']$.
For a fragment $F=([a',b'],[a,b])$ we let 
$F^{\mathrm{rev}}=([b'^{\mathrm{rev}},a'^{\mathrm{rev}}],[b^{\mathrm{rev}},a^{\mathrm{rev}}])$, 
so a fragmentation $\mathcal{F}$ for $(T,T',S)$ clearly yields a fragmentation 
$\mathcal{F}^{\mathrm{rev}}=\{F^{\mathrm{rev}} | F \in \mathcal{F}\}$ for $(T^{\mathrm{rev}},T'^{\mathrm{rev}},S)$.
Note that if $j$ is left-aligned (right-aligned) in $\mathcal{F}$, 
then the index $j^{\mathrm{rev}}$ is right-aligned (left-aligned, resp.) 
in $\mathcal{F}^{\mathrm{rev}}$.

Given some $i \in [m]$, let us order the vertices $v$ in $M^+_r(i)$ increasingly according to $Q_r^{\mathrm{right}}(v)$.
Similarly, we order vertices $v$ in $M^-_r(i)$ increasingly according to $Q_r^{\mathrm{left}}(v)$. 
In both cases, we break ties arbitrarily.
Also, we order vertices of $M^+_{r'}(j)$ and $M^-_{r'}(j)$ in $T'$ the same way for some $j \in [m']$. 
Now, we construct the sets $\leftind{P}^+(j)$ and $\leftind{P}^-(j)$, 
both containing pairs of vertices, in the following way. 
We put a pair $(v,w)$ into $\leftind{P}^+(j)$, if $v \in M^+_{r'}(j)$, $w \in M^+_r(\leftind{j})$, and 
$v$ has the same rank (according to the above ordering) in $M^+_{r'}(j)$ as the rank of $w$ in $M^+_r(\leftind{j})$. 
Similarly, we put a pair $(v,w)$ into $\leftind{P}^-(j)$, if $v \in M^-_{r'}(j)$, $w \in M^-_r(\leftind{j})$, and 
$v$ has the same rank in $M^-_{r'}(j)$ as $w$ in $M^-_r(\leftind{j})$. 
In addition, we define the sets $\rightind{P}^+(j)$ and $\rightind{P}^-(j)$ analogously, by 
substituting $\rightind{j}$ for $\leftind{j}$ in the definitions.
The key properties of these sets are summarized in the following proposition. 

\begin{myprop}
\label{pairs} W.l.o.g. we can assume $\phi_S(v)=w$ in the following cases: \\
(1)
If $(v,w) \in \leftind{P}^+(j)$ 
and $|M^+_{r'}(j)|=|M^+_{r}(\leftind{j})|$ for some left-aligned $j$.\\
(2)
If $(v,w) \in \leftind{P}^-(j)$  
and $|M^-_{r'}(j)|=|M^-_{r}(\leftind{j})|$ for some left-aligned $j$.\\
(3)
If $(v,w) \in \rightind{P}^+(j)$
and $|M^+_{r'}(j)|=|M^+_{r}(\rightind{j})|$ for some right-aligned $j$.\\
(4)
If $(v,w) \in \rightind{P}^-(j)$   
and $|M^-_{r'}(j)|=|M^-_{r}(\rightind{j})|$ for some right-aligned $j$.
\end{myprop}

\begin{proof}
We only show (1), as all the other statements are analogous. 
To see (1), observe that as $j$ is  left-aligned, $|M^+_{r'}(j)|=|M^+_{r}(\leftind{j})|$ implies that 
$\phi_S$ must map $M^+_{r'}(j)$ to $M^+_{r}(\leftind{j})$ bijectively. 

By Lemma~\ref{I_S_prop}, if $Q^{\mathrm{right}}(v) < Q^{\mathrm{right}}(v')$, then 
$Q^{\mathrm{right}}(\phi_S(v)) < Q^{\mathrm{right}}(\phi_S(v'))$ holds as well. 
Note also that the vertices of $L_r(j,j')$ for some $j' \in [m']$ are equivalent 
in the sense that they have the same neighborhood. Thus, we can assume w.l.o.g. that 
if $v$ precedes $v'$ in the above defined ordering of $M^+_{r'}(j)$, then $\phi_S(v)$ 
precedes $\phi_S(v')$ in the similar ordering of $M^+_{r}(\leftind{j})$. 
Thus, $\phi_S$ must indeed map $v$ to $w$, by the definition of $\leftind{P}^+(j)$.
\qed
\end{proof}

Given two non-trivial fragments $F$ and $H$ of a fragmentation 
with $F$ preceding $H$, we define three disjoint subsets of vertices in $R^{-1}(r')$ 
starting in $F$ and ending in $H$. These sets will be denoted by $\mathcal{L}(F,H)$, 
$\mathcal{R}(F,H)$, and $\mathcal{X}(F,H)$, and we construct them as follows.
Suppose that $v \in L_{r'}(y,j)$ for some $y$ and $j$ contained in $F$ and $H$, respectively.
We put $v$ in exactly one of these three sets, if $(v,w) \in \leftind{P}^-(j)$ for some vertex $w$, and 
$\leftind{y} \leq  Q_r^{\mathrm{left}}(w) \leq \rightind{y}$. 
Now, if $Q_r^{\mathrm{left}}(w) =\leftind{y}$ then we put $v$ into $\mathcal{L}(F,H)$, 
if $Q_r^{\mathrm{left}}(w) =\rightind{y}$ then we put $v$ into $\mathcal{R}(F,H)$, 
and if $\leftind{y}< Q_r^{\mathrm{left}}(w) <\rightind{y}$ then we put $v$ into $\mathcal{X}(F,H)$. 
Loosely speaking, if each vertex in $H$ is left-aligned, 
and some vertex of $\mathcal{R}(F,H)$ starts at $y$,
then $y$ should be right-aligned. Similarly, if each vertex in $H$ is left-aligned, 
and some vertex of $\mathcal{X}(F,H)$ starts at $y$, then $y$ should be either wide or skew.
Since we would like to ensure each index to be left-aligned, we will try to get rid of vertices
of $\mathcal{R}(F,H)$ and $\mathcal{X}(F,H)$.

We say that two indices $y_1, y_2 \in [m']$ are \emph{conflicting} for $(F,H)$, if $y_1 \leq y_2$, 
$M^+_{r'}(y_1) \cap \mathcal{R}(F,H) \neq \emptyset$ and $M^+_{r'}(y_2) \cap \mathcal{L}(F,H) \neq \emptyset$.
In such a case, we say that any $j \geq \max\{j_1,j_2\}$ contained in $H$ is \emph{conflict-inducing} 
for $(F,H)$ (and for the conflicting pair $(y_1,y_2)$), where $j_1$ denotes the minimal index for which 
$L_{r'}(y_1,j_1)  \cap \mathcal{R}(F,H) \neq \emptyset$, and $j_2$ denotes the minimal index for which 
$L_{r'}(y_2,j_2)  \cap \mathcal{L}(F,H) \neq \emptyset$.
Informally, if a conflict-inducing index in $H$ is left-aligned, then this shows that a right-aligned 
index should precede a left-aligned index in $F$, which cannot happen.
In addition, if $\mathcal{L}(F,H) \neq \emptyset$, 
then let $L^{\mathrm{max}}(F,H)$ denote the largest index $y$ in $F$ 
for which $M^+_{r'}(y)  \cap \mathcal{L}(F,H) \neq \emptyset$. 
Let the \emph{L-critical index for $(F,H)$} be the smallest index $j$ contained in $H$ for which 
$L_{r'}(L^{\mathrm{max}}(F,H),j) \cap \mathcal{L}(F,H) \neq \emptyset$.
Similarly, if $\mathcal{R}(F,H) \neq \emptyset$, then
let  $R^{\mathrm{min}}(F,H)$ denote the smallest index $y$ in $F$ 
for which $M^+_{r'}(y) \cap \mathcal{R}(F,H) \neq \emptyset$.
Also, let the \emph{R-critical index for $(F,H)$} be the smallest index $j$ in $H$ 
for which $L_{r'}(R^{\mathrm{min}}(F,H),j) \cap \mathcal{R}(F,H) \neq \emptyset$.

Now, an index $j$ in $H$ is \emph{LR-critical} for $(F,H)$, 
if either $j$ is the R-critical index for $(F,H)$ and $\mathcal{L}(F,H) = \emptyset$, 
or $j=\max\{j_L, j_R\}$ where $j_L$ is the L-critical and $j_R$ is the R-critical index for $(F,H)$.
Note that both cases require $\mathcal{R}(F,H) \neq \emptyset$.
Moreover, $H$ contains an LR-critical index for $(F,H)$, if and only if $\mathcal{R}(F,H) \neq \emptyset$.
Intuitively, if an LR-critical index in $H$ is left-aligned, then this implies that
some index $y$ in $F$ is right-aligned. 

Note that the definitions of the sets $\mathcal{L}(F,H),\mathcal{R}(F,H)$, and $\mathcal{X}(F,H)$ 
together with the definitions connected to them as described above depend on the given fragmentation, 
so whenever the fragmentation changes, these must be adjusted appropriately  as well. 
(In particular, vertices in $\mathcal{L}(F,H),\mathcal{R}(F,H)$, and $\mathcal{X}(F,H)$ 
must start and end in two different non-trivial fragments.)

Let $(\mathcal{F},U)$ be an annotated fragmentation for $(T,T',S)$.
Our aim is to ensure that the properties given below hold for each index $j \in [m']$.
Intuitively, these properties mirror the expectation that every index should be left-aligned.
Note that although we cannot decide whether  $(\mathcal{F},U)$ is a correct fragmentation 
without knowing the solution $S$, we are able to check whether these properties 
hold for some index $j$ in  $(\mathcal{F},U)$.

\begin{description}
\item[Property 1:] 
$G'[X'_j]$ is isomorphic to $G[X_{\leftind{j}}]$.\item[Property 2:] 
$|M_{r'}^+(j)| \leq |M_{r}^+(\leftind{j})| \leq |M_{r'}^+(j)|+k$ and 
$|M_{r'}^-(j)| \leq |M_{r}^-(\leftind{j})| \leq |M_{r'}^-(j)|+k$.
\item[Property 3:] 
If $j$ is non-trivial, then 
$|M_{r'}^+(j)| = |M_{r}^+(\leftind{j})|$ and 
$|M_{r'}^-(j)| = |M_{r}^-(\leftind{j})|$.
\item[Property 4:] 
If $j$ is non-trivial, then 
$|L_{r'}(y,j)|=|L_{r}(\leftind{y},\leftind{j})|$ for any $y < j$ 
contained in the same fragment as $j$.
\item[Property 5:] 
If $j$ is non-trivial, then for every $(v,w) \in \leftind{P}^+(j)$ such that
$Q_{r'}^{\mathrm{right}}(v)$ $= y$ is non-trivial, $\leftind{y} \leq Q_r^{\mathrm{right}}(w) \leq \rightind{y}$ holds.
Also, for every $(v,w) \in \leftind{P}^-(j)$ such that  
$Q_{r'}^{\mathrm{left}}(v)=y$ is non-trivial, $\leftind{y} \leq Q_r^{\mathrm{left}}(w) \leq \rightind{y}$ holds.
\item[Property 6:]
If $j$ is non-trivial, then no vertex in $\mathcal{X}(F,H)$ (for some $F$ and $H$) ends in $j$.
\item[Property 7:]
$j$ is not conflict-inducing for any $(F,H)$.
\item[Property 8:]
$j$ is not LR-critical for any $(F,H)$.
\item[Property 9:] 
If $j$ is non-trivial, then for every $(v,w) \in \leftind{P}^+(j)$ such that
$Q_{r'}^{\mathrm{right}}(v)$ $= y$ is non-trivial, $Q_r^{\mathrm{right}}(w) = \leftind{y}$ holds.
Also, for every $(v,w) \in \leftind{P}^-(j)$ such that  
$Q_{r'}^{\mathrm{left}}(v)=y$ is non-trivial, $Q_r^{\mathrm{left}}(w) = \leftind{y}$ holds.
\item[Property 10:] 
If $j$ is non-trivial, then for each important trivial index $u \in U$, 
$|L_{r'}(j,u)|=|L_{r}(\leftind{j},\leftind{u})|$ holds if $u>j$, and 
$|L_{r'}(u,j)|=|L_{r}(\leftind{u},\leftind{j})|$ holds if $u<j$.
\end{description}
Observe that each of these properties depend on the fragmentation $\mathcal{F}$, 
and Property 10 depends on the set $U$ as well. 
Also, if some property holds for an index $j$ in $(\mathcal{F},U)$, then this does not imply that 
the property holds for $j^{\mathrm{rev}}$ in $(\mathcal{F}^{\mathrm{rev}},U^{\mathrm{rev}})$, 
as most of these properties are not symmetric. For example, $\leftind{j}$ and $\rightind{j}$ both 
have a different meaning in the fragmentation $\mathcal{F}$ and in $\mathcal{F}^{\mathrm{rev}}$.
We say that an index $j \in [m']$ \emph{violates} Property $\ell$ ($1 \leq \ell \leq 10$)
in an annotated fragmentation $(\mathcal{F},U)$, if Property $\ell$ does not hold for $j$ in $(\mathcal{F},U)$.
If the first nine properties hold for each index of $[m']$ both in $(\mathcal{F},U)$ and its reversed version 
$(\mathcal{F}^{\mathrm{rev}},U^{\mathrm{rev}})$, then we say that $(\mathcal{F},U)$ is \emph{9-proper}.
We say that $(\mathcal{F},U)$ is \emph{proper}, if it is 9-proper, and Property 10 holds hold for each index of 
$[m']$ in  $(\mathcal{F},U)$. Observe that $(\mathcal{F},U)$ can be proper even if 
Property 10 does not hold in $(\mathcal{F}^{\mathrm{rev}},U^{\mathrm{rev}})$.

Let us describe our strategy.
We start with an annotated fragmentation where $U= \emptyset$ and the fragmentation contains 
only the unique fragment $([1,m'],[1,m])$, implying that there are no trivial indices.
Given an annotated fragmentation $(\mathcal{F},U)$, we do the following:
if one of Properties $1,2, \dots, 10$ does not hold for some index $j$ in $(\mathcal{F},U)$ for $(T,T',S)$, 
or one of the first nine properties does not hold for some $j$ in the reversed annotated fragmentation 
$(\mathcal{F}^{\mathrm{rev}},U^{\mathrm{rev}})$ for $(T^{\mathrm{rev}},T'^{\mathrm{rev}},S)$,
then we either output a necessary set or an independent subproblem, 
or we modify the given annotated fragmentation.
To do this, we will branch on $I_S(j)$, and handle each possible case according to the type of $j$. 
Note that we do not care whether Property 10 holds for the indices in the reversed instance.
If the given annotated fragmentation is proper, algorithm $\mathcal{A}$ will find a solution 
using Lemmas \ref{right_aligned} and \ref{isomorphism}. 

Observe that we can assume w.l.o.g. that there exists an $\ell$ ($1 \leq \ell \leq 10$) such that 
Properties $1, \dots, \ell-1$ hold for each index 
both in the annotated fragmentation $(\mathcal{F},U)$ and in its reversed version 
$(\mathcal{F}^{\mathrm{rev}},U^{\mathrm{rev}})$, but Property $\ell$ is violated by an index in $[m']$ in 
$(\mathcal{F},U)$. Otherwise, we simply reverse the instance.
Let us now remark that we only reverse the instance if this condition is not true.

To begin, the algorithm takes the first index $j$ that violates Property $\ell$, 
and branches into $(k+1)^2$ directions to choose $I_S(j)$, using Prop.~\ref{I_S_bounds}.
Then, $\mathcal{A}$ handles each of the cases in a different manner, 
according to whether $j$ turns out to be left-aligned, right-aligned, skew, or wide.
We consider these cases in a general way that is essentially independent from $\ell$, 
and mainly relies on the type of $j$.
We suppose that $j$ is contained in a fragment $F=([a',b'],[a,b])$, and
we say that $j$ is \emph{extremal}, if $j=a'$.

{\bf Left-aligned index.}
We deal with the case when $j$ is left-aligned in Section~\ref{sect_left_aligned},
whose results are summarized by the following lemma.

\begin{mylemma}
\label{left_aligned}
Suppose that Property $\ell$ ($ 1 \leq \ell \leq 10$) does not hold for some $j \in [m']$
in the annotated fragmentation $(\mathcal{F},U)$, but
all the previous properties hold for each index 
both in $(\mathcal{F},U)$ and in $(\mathcal{F}^{\mathrm{rev}},U^{\mathrm{rev}})$.
If $j$ is left-aligned, then algorithm $\mathcal{A}$ 
can do one of the followings in linear time:
\begin{itemize}
\item produce a necessary set of size at most $2k+1$,
\item produce an independent subproblem,
\item produce an index that is either wide or skew,
\item reject correctly.
\end{itemize}
\end{mylemma}

By Lemma~\ref{left_aligned}, the only case when algorithm $\mathcal{A}$ does not reject or produce an output is 
the case when it produces an index $j'$ that is wide or skew. If this happens, then $\mathcal{A}$ 
branches on those choices of $I_S(j')$ where $j'$ is indeed wide or skew, 
and handles them according to the cases described below.
Note that as a consequence, the maximum number of branches in a step may increase from  $(k+1)^2$ to $2(k+1)^2-1$. 
(This means that we do not treat the branchings on $I_S(j)$ and $I_S(j')$ separately, 
and rather consider it as a single branching with at most $2(k+1)^2-1$ directions.)

Observe that if $j$ is trivial, then it is both left- and right-aligned. 
We treat trivial indices as left-aligned.

{\bf Wide index.}
Suppose that $j$ is wide, i.e. $|I_S(j)|>1$. 
In this case, we can construct a \outputstyle{necessary set} of size $2$.
Recall that, using the arguments of the proof of Lemma~\ref{max_fragments}, 
we can either find a vertex $z \in R^{-1}(r)$ such that if $z \notin S$ then
$\emptyset \neq X_{\beta_S(j)} \cup M^+_r(\beta_S(j)) \subseteq S$, 
or we can find a vertex $w \in R^{-1}(r)$ such that if $w \notin S$ then 
$\emptyset \neq X_{\alpha_S(j)} \cup M^-_r(\alpha_S(j)) \subseteq S$. 
Clearly, $\mathcal{A}$ can output a necessary set of size 2 in both cases.

{\bf Extremal right-aligned or skew cases.}
Assume that $j=a'$ and $j$ is skew or right-aligned.
In these cases, $X_i \cup M^+_r(i) \cup M^-_r(i)$ must be contained in 
$S$ for each $i$ in $[a,\alpha_S(a')-1]$, so in particular, $X_a \cup M^+_r(a) \cup M^-_r(a) \subseteq S$. 
As $X_a \cup M^+_r(a) \cup M^-_r(a) \neq \emptyset$ by Prop.~\ref{notempty}, 
we can construct a \outputstyle{necessary set} 
of size $1$ by taking an arbitrary vertex from this set,  and $\mathcal{A}$ can stop by outputting it.

{\bf Non-extremal skew case.}
Suppose that $j>a'$ and $j$ is skew, meaning that $I_S(j)=[i,i]$ for some $i$ with 
$\leftind{j} < i < \rightind{j}$. 
In this case, we can divide the fragment $F$, or more precisely, we can delete $F$
from the fragmentation $\mathcal{F}$ and add the new fragments 
$([a',j-1],[a,i-1])$ and $([j,b'],[i,b])$.
Note that the newly introduced fragments are non-trivial by the bounds on $i$. 
We also modify $U$ by declaring every trivial index of the fragmentation to be important
(no matter whether it was important or not before).

{\bf Non-extremal right-aligned case.}
Suppose that $j>a'$ and $j$ is right-aligned.  
In this case, we replace $F$ by new fragments $F_1=([a',j-1],[a,\rightind{j}-1])$ and 
$F_2=([j,b'],[\rightind{j},b])$. This yields a fragmentation where $F_1$ is non-trivial and $F_2$ is trivial. 
We refer to this operation as performing a \emph{right split} at $j$. 
If this happens because $j$ violated Property $\ell$ for some $\ell \leq 9$, then 
we set every trivial index (including those contained in $F_2$) to be important, by putting them into $U$. 
In the case when $\ell = 10$, we do not modify $U$, so the trivial indices of $F_2$ will not be important.

The above process either stops by producing an appropriate output, or it ends by providing
an annotated fragmentation that is proper. 
Thanks to the observations of Lemma~\ref{measure}, stating that the properties ensured 
during some step in this process will not be violated later on (except for a few cases), 
we will be able to bound the running time of this process in Section~\ref{sect_running_time},
by proving that the height of the explored search tree is bounded by a function of $k$.
In the remaining steps of the algorithm, the set $U$ will never be modified, and 
the only possible modification of the actual fragmentation will be to perform a right split. 

The following two lemmas capture some useful properties of an arbitrary
annotated fragmentation $(\mathcal{F},U)$ obtained by the algorithm after this point. 
Lemma~\ref{invariant} states facts about an annotated fragmentation obtained from 
a 9-proper annotated fragmentation by applying right splits to it.
Lemma~\ref{split} gives sufficient conditions for the properties of an annotated fragmentation 
to remain true after applying a right split to it.

\begin{mylemma}
\label{invariant}
Let $(\mathcal{F},U)$ be a 9-proper annotated fragmentation whose trivial indices are all important.
Suppose that $\mathcal{F}'$ is obtained by applying an arbitrary number of right splits 
to the fragmentation $\mathcal{F}$. 
Then the followings hold for each $j \in [m']$ that is 
either non-trivial or not important in $(\mathcal{F}',U)$: \\
(1) 
$|M_{r'}^+(j)| = |M_{r}^+(\rightind{j})|$ and 
$|M_{r'}^-(j)| = |M_{r}^-(\rightind{j})|$. \\
(2)
The following holds for every non-trivial or not important $y \neq j$ and $v \in L_{r'}(j,y)$. 
If $(v,w) \in \rightind{P}^+(j)$ for some  $w \in M^+_{r}(\rightind{j})$, 
then $Q_{r}^{\mathrm{right}}(w)=\rightind{y}$.
Similarly, if $(v,w) \in \rightind{P}^-(j)$ for some $w \in M^-_{r}(\rightind{j})$, 
then $Q_{r}^{\mathrm{left}}(w)=\rightind{y}$.
\end{mylemma} 

\begin{proof}
First, we show that the statements of the lemma hold for $(\mathcal{F},U)$.
To see this, recall that each trivial index in $(\mathcal{F},U)$ is important, therefore 
statements (1) and (2) for $(\mathcal{F},U)$ are equivalent to 
Properties 3 and 9 for $(\mathcal{F}^{\mathrm{rev}},U^{\mathrm{rev}})$, respectively.
Since $(\mathcal{F},U)$ is 9-proper, these properties indeed hold for each index in 
$(\mathcal{F}^{\mathrm{rev}},U^{\mathrm{rev}})$.

To see that these statements remain true after applying a sequence of right splits to $(\mathcal{F},U)$, 
we need two simple observations.
First, notice that the value of $\rightind{j}$ for an index $j \in [m']$ does not change in a right split.
Second, the set of non-trivial or not important trivial indices does not change either, since the performed right splits 
do not modify the set $U$ of important trivial indices.
Thus, statements (1) and (2) for some index $j$ have exactly the same meaning 
in $(\mathcal{F}',U)$ as in $(\mathcal{F},U)$. 
This proves the lemma.
\qed
\end{proof}

Given a fragmentation $\mathcal{F}$ for $(T,T',S)$, a fragment $F \in \mathcal{F}$, and  
some $\ell$ ($1 \leq \ell \leq 9$), let $\pi(\mathcal{F},F,\ell)$ be $1$ if 
Property $\ell$ holds for each index  in $F \in \mathcal{F}$, and $0$ otherwise.

\begin{mylemma}
\label{split}
Let $\mathcal{F}'$ be a fragmentation obtained from $\mathcal{F}$ by dividing a fragment $F \in \mathcal{F}$
into fragments $F_1$ and $F_2$ with a right split (with $F_1$ preceding $F_2$). Let $1 \leq \ell \leq 9$. \\
(1)
Suppose $j$ is not contained in $F_2$ and $\ell \neq 8$.
If Property $\ell$ holds for $j$ in $\mathcal{F}$ (or in $\mathcal{F}^{\mathrm{rev}}$),
then Property $\ell$ holds for $j$ in $\mathcal{F}'$ (or in $\mathcal{F}'^{\mathrm{rev}}$) as well. \\
(2)
Suppose $\pi(\mathcal{F},H,\ell)=1$ for a fragment $H$.
If $H \neq F$ then $\pi(\mathcal{F}',H,\ell)=1$, and if $H=F$ then 
$\pi(\mathcal{F}',F_1,\ell)=1$. \\
(3)
Suppose $\pi(\mathcal{F}^{\mathrm{rev}},H^{\mathrm{rev}},\ell)=1$ for a fragment $H \in \mathcal{F}$.
If $H \neq F$ then $\pi(\mathcal{F}'^{\mathrm{rev}},H^{\mathrm{rev}},\ell)=1$, and if $H=F$ then 
$\pi(\mathcal{F}'^{\mathrm{rev}},F_1^{\mathrm{rev}},\ell)=1$.\\
(4)
If  $\pi(\mathcal{F}^{\mathrm{rev}},F^{\mathrm{rev}},\ell)=1$, then
$\pi(\mathcal{F}'^{\mathrm{rev}},F_2^{\mathrm{rev}},\ell)=\pi(\mathcal{F}',F_2,\ell)=1$. \\
(5)
If $(\mathcal{F},U)$ is a proper annotated fragmentation, then so is $(\mathcal{F}',U)$. 
\end{mylemma}

\begin{proof}
To see (1), we need some basic observations. 
First, if $j$ is not contained in $F_2$, then $\leftind{j}$ is the same according to 
$\mathcal{F}'$ as it is in $\mathcal{F}$, and this is also true for $\rightind{j}$. 
Second, the set of non-trivial indices in $\mathcal{F}'$ is a subset of the non-trivial indices in $\mathcal{F}$.
These conditions directly imply (1) for each case where $\ell \notin \{6,7,8\}$, 
using only the definitions of these properties.

Now, observe that if a vertex in $R^{-1}(r')$ is contained in $\mathcal{L}(H'_1,H'_2)$, 
for some $H'_1$ and $H'_2$ in the fragmentation $\mathcal{F}'$, then
it is contained in $\mathcal{L}(H_1,H_2)$ for some $H_1$ and $H_2$ in $\mathcal{F}$ as well.
Clearly, the analogous fact holds also for the sets 
$\mathcal{R}(H'_1,H'_2)$ and $\mathcal{X}(H'_1,H'_2)$ for some $H'_1$ and $H'_2$.
Thus, if $j$ violates Property 6 or 7 in $\mathcal{F}'$, then it also violates it in $\mathcal{F}$, 
proving (1). 

Clearly, (2) and (3) follow directly from (1) in the cases where $\ell \neq 8$. 
For the case $\ell=8$, observe that $\pi(\mathcal{F},H,8)=1$ implies 
$\mathcal{R}(H_0,H) = \emptyset$ for every $H_0$ preceding $H$.
Hence, the requirements of statement (2) follow immediately.
The analogous claim in the reversed instance shows that (3) also holds for $\ell=8$.

To prove (4), let $j$ be contained in $F_2$.
Note that Properties $3, 4, \dots, 9$ vacuously hold for $j$ in $\mathcal{F}'$, because $F_2$ is trivial.
Using that $\leftind{j} = \rightind{j}$ and the definitions of Properties 1 and 2, 
we get that if one of these two properties holds for $j^{\mathrm{rev}}$ in $\mathcal{F}^{\mathrm{rev}}$, 
then it holds for $j$ in $\mathcal{F}'$ as well. 
Finally, observe that if Property $\ell$ holds for some trivial index $j$ in $\mathcal{F}'$, 
then it trivially holds for $j^{\mathrm{rev}}$ in $\mathcal{F}'^{\mathrm{rev}}$, proving (4).

To prove (5), assume that $(\mathcal{F},U)$ is proper. By (2), (3), and (4), we immediately obtain that
$(\mathcal{F}',U)$ is 9-proper, so we only have to verify that Property 10 holds. But since 
the set $U$ of important trivial indices is the same in both fragmentations, and $\leftind{j}$ is the same in 
$(\mathcal{F}',U)$ as in $(\mathcal{F},U)$ for each non-trivial or important trivial index $j$ of $\mathcal{F}'$, 
Property 10 also remains true for each index.
\qed
\end{proof}
Given a proper annotated fragmentation $(\mathcal{F},U)$, 
algorithm $\mathcal{A}$ makes use of Lemma~\ref{right_aligned} below.

To state Lemma~\ref{right_aligned}, we need one more definition: 
we call an index $j$ \emph{right-constrained}, if $j$ is contained in a non-trivial
fragment $F$, and there exists a vertex $v \in M^+_{r'}(j)$ such that $\phi_S(v) \in M^+_r(\rightind{j})$.
Note that this definition depends on the solution $S$.
Algorithm $\mathcal{A}$ maintains a set $W$ to store indices which turn out to be right-constrained.
We will show that if $j$ is right-constrained, then $j+1$ must be right-aligned and thus a right split can be performed, 
except for the case when $j$ is the last index of the fragment.
We will denote by $Z_{\mathcal{F}}$ the set of indices $j$ for which $j$ is the last index of some 
non-trivial fragment in $\mathcal{F}$.
If no confusion arises, we will drop the subscript $\mathcal{F}$.

Lemma~\ref{right_aligned} gives sufficient conditions for $\mathcal{A}$ to 
do some of the followings.
\begin{itemize} 
\item Find out that some non-trivial index $j$ is right-aligned. 
In this case, $\mathcal{A}$ performs a right split at $j$ in the actual fragmentation.
\item Find out that some index $j$ is right-constrained, and put it into $W$.
\item Reject, or stop by outputting a necessary set of size 1.
\end{itemize} 
The algorithm applies Lemma~\ref{right_aligned} repeatedly, until it either stops or 
it finds that none of the conditions given in the lemma apply.

\begin{table}[t]
\begin{center}
    \psfrag{U}[][]{$\in U$}
    \psfrag{nU}[][]{$\notin U$}
    \psfrag{nZ}[][]{$\notin Z$}
    \psfrag{(i)}[][]{(i)}
    \psfrag{(ii)}[][]{(ii)}
    \psfrag{(iii)}[][]{(iii)}
    \psfrag{(iv)}[][]{(iv)}	
    \psfrag{(v)}[][]{(v)}
    \psfrag{triv}[][]{trivial}
	\psfrag{nontriv}[][]{non-trivial}
    \psfrag{W}[][]{$\in W$}
    \psfrag{ax}[l][l]{$a$}
    \psfrag{bx}[l][l]{$b$}
\includegraphics[scale=0.5]{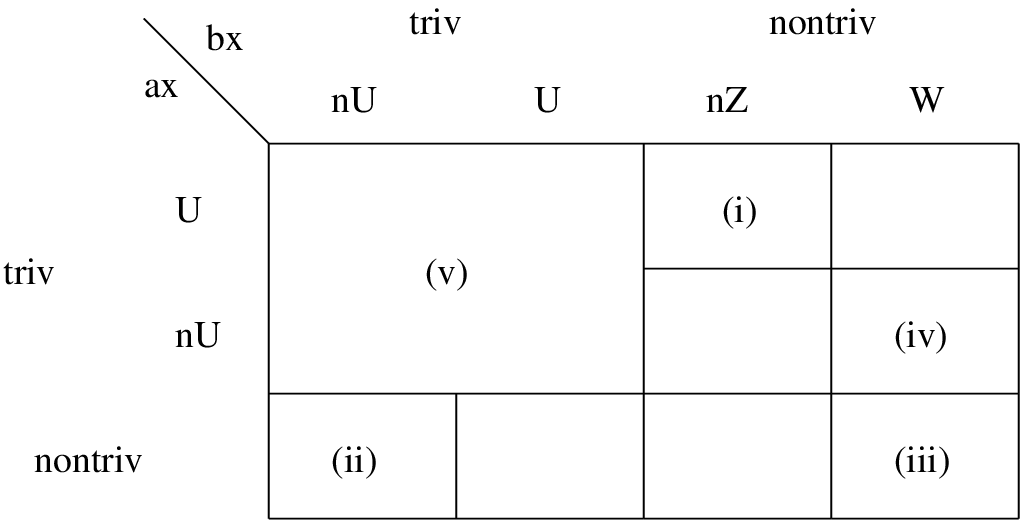}
\end{center}
\caption{
The cases of Lemma~\ref{right_aligned}, where $U$ denotes important indices, $Z$ denotes 
the last indices of non-trivial fragments, and $W$ denotes right-constrained indices.
}
\label{fig_table_cases2}
\end{table}

\begin{mylemma}
\label{right_aligned}
Let $(\mathcal{F},U)$ be a proper annotated fragmentation for $(T,T',S)$ obtained by algorithm $\mathcal{A}$, 
and let $a,b \in [m']$ with $a<b$. \\
(i) If $a$ is trivial but not important, $b$ is non-trivial, $b \notin Z$ and $L_{r'}(a,b) \neq \emptyset$, 
then $b+1$ is right-aligned. \\
(ii) If $a$ is non-trivial, $b$ is trivial but not important, and $L_{r'}(a,b) \neq \emptyset$, 
then $a$ is right-constrained. Also, if $a \notin Z$ then $a+1$ is right-aligned.  \\
(iii) If $a$ is non-trivial, $b$ is right-constrained, and $L_{r'}(a,b) \neq \emptyset$, 
then $a$ is right-constrained. Also, if $a \notin Z$ then $a+1$ is right-aligned. \\
(iv) If $a \in U$, $b$ is right-constrained, and 
$|L_{r'}(a,b)| \neq |L_r(\leftind{a},\rightind{b})|$, 
then algorithm $\mathcal{A}$ can either \outputstyle{reject} or output a \outputstyle{necessary set} of size 1. \\
(v) If $a$ and $b$ are trivial and 
$|L_{r'}(a,b)| \neq |L_r(\leftind{a},\leftind{b})|$, 
then algorithm $\mathcal{A}$ can either \outputstyle{reject} or output a \outputstyle{necessary set} of size 1.
\end{mylemma}

\begin{proof}
Let $A$ and $B$ be the fragments in $\mathcal{F}$ containing $a$ and $b$, respectively.
Recall that the conditions of Lemma~\ref{invariant} are true for every 
proper annotated fragmentation for $(T,T',S)$ obtained by algorithm $\mathcal{A}$, in particular for $\mathcal{F}$.

First, suppose that the conditions of (i) hold. 
As $a$ is a not important trivial index in $(\mathcal{F},U)$, 
claim (1) of Lemma~\ref{invariant} implies $|M^+_{r'}(a)| = |M^+_{r}(\rightind{a})|$.
Let $v \in L_{r'}(a,b)$ and $(v,w) \in \rightind{P}^+(a)$. 
As $a$ is trivial, it is right-aligned as well, so we obtain $\phi_S(v)=w$ by Prop.~\ref{pairs}.
Using claim (2) of Lemma~\ref{invariant} for $a$, 
we obtain $Q_{r}^{\mathrm{right}}(w)=\rightind{b}$.
By $\phi_S(v)=w$, this implies $\beta_S(b) \geq \rightind{b}$. Thus, 
$\alpha_S(b+1) \geq \rightind{b}+1 = \rightind{(b+1)}$, showing that $b+1$ is indeed right-aligned.

The proof of (ii) is analogous with the proof of (i). By exchanging the roles of $a$ and $b$, 
we obtain $Q_{r}(\phi_S(v))=[\rightind{a},\rightind{b}]$ for some $v \in L_{r'}(a,b)$ 
in a straightforward way. Observe that this proves $a$ to be right-constrained.
If $a \notin Z$, then $A$ contains $a+1$ as well. 
Hence, from $\beta_S(a) \geq \rightind{a}$ we get
$\alpha_S(a+1) \geq \rightind{a}+1=\rightind{(a+1)}$. Thus, $a+1$ is right-aligned.

To see (iii) and (iv), suppose that $b$ is right-constrained and $u^+$ is a vertex
in $M^+_{r'}(b)$ with $\phi_S(u^+) \in M^+_r(\rightind{b})$.
Suppose $u^- \in M^-_{r'}(b)$ for some $u^-$. 
Clearly, $u^- u^+$ is an edge in $G'$, so $\phi_S(u^-)$ and $\phi_S(u^+)$ must be adjacent in $G$ as well.
By $\phi_S(u^+) \in M^+_r(\rightind{b})$ we get 
$Q_r^{\mathrm{right}}(\phi_S(u^-)) \geq \rightind{b}$. By Prop.~\ref{limit_indices}, 
this implies $Q_r^{\mathrm{right}}(\phi_S(u^-)) = \rightind{b}$.
Using claim (1) of Lemma~\ref{invariant} for $b$, we get $M^-_r(\rightind{b}) = \phi_S(M^-_{r'}(b))$. 
Letting $v \in L_{r'}(a,b)$ and $(v,w) \in \rightind{P}^-(b)$ 
we obtain $\phi_S(v)=w$ as in Prop.~\ref{pairs}.

To prove (iii), assume also that $a$ is non-trivial.
By claim (2) of Lemma~\ref{invariant} for $b$, this implies  
$Q_{r}(w)=[\rightind{a},\rightind{b}]$.
This means that $a$ is right-constrained. 
From $a \notin Z$ we again obtain that $a+1$ is right-aligned, 
using the arguments of the proof of (ii).

To see (iv), assume $a \in U$.
Using Prop.~\ref{limit_indices}, $I_S(a)=[\leftind{a},\leftind{a}]$, and the above mentioned arguments, we get that 
$\phi_S(L_{r'}(a,b))= L_r(\leftind{a},\rightind{b}) \setminus S$.
Therefore, if $|L_{r'}(a,b)| > |L_r(\leftind{a},\rightind{b})|$ then $\mathcal{A}$ can reject,
and if $|L_{r'}(a,b)| < |L_r(\leftind{a},\rightind{b})|$ then it can output a necessary set of size 1 
by outputting $\{s\}$ for an arbitrary $s \in L_r(\leftind{a},\rightind{b})$.

Finally, assume that the conditions of (v) hold for $a$ and $b$.
As both of them are left-aligned, Prop.~\ref{limit_indices} implies 
$\phi_S(L_{r'}(a,b))= L_r(\leftind{a},\leftind{b}) \setminus S$.
Hence, $\mathcal{A}$ can proceed essentially the same way as in the previous case.
\qed
\end{proof}

After applying Lemma~\ref{right_aligned} repeatedly, algorithm $\mathcal{A}$ either stops 
by outputting 'No' or a necessary set of size 1, or it finds that none of the conditions (i)-(v) 
of Lemma~\ref{right_aligned} holds. 
Observe that each $w \in W$ must be the last index of the fragment containing $w$, 
since whenever $\mathcal{A}$ puts some index $j \notin Z$ into $W$, then it also sets $j+1$ right-aligned, 
resulting in a right split.

Let $(\mathcal{F},U)$ be the final annotated fragmentation obtained.
Note that the algorithm does not modify the set $U$ of important trivial indices when applying Lemma~\ref{right_aligned}, 
and it can only modify the actual fragmentation by performing a right split.
Thus, statements (1) and (2) of Lemma~\ref{invariant} remain true for $(\mathcal{F},U)$.
By claim (5) of Lemma~\ref{split}, we obtain that $(\mathcal{F},U)$ remains proper as well.
Making use of these lemmas, Lemma~\ref{isomorphism} yields that $\mathcal{A}$ can find a solution in linear time. 
This finishes the description of algorithm $\mathcal{A}$.

\begin{mylemma} 
\label{isomorphism}
Let $(\mathcal{F},U)$ be a proper annotated fragmentation for $(T,T',S)$ obtained by algorithm $\mathcal{A}$. 
If none of the conditions (i)-(v) of Lemma~\ref{right_aligned} holds, 
then $\mathcal{A}$ can produce a solution in linear time.
\end{mylemma}

\begin{proof}
We construct an isomorphism $\phi$ from $G'$ to an induced subgraph of $G$. 
Our basic approach is to treat almost all indices as if they were left-aligned, except for the vertices of $W$. 
Recall that $Z$ denotes the set of indices that are the last index of some non-trivial fragment, and 
$W \subseteq Z$ is the set of right-constrained vertices that $\mathcal{A}$ has found using Lemma~\ref{right_aligned}.
Let $N$ contain those non-trivial indices in $[m']$ that are not in $W$.
Also, let $Y$ denote the set of trivial indices in $[m']$ that are not important.
Clearly, $[m']=N \cup W \cup U \cup Y$.

As Property 1 holds for each index both in $\mathcal{F}$ and in $\mathcal{F}^{\mathrm{rev}}$, 
we know that there is an isomorphism 
$\phi_j^{\mathrm{left}}$ from $G'[X'_j]$ to $G[X_{\leftind{j}}]$ and an isomorphism 
$\phi_j^{\mathrm{right}}$ from $G'[X'_j]$ to $G[X_{\rightind{j}}]$ 
for each $j \in [m']$.
By \cite{booth-lueker-interval-isomorphism}, $\phi_j^{\mathrm{left}}$ and $\phi_j^{\mathrm{right}}$ 
can be found in time linear in $|X'_j|$.  We set $\phi(x)= \phi_j^{\mathrm{left}}(x)$ for each 
$x \in X'_j$ where $j \in N \cup U \cup Y$, and we set 
$\phi(x)= \phi_j^{\mathrm{right}}(x)$ for each $x \in X'_j$ where $j \in W$.
Our aim is to extend $\phi$ on vertices of $R^{-1}(r')$ such that it remains an isomorphism.
To this end, we set a variable $\Delta(j)$ for each $j \in [m']$, by letting $\Delta(j)=\leftind{j}$ if 
$j \in N \cup U \cup Y$, and $\Delta(j)=\rightind{j}$ if $j \in W$.
Clearly, $\Delta(j)=\leftind{j}=\rightind{j}$ if $j \in U \cup Y$.

The purpose of the notation $\Delta$ is the following.
Given some $a<b$, in almost every case we will let $\phi$ map vertices of $L_{r'}(a,b)$ 
bijectively to vertices of $L_r(\Delta(a),\Delta(b))$.
This can be done if $a$ and $b$ \emph{match}, meaning that $|L_{r'}(a,b)|=|L_r(\Delta(a),\Delta(b))|$.
However, there remain cases where $a$ and $b$ do not match. Each 
such case will fulfill one of the following conditions:
\begin{itemize}
\item[(A)] 
$a \in W$ and  $|L_{r'}(a,b)|=|L_r(\leftind{a},\Delta(b))|$.
\\
In this case, we let $\phi$ map $L_{r'}(a,b)$ bijectively to $L_r(\leftind{a},\Delta(b))$.
Clearly, the block $[\leftind{a},\Delta(b)]$ contains $\Delta(a)=\rightind{a}$. 
Thus, vertices of $\phi(L_{r'}(a,b))$ will be adjacent to vertices of 
$M_r(\Delta(a)) \cup X_{\Delta(a)}$.
Since either $a-1 \in N$ or $a-1$ is not in the same fragment as $a$, 
we obtain $\Delta(a-1) < \leftind{a}$.
Hence, vertices of $\phi(L_{r'}(a,b))$ will not be adjacent 
to vertices of $M^-_r(\Delta(a-1)) \cup X_{\Delta(a-1)}$. 
\item[(B)] 
$b \in Z$ and $|L_{r'}(a,b)|=|L_r(\Delta(a),\rightind{b})|$.  \\
In this case, we will let $\phi$ map $L_{r'}(a,b)$ bijectively to $L_r(\Delta(a),\rightind{b})$.
Again, $[\Delta(a),\rightind{b}]$ contains $\Delta(b)=\leftind{b}$, so
vertices of $\phi(L_{r'}(a,b))$ will be adjacent to vertices of 
$M_r(\Delta(b)) \cup X_{\Delta(b)}$.
Also, by $b \in Z$ we obtain $\Delta(b+1) \geq \leftind{(b+1)} > \rightind{b}$,
so the vertices of $\phi(L_{r'}(a,b))$ 
will not be adjacent to vertices of $M^+_r(\Delta(b+1)) \cup X_{\Delta(b+1)}$.
\end{itemize}

It is easy to see that the above construction ensures that
vertices of $\phi(L_{r'}(a_1,b_1))$ and $\phi(L_{r'}(a_2,b_2))$ 
are neighboring if and only if 	$L_{r'}(a_1,b_1)$ and $L_{r'}(a_2,b_2)$ are neighboring. 
(In particular, it is not possible that 
some vertex of $\phi(M^-_{r'}(j))$ ends in $\leftind{j}$ but some 
vertex of $\phi(M^+_{r'}(j))$ starts in $\rightind{j}$. )

It remains to show that if $a,b \in [m']$ and $a<b$, then they either match, or $L_{r'}(a,b)=\emptyset$, 
or one of the conditions (A) or (B) hold. First, let us show those cases where $a$ and $b$ match.

\begin{itemize}
\item[(a)]
If $a,b \in N$, then $|L_{r'}(a,b)|=|L_r(\leftind{a},\leftind{b})|=|L_r(\Delta(a),\Delta(b))|$ 
because Properties 3 and 9 hold for $b$ in $(\mathcal{F},U)$.
\item[(b)]
If either $a \in N$ and $b \in U$ or vice versa, 
then $|L_{r'}(a,b)|=|L_r(\leftind{a},\leftind{b})|=|L_r(\Delta(a),\Delta(b))|$, 
since Property 10 holds for $a$ and $b$ in $(\mathcal{F},U)$.
\item[(c)]
If $a, b \in W \cup Y$ then Lemma~\ref{invariant} for $b$
guarantees $|L_{r'}(a,b)|=|L_r(\rightind{a},\rightind{b})|$.
Using that $\rightind{a} = \Delta(a)$ and $\rightind{b} = \Delta(b)$ hold if $a,b \in W \cup Y$, 
this shows $|L_{r'}(a,b)|=|L_r(\Delta(a),\Delta(b))|$.
\item[(d)] 
If $a \in U$ and $b \in W$ then $|L_{r'}(a,b)|=|L_r(\leftind{a},\rightind{b})|$, 
since the conditions of (iv) in Lemma~\ref{right_aligned} do not apply.
By $\leftind{a} = \Delta(a)$ and $\rightind{b} = \Delta(b)$, this means 
that $|L_{r'}(a,b)|=|L_r(\Delta(a),\Delta(b))|$.
\item[(e)] 
If $a,b \in U \cup Y$ then $|L_{r'}(a,b)|=|L_r(\leftind{a},\leftind{b})|=|L_r(\Delta(a),\Delta(b))|$, as
the conditions of (v) in Lemma~\ref{right_aligned} do not apply.
\end{itemize}

Next, we show $L_{r'}(a,b)= \emptyset$ for some $a$ and $b$ with $a<b$.
First, if $a \in Y$ and $b \in N \setminus Z$, then this holds because 
(i) of Lemma~\ref{right_aligned} is not applicable. 
Also, $L_{r'}(a,b)= \emptyset$ must be true if $a \in N$ and $b \in Y$, as otherwise
(ii) of  Lemma~\ref{right_aligned} would apply.
Third, $L_{r'}(a,b)= \emptyset$ if $a \in N$ and $b \in W$, since 
(iii) of Lemma~\ref{right_aligned} does not apply.

\begin{table}[t]
\begin{center}
    \psfrag{a}[][]{a}
    \psfrag{b}[][]{b}
    \psfrag{c}[][]{c}
    \psfrag{d}[][]{d}
    \psfrag{e}[][]{e}
    \psfrag{f}[][]{f}
    \psfrag{g}[][]{g}
    \psfrag{h}[][]{h}
    \psfrag{ax}[l][l]{$a$}
    \psfrag{bx}[l][l]{$b$}
    \psfrag{N}[][]{$N$}
    \psfrag{Y}[][]{$Y$}
    \psfrag{W}[][]{$W$}
    \psfrag{U}[][]{$U$}
    \psfrag{--}[][]{$-$}
    \psfrag{N es L}[][]{$\overbrace{\, N \! \setminus \! Z \; \; \quad Z\,}^{\displaystyle N}$}
\includegraphics[scale=0.5]{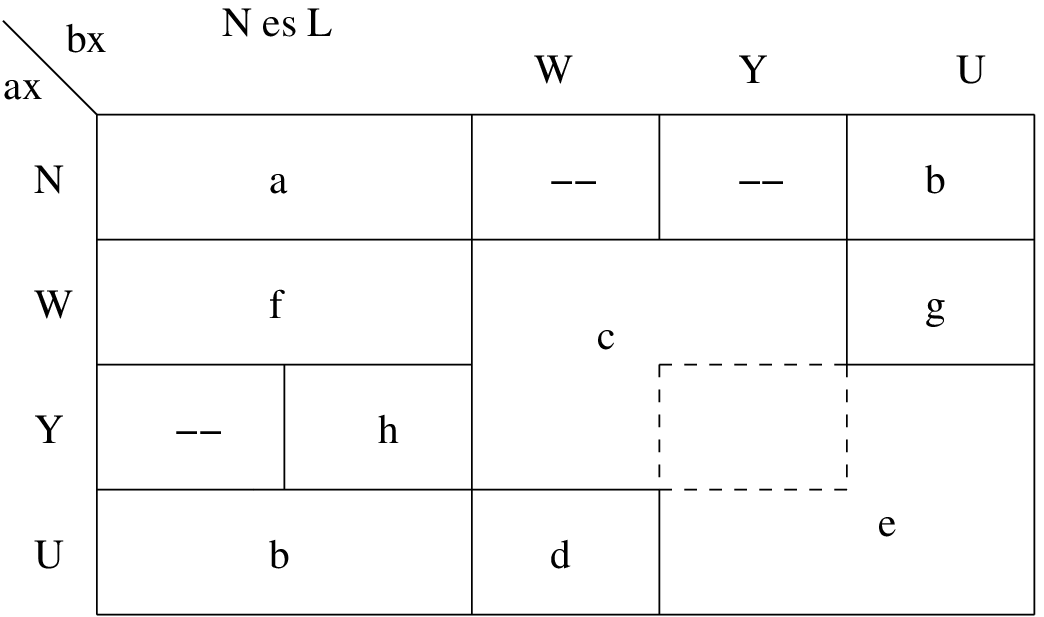}
\end{center}
\caption{
The cases of the proof for Lemma~\ref{isomorphism}.
}
\label{fig_table_cases}
\end{table}

We complete the proof by showing (A) or (B) for all remaining cases.
\begin{itemize}
\item[(f)]
If $a \in W$ and $b \in N$, then (A) holds, because
by Properties 3 and 9 for $b$ in $(\mathcal{F},U)$, we obtain $|L_{r'}(a,b)|=|L_r(\leftind{a},\leftind{b})|$.
\item[(g)]
If $a \in W$ and $b \in U$, then we have $|L_{r'}(a,b)|=|L_r(\leftind{a},\leftind{b})|$, 
since Property 10 holds for $a$ in $(\mathcal{F},U)$. 
Hence, this case also fulfills (A).
\item[(h)]
If $a \in Y$ and $b \in Z$, then $|L_{r'}(a,b)|=|L_r(\rightind{a},\rightind{b})|$
by (1) and (2) of Lemma~\ref{invariant} for $b$. By $\leftind{a}=\rightind{a}$, (B) holds.
\end{itemize}

Table~\ref{fig_table_cases} shows that we considered every case. 
Thus, 
$\phi$ is an isomorphism from $G'$ to an induced subgraph of $G$, so $\mathcal{A}$ 
can output $V(G) \setminus \phi(V(G'))$ as a solution. 
It is also clear that this takes linear time. 
\qed
\end{proof}

\subsection{Running time analysis for algorithm $\mathcal{A}$}
\label{sect_running_time}

Let $N(\mathcal{F})$ denote the set of non-trivial fragments in $\mathcal{F}$.
We define the \emph{measure} $\mu(\mathcal{F})$ of a given fragmentation 
$\mathcal{F}$ for $(T,T',S)$ as follows:
$$ \mu(\mathcal{F}) =  
\sum_{
\substack{
F \in N(\mathcal{F}) \\
1 \leq \ell \leq 9}} 
\pi(\mathcal{F},F,\ell) 
+ 
\sum_{
\substack{
F \in N(\mathcal{F}^{\mathrm{rev}}) \\
1 \leq \ell \leq 9}} 
\pi(\mathcal{F}^{\mathrm{rev}},F,\ell).$$
Note that $\mu(\mathcal{F}) = \mu(\mathcal{F}^{\mathrm{rev}})$ is trivial, 
so reversing a fragmentation does not change its measure. Recall 
that $\mathcal{F}^{\mathrm{rev}}$ is a fragmentation for $(T^{\mathrm{rev}},T'^{\mathrm{rev}},S)$.

\begin{mylemma}
\label{measure}
Let $\mathcal{F}_1, \dots, \mathcal{F}_t, \mathcal{F}_{t+1}$ be a series a fragmentations such that for each $i \in [t]$
algorithm $\mathcal{A}$  obtains $\mathcal{F}_{i+1}$ from $\mathcal{F}_i$ by applying a right split at an 
index $j_i$ violating Property $\ell_i$ in $\mathcal{F}_i$. 
Let $H_i$ denote the fragment of $\mathcal{F}_i$ containing $j_i$.
\\
(1) $\mu(\mathcal{F}_{i+1}) \geq \mu(\mathcal{F}_i)$ for each $i \in [t]$.  
If $\ell_i \neq 8$, then $\mu(\mathcal{F}_{i+1}) > \mu(\mathcal{F}_i)$ also holds. 
\\
(2)
If $\mu(\mathcal{F}_1)=\mu(\mathcal{F}_{t+1})$, then $\ell_i = 8$ for every $i \in [t]$,
and $H_i$ contains every index in $H_{i+1}$ for each $i \in [t]$. \\
(3)
If $\mu(\mathcal{F}_1)=\mu(\mathcal{F}_{t+1})$, then $t \leq k$.
\end{mylemma}

\begin{proof}
To prove (1), observe that claims (2) and (3) of Lemma~\ref{split} imply directly 
that $\mu(\mathcal{F}_{i+1}) \geq \mu(\mathcal{F}_i)$.
Let $H'_i$ be the non-trivial fragment obtained from $H_i$ after the right split at $j_i$. 
Now, by the choice of $j_i$, Property $\ell_i$ is violated by $j_i$ in $\mathcal{F}_i$, 
but is not violated by any index $j'$ preceding $j_i$ in $\mathcal{F}_i$. 
In all cases where $\ell_i \neq 8$, claim (1) of Lemma~\ref{split} implies that the 
indices preceding $j_i$ cannot violate Property $\ell_i$ in $\mathcal{F}_{i+1}$, 
yielding $\pi(\mathcal{F}_i,H_i,\ell_i)=0$ but $\pi(\mathcal{F}_i,H'_i,\ell_i)=1$.
Considering claims (2) and (3) of Lemma~\ref{split} again, (1) follows.

Observe that if $\mu(\mathcal{F}_1)=\mu(\mathcal{F}_t)$, then $\ell_i =8$ for every $i \in [t]$
follows directly from the above discussion.
Suppose that $H_i$ is a counterexample for (2), meaning that $H_i$ does not contain 
the indices of $H_{i+1}$. Since $\mathcal{F}_{i+1}$ is obtained from $\mathcal{F}_i$ by a right split, 
this can only happen if $H_{i+1}$ is a non-trivial fragment of $\mathcal{F}_i$ different from $H_i$. 
Recall that a fragment $B$ contains some index violating Property 7, 
if and only if $\mathcal{R}(A,B) \neq \emptyset$ holds for some fragment $A$ in the fragmentation. 
Hence, $\pi(\mathcal{F}_{i+1},H_{i+1},8)=0$ implies $\pi(\mathcal{F}_{i},H_{i+1},8)=0$.

Since the algorithm always chooses the first index violating some property to branch on, 
$j_i$ must be the smallest index that is LR-critical for some pair of fragments in $\mathcal{F}_i$. 
Therefore, $H_i$ must precede $H_{i+1}$. 
But now, the choice of $j_{i+1}$ indicates $\pi(\mathcal{F}_{i+1},H'_i,8)=1$, 
where $H'_i$ is the non-trivial fragment of $\mathcal{F}_{i+1}$
obtained by splitting $H_i$ at $j_i$ in $\mathcal{F}_i$.
Together with $\pi(\mathcal{F}_i,H_i,8)=0$ and statements (2) and (3) of Lemma~\ref{split}, 
this shows $\mu(\mathcal{F}_{i+1}) > \mu(\mathcal{F}_i)$, a contradiction. 

To prove (3), note that by the claim proven above, $H_1 \in \mathcal{F}_1$ contains every $j_i$.
Let $\mathcal{P}$ denote the set of non-trivial fragments in $\mathcal{F}_1$ preceding $H$.
By the construction of the fragmentations $\mathcal{F}_i$, each fragment in $\mathcal{P}$
is a non-trivial fragment of $\mathcal{F}_i$ as well, preceding $H_i$.
We denote by $\mathcal{P}_{\mathcal{R},i}$ those fragments $F$ in $\mathcal{P}$ for which 
$\mathcal{R}(F,H_i) \neq \emptyset$ holds in $\mathcal{F}_i$.
Since $j_i$ is LR-critical for some pair of fragments in $\mathcal{F}_i$, 
we get $\mathcal{P}_{\mathcal{R},i} \neq \emptyset$ for any $i \in [t]$. 
Note also $\mathcal{P}_{\mathcal{R},i+1} \subseteq \mathcal{P}_{\mathcal{R},i}$. 

For some $F \in \mathcal{P}_{\mathcal{R},i}$, then we define $d^{i}(F)$ as follows. 
If $\mathcal{L}(F,H_i) = \emptyset$ in $\mathcal{F}_i$, then 
let $y^L_i$ be the first index contained in $F$ minus one, 
otherwise let $y^L_i$ have the value of $L^{\mathrm{max}}(F,H_i)$ in $\mathcal{F}_i$. 
Also, let $y^R_i$ be the value of $R^{\mathrm{min}}(F,H_i)$ in  $\mathcal{F}_i$. 
We set $d^{i}(F) = y^R_i - y^L_i$.
Let $A \in \mathcal{P}_{\mathcal{R},i+1} \cap \mathcal{P}_{\mathcal{R},i}$ 
be a non-trivial fragment such that $j_i$ is LR-critical for $(A,H_i)$. 
We show  $d^{i+1}(A) > d^{i}(A)$.
First note that neither $L^{\mathrm{max}}(A,H_{i+1}) > L^{\mathrm{max}}(A,H_i)$ 
nor $R^{\mathrm{min}}(A,H_{i+1}) < R^{\mathrm{min}}(A,H_i)$ is possible, since 
$\mathcal{L}(A,H_{i+1}) \subseteq \mathcal{L}(A,H_i)$ and 
$\mathcal{R}(A,H_{i+1}) \subseteq \mathcal{R}(A,H_i)$ always hold. 
This implies $y^L_{i+1} \leq y^L_{i}$ and $y^R_{i+1} \geq y^R_i$.

Clearly, $j_i$ is either L-critical or R-critical for $(A,H_i)$.
First, let us assume that $j_i$ is L-critical for $(A,H_i)$. 
Observe that the definition of L-criticality
implies that for any vertex $v$ starting at $L^{\mathrm{max}}(A,H_i)$ and contained in  
$\mathcal{L}(A,H_i)$ in $\mathcal{F}_i$,  we know $Q^{\mathrm{right}}_{r'}(v) \geq j_i$. 
Since $\mathcal{F}_{i+1}$ is obtained by performing the right split at $j_i$, 
every index of $H_{i+1}$ precedes $j_i$, implying that such a $v$ cannot be contained in 
$\mathcal{L}(A,H_{i+1})$ in $\mathcal{F}_{i+1}$. 
Thus, $L^{\mathrm{max}}(A,H_{i+1}) \neq L^{\mathrm{max}}(A,H_i)$, from which $y^L_{i+1} < y^L_{i}$ follows. 
Therefore, we have $d^{i+1}(A) > d^{i}(A)$.

Second, let us assume that $j_i$ is R-critical for $(A,H_i)$. By the definition of R-criticality, 
for any  vertex $v$ starting at $R^{\mathrm{min}}(A,H_i)$ 
and contained in $\mathcal{R}(A,H_i)$ in $\mathcal{F}_i$, 
we know  $Q^{\mathrm{right}}_{r'}(v) \geq j_i$. Again, we know that every index of $H_{i+1}$ precedes $j_i$. 
From this, we have that $v$ cannot be contained in $\mathcal{R}(A,H_{i+1})$ in $\mathcal{F}_{i+1}$, 
implying $y^R_{i+1} > y^R_i$. Therefore, we have $d^{i+1}(A) > d^{i}(A)$ in this case as well.

Now, we claim that $1 \leq d^{i}(A) \leq \sigma(A)$ for any $A \in \mathcal{P}_{\mathcal{R},i}$.
First, it is clear that for any $\ell<8$, Property $\ell$ holds for each index both in $\mathcal{F}_i$ 
and in the reversed fragmentation $\mathcal{F}_i^{\mathrm{rev}}$,
as otherwise the algorithm would branch on an index violating Property $\ell$.  
Thus, $L^{\mathrm{max}}(A,H_i) \geq R^{\mathrm{min}}(A,H_i)$ cannot happen, 
as this would mean that there is a conflict-inducing index in $H_i$ for $(A,H_i)$, violating Property 7.
This directly implies $1 \leq d^{i}(A)$.

On the other hand, assume $d^{i}(A)= y^R_i - y^L_i > \sigma(A)$. 
This implies that $h=y^R_i - \sigma(A)$ is contained in $A$, but no vertx of 
$\mathcal{L}(A,H_i) \cup \mathcal{R}(A,H_i)$ starts in $h$.
However, by Properties 3 and 5  for $y^R_i$, we know that some vertex in 
$M^+_r(\leftind{(y^R_i)})=M^+_r(\rightind{h})$ ends in $H_i$. 
Using these properties for $h^{\mathrm{rev}}$ in the reversed instance, 
we obtain that some arc $v$ in $M^+_{r'}(h)$ must also end in $H_i$.
By Property 5 for $h$, $v$ must be contained in one of the sets 
$\mathcal{L}(A,H_i)$, $\mathcal{R}(A,H_i)$, $\mathcal{X}(A,H_i)$. 
But $y^L_i<h<y^R_i$, so we obtain $v \in \mathcal{X}(A,H_i)$. 
Therefore, some position in $H_i$ violates Property 6, a contradiction. 
This proves $1 \leq d^{i}(A) \leq \sigma(A)$.

Now, observe that for any $i \in [t]$, $j_i$ is LR-critical for some $(A,H_i)$ with $A \in \mathcal{P}_{\mathcal{R},i}$. If 
$A \in  \mathcal{P}_{\mathcal{R},i+1}$ as well, then $d^{i+1}(A) > d^i(A)$. By our bounds on $d^i(A)$, this yields that 
there can be at most $\sigma(A)$ indices $i$ where $j_i$ is LR-critical for $(A,H_i)$. 
(Here we also used that $d^{i}(A)$ cannot decrease.)
This clearly implies $t \leq \sum_{F \in \mathcal{P}} \sigma(F) = \delta(H_1)$. 

To finish the proof, we show $\delta(H_1) \leq k$. 
Let $b'$ be the last index preceding the indices in $H_1$, and let $b=b'+\delta(H_1)$. 
Recall that $\phi_S (B^+_{r'}(1,b')) = B^+_r(1,b) \setminus S$ by Prop.~\ref{limit_indices}.
Using that Properties 1 and 3 hold for every index in $[m']$ and that 
$B^+_r(i) \neq \emptyset$ by Prop.~\ref{notempty} for any $i \in [m]$, we obtain
$$
|B^+_{r'}(1,b')| + k \geq
|B^+_r(1,b)| =
\bigg| \bigcup_{1 \leq j \leq b'} B^+_r(\leftind{j}) \bigg| + 
\bigg| \bigcup_{
\substack{
1 \leq i \leq b, \\
i \notin \{\leftind{j} : 1 \leq j \leq b'\}
}} B^+_r(i) \bigg|
$$
$$
\geq
|B^+_{r'}(1,b')| + \sum_{F \in \mathcal{P}} \sigma(F)
=
|B^+_{r'}(1,b')| +\delta(H_1).
$$
This shows $k \geq \delta(H_1)$, proving the lemma. 
\qed
\end{proof}

Now, we can state the key properties of algorithm $\mathcal{A}$, which prove Theorem~\ref{FPT}.

\begin{mylemma} 
\label{algorithm}
Given an input $(G',G)$ where $|V(G')|=n$ and $|V(G)|=n+k$, 
algorithm $\mathcal{A}$ either produces a reduced input in $O(n)$ time, 
or branches into at most $f(k)$ directions for some function $f$
such that in each branch it either correctly refuses the instance, 
or outputs an independent subproblem or a necessary set of size at most $2k+1$.
Moreover, each branch takes $O(n)$ time.
\end{mylemma}

\begin{proof}
Let us overview the steps of algorithm $\mathcal{A}$.
First, it  tries to apply the reduction rules described in Sect.~\ref{reductions}.
In this phase, it either outputs a reduced input in linear time, 
or it may branch into at most $(4k+1) 2^{4k} (k(7k/2+8)+1)= 2^{O(k)}$ branches.
In each branch it either correctly outputs 'No', outputs a necessary set of size at most 2, 
or outputs an independent subproblem having parameter at most $k-1$ but at least $1$.
These steps can be done in linear time, as argued in Sect.~\ref{reductions}.

If none of the reductions in Sect.~\ref{reductions} can be applied, then $\mathcal{A}$ 
first checks whether a reduced input can be output by using Lemma~\ref{qq_subtree}.
If not, then it branches into 3 directions, according to whether $S$ is local, and if not, 
whether the children of $r'$ should be reversed to achieve the properties of Lemma~\ref{I_S_prop}.
In the first branch, it outputs a necessary set of size at most 2.
In the other two branches, it checks whether the annotated fragmentation $AF_0$
produced in the beginning is proper. 
While the annotated fragmentation is not proper, $\mathcal{A}$ chooses the smallest $\ell$ 
and the smallest index $j$ violating Property $\ell$ (maybe in the reversed instance), and 
branches into at most $2(k+1)^2-1$ directions.
In these branches, $\mathcal{A}$ either modifies the actual annotated fragmentation or stops by outputting 
an independent subproblem, a necessary set of size at most $2k+1$, or rejecting. 

Let us consider a sequence of $t$ such branchings performed by $\mathcal{A}$, and let $AF_0, AF_1, \dots, AF_t$ be 
the sequence of annotated fragmentations produced in this process. 
(We interpret these as annotated fragmentations for $(T,T',S)$ and not for $(T^{\mathrm{rev}},T'^{\mathrm{rev}},S)$.)
Let us call a continuous subsequence $\mathcal{S}$ of $AF_0, AF_1, \dots, AF_t$ a \emph{segment}, 
if each annotated fragmentation in $\mathcal{S}$ has the same number of non-trivial fragments, and $\mathcal{S}$ is maximal
with respect to this property. By Lemma~\ref{max_fragments}, the algorithm can reject if there are more than $2k$ 
non-trivial fragments in a fragmentation, so $AF_0, AF_1, \dots, AF_{t}$ can contain at most $2k$ segments.
Let $\mathcal{S} = AF_{t_1},AF_{t_1+1}, \dots, AF_{t_2}$ be such a segment. 
Clearly, each $AF_{h}$ ($t_1<h \leq t_2$) is obtained from $AF_{h-1}$ by performing a right split 
either in the original or in the reversed instance 
(the latter meaning that $AF_{h}^{\mathrm{rev}}$ is obtained from $AF_{h-1}^{\mathrm{rev}}$ by a right split).

Let $AF_p$ be the first 9-proper annotated fragmentation in the segment.
Using Lemma~\ref{measure}, each subsequence of $AF_{t_1}, \dots, AF_p$ 
where the measure does not increase can have length at most $k$. 
(The measure of an annotated fragmentation is the measure of its fragmentation.)
By (2) of Lemma~\ref{measure}, $AF_{t_1}$ has a non-trivial fragment containing each of those indices for which the algorithm 
performed a branching (because of Property 8) in some $AF_h$, $t_1 \leq h \leq p$. Taking into account 
that the number of non-trivial fragments can not exceed $2k$, but branchings can also happen in the reversed instance, 
we obtain that there can be at most $4k$ maximal subsequences in $AF_{t_1}, \dots, AF_p$ of length at least 2 
where the measure is constant. 
Using Lemma~\ref{max_fragments}, we get that $\mu(AF_p) \leq 36k$, implying $p \leq t_1+4k^2+36k$.

Clearly, $\mathcal{A}$ obtains $AF_{p+1},AF_{p+2}, \dots, AF_{t_2}$ 
while trying to ensure Property 10, by performing right splits in the original instance. 
Observe that if $\mathcal{A}$ obtains $AF_{h}$ ($p+1 \leq h \leq t_2$)
by applying a right split at $j$, then by the choice of $j$, Property 10 holds for each index $j' \leq j$ in
any $AF_{h'}$ where $h' \geq h$. This, together with Lemma~\ref{max_fragments} implies that 
$\mathcal{A}$ can perform at most $2k$ such branchings, implying that $t_2 \leq p+2k \leq t_1+4k^2 + 38k$. 
Altogether, this implies $t \leq 2k (4k^2+38k)$, proving that the maximum length of
a sequence of branchings performed by $\mathcal{A}$ in order to obtain a proper
annotated fragmentation can be at most $8k^3+76k^2=O(k^3)$.

Essentially, this means that the search tree that $\mathcal{A}$ investigates has height at most $O(k^3)$. 
Since one branching results in at most $2(k+1)^2-1$ directions, 
we obtain that the total number of resulting branches in a run of algorithm $\mathcal{A}$ 
can be bounded by a function $f$ of $k$. In each of these branches, if $\mathcal{A}$ does not stop, 
then it has a proper annotated fragmentation $(\mathcal{F},U)$. 
After this, algorithm $\mathcal{A}$ does not perform any more branchings. 
Instead, it applies Lemma~\ref{right_aligned} repeatedly. If the algorithm reaches a state where
Lemma~\ref{right_aligned} does not apply, then it outputs a solution in linear time
using Lemma~\ref{isomorphism}. 

It is easy to verify that each branch can be performed in linear time. 
The only non-trivial task is to show that the repeated application of Lemma~\ref{right_aligned} 
can be implemented in linear time, but this easily follows from the fact that 
none of the conditions of Lemma~\ref{right_aligned} can be applied twice for a block $[a,b]$.
\qed
\end{proof}

\subsection{The proof of Lemma~\ref{left_aligned}}
\label{sect_left_aligned}

In this section we prove Lemma~\ref{left_aligned}.
Suppose that Property $\ell$ ($1 \leq \ell \leq  10$) does not hold for some $j \in [m']$
in the annotated fragmentation $(\mathcal{F},U)$, but
all the previous properties hold for each index 
both in $(\mathcal{F},U)$ and in $(\mathcal{F}^{\mathrm{rev}},U^{\mathrm{rev}})$.
Suppose also that $j$ is left-aligned, i.e. $I_S(j)=[\leftind{j},\leftind{j}]$.
Below we describe the detailed steps of algorithm $\mathcal{A}$ depending on the property
that is violated by $j$. 

\begin{quote} 
{\bf Property 1:}
$G'[X'_j]$ is isomorphic to $G[X_{\leftind{j}}]$.
\end{quote}
If $j$ violates Property 1, then 
$G'[X'_j]$ is not isomorphic to $G[X_{\leftind{j}}]$, which implies $S \cap X_{\leftind{j}} \neq \emptyset$.
From $I_S(j)=[\leftind{j},\leftind{j}]$  we obtain that $S \cap X_{\leftind{j}}$ must be a solution
for $(G'[X'_j],G[X_{\leftind{j}}])$. Conversely, if $(G',G)$ is solvable, then
any solution for $(G'[X'_j],G[X_{\leftind{j}}])$ can be extended to a solution for $(G',G)$.
By $m>m'$, $G-X_{\leftind{j}}$ can not be isomorphic to $G'-X'_j$, so 
$S \subseteq X_{\leftind{j}}$  is not possible. 
Therefore, if the parameter of $(G'[X'_j],G[X_{\leftind{j}}])$ is more than $k-1$ (or less than $1$), 
then the algorithm can refuse the instance.
Thus, $\mathcal{A}$ can either \outputstyle{reject}, or it can output the 
\outputstyle{independent subproblem} $(G'[X'_j],G[X_{\leftind{j}}])$.

\begin{quote}
{\bf Property 2:}
$|M_{r'}^+(j)| \leq |M_{r}^+(\leftind{j})| \leq |M_{r'}^+(j)|+k$ and 
$|M_{r'}^-(j)| \leq |M_{r}^-(\leftind{j})| \leq |M_{r'}^-(j)|+k$.
\end{quote}
By $I_S(j)=[\leftind{j},\leftind{j}]$ and Prop.~\ref{limit_indices}, we can observe that 
$M_{r}^+(\leftind{j}) \setminus S = \phi_S(M_{r'}^+(j))$ and 
$M_{r}^-(\leftind{j}) \setminus S = \phi_S(M_{r'}^-(j))$. 
If $j$ violates Property 2, then this contradicts to $|S|\leq k$, 
and thus algorithm $\mathcal{A}$ can \outputstyle{reject}. 

\begin{mylemma}
\label{step2}
If Properties 1 and 2 hold for each index both in $(\mathcal{F},U)$ and in $(\mathcal{F}^{\mathrm{rev}},U^{\mathrm{rev}})$, 
and there is an index $h \in [m']$ contained in a non-trivial fragment $F$ such that 
$|M_{r'}^+(h)|>k$ or $|M_{r'}^-(h)|>k$, then there is no solution for $(G',G)$.
\end{mylemma}

\begin{proof}
As Property 2 holds for each index in $F=([a',b'],[a,b])$, 
$|M^+_{r'}(j)| \leq |M^+_r(\leftind{j})|$ holds for each $j \in [m']$.
Similarly, as Property 2 holds for each index in the reversed instance, we obtain that 
$|M^+_{r'}(j)| \leq |M^+_r(\rightind{j})|$ 
must hold for each $j \in [m']$. Supposing $|M_{r'}^+(h)|>k$, we get 
$$
\sum_{a \leq i \leq b} |M^+_r(i)|  = 
\sum_{a' \leq j < h} |M^+_r(\leftind{j})| + 
\sum_{0 \leq d < \sigma(F)} |M^+_r(\leftind{h}+d)| 
$$ 
$$
+
\sum_{h \leq j \leq b'} |M^+_r(\rightind{j})| \geq 
|M^+_{r'}(h)| + \sum_{a' \leq j \leq b'} |M^+_{r'}(j)| 
 >   k + \sum_{a' \leq j \leq b'} |M^+_{r'}(j)|.
$$
Observe that we used  $\sigma(F)>0$ in the first inequality.

Proposition~\ref{limit_indices} yields $\phi_S (B^+_{r'}(a',b')) = B^+_r(a,b) \setminus S$, from which 
$|B^+_r(a,b)| \leq |B^+_{r'}(a',b'))|+k$ follows. Using that Property 1 holds for each index, 
we also have $|X'_j|=|X_{\leftind{j}}|$ for each $j\in [m']$, 
implying $\sum_{a \leq i \leq b} |X_i| \geq \sum_{a' \leq j \leq b'} |X'_j|$. 
Hence, we obtain
$$\sum_{a \leq i \leq b} |M^+_r(i)| \leq \sum_{a' \leq j \leq b'} |M^+_{r'}(j)| + k,
$$ 
contradicting the above inequality. The case $|M_{r'}^-(h)|>k$ can be handled in the same way.
\qed
\end{proof}

\begin{quote}
{\bf Property 3:}
If $j$ is non-trivial, then 
$|M_{r'}^+(j)| = |M_{r}^+(\leftind{j})|$ and 
$|M_{r'}^-(j)| = |M_{r}^-(\leftind{j})|$.
\end{quote}
By $I_S(j)=[\leftind{j},\leftind{j}]$ and Prop.~\ref{limit_indices}, we get 
$M_{r}^+(\leftind{j}) \setminus S = \phi_S(M_{r'}^+(j))$ and 
$M_{r}^-(\leftind{j}) \setminus S = \phi_S(M_{r'}^-(j))$.
Clearly, if $|M_{r}^+(\leftind{j})|<|M_{r'}^+(j)|$ or $|M_{r}^-(\leftind{j})|<|M_{r'}^-(j)|$, 
then algorithm $\mathcal{A}$ can output 'No'.
If this is not the case, then $S$ must contain at least one vertex from 
$M_{r}^+(\leftind{j})$ or $M_{r}^-(\leftind{j})$, because $j$ violates Property 3.
If $|M_{r'}^+(j)|> k$ or $|M_{r'}^-(j)| > k$,  
then $\mathcal{A}$ can output 'No' as well, by Lemma~\ref{step2}.
Thus, if $\mathcal{A}$ does not \outputstyle{reject}, 
then it can output a \outputstyle{necessary set} of size at most $k+1$ in both cases, 
by taking $|M_{r'}^+(j)|+1$ or $|M_{r'}^-(j)|+1$ arbitrary vertices from 
$M_{r}^+(\leftind{j})$ or $M_{r}^-(\leftind{j})$, respectively.

\begin{quote}
{\bf Property 4:}
If $j$ is non-trivial, then 
$|L_{r'}(y,j)|=|L_{r}(\leftind{y},\leftind{j})|$ for any $y < j$ 
contained in the same fragment as $j$.
\end{quote}
Suppose that $|L_{r'}(y,j)| \neq |L_{r}(\leftind{y},\leftind{j})|$ for some $y < j$ 
contained in the same fragment that contains $j$.
Since $j$ is left-aligned, we get that $y$ must also be left-aligned as well by $y < j$,
i.e. $I_S(y)=[\leftind{y},\leftind{y}]$.
By Prop.~\ref{limit_indices}, this implies $L_r(\leftind{y},\leftind{j}) \setminus S = \phi_S(L_{r'}(y,j))$.
Thus, if $|L_{r'}(y,j)| > |L_{r}(\leftind{y},\leftind{j})|$ then
$\mathcal{A}$ can \outputstyle{reject}. Otherwise,
$L_{r}(\leftind{y},\leftind{j})$ contains at least one vertex from $S$.
Since each vertex in $L_{r}(\leftind{y},\leftind{j})$ has the same neighborhood, 
$\mathcal{A}$ can output $\{s\}$ as a \outputstyle{necessary set} for some arbitrarily chosen $s$ 
in $L_{r}(\leftind{y},\leftind{j})$.

\begin{quote}
{\bf Property 5:}
If $j$ is non-trivial, then for every $(v,w) \in \leftind{P}^+(j)$ such that
$Q_{r'}^{\mathrm{right}}(v)$ $= y$ is non-trivial, $\leftind{y} \leq Q_r^{\mathrm{right}}(w) \leq \rightind{y}$ holds.
Also, for every $(v,w) \in \leftind{P}^-(j)$ such that  
$Q_{r'}^{\mathrm{left}}(v)=y$ is non-trivial, $\leftind{y} \leq Q_r^{\mathrm{left}}(w) \leq \rightind{y}$ holds.
\end{quote}
Suppose that $j$ violates Property $5$, because 
$(v,w) \in \leftind{P}^+(j)$  such that $Q_{r'}^{\mathrm{right}}(v)= y$ is non-trivial, 
but $\leftind{y} \leq Q_r^{\mathrm{right}}(w) \leq \rightind{y}$ does not hold.
We show that $\mathcal{A}$ can output 'No' in this case.
As Property~3 holds for $j$, $|M_{r'}^+(j)| = |M_{r}^+(\leftind{j})|$.
As $j$ is left-aligned, $\phi_S(v)=w$ by Prop.~\ref{pairs}.
But from this, Prop.~\ref{limit_indices} implies $\alpha_S(y) \leq  Q_r^{\mathrm{right}}(w) \leq \beta_S(y)$. 
By Prop.~\ref{limit_indices} we know
$\leftind{y} \leq \alpha_S(y) \leq \beta_S(y) \leq \rightind{y}$ as well.
Therefore, $\mathcal{A}$ can indeed refuse the instance.
Supposing that Property 5 does not hold because of the case 
where some $(v,w) \in \leftind{P}^-(j)$ is considered leads to the same result,
so it is straightforward to verify that $\mathcal{A}$ can \outputstyle{reject} in both cases.

The observation below, used in the forthcoming three cases, is easy to see:
\begin{myprop}
\label{LR_labels} Suppose that the first five properties  hold for a given (annotated) fragmentation.
Let $y$ and $j$ be indices of $[m']$ contained in non-trivial fragments $F$ and $H$, respectively, 
and suppose that $j$ is left-aligned. Then $v \in L_{r'}(y,j)$ implies the followings. \\
(1) $v \in \mathcal{L}(F,H) \cup \mathcal{R}(F,H) \cup \mathcal{X}(F,H)$. \\
(2) If $v \in \mathcal{L}(F,H)$, then $\alpha_S(y)=\leftind{y}$. \\
(3) If $v \in \mathcal{R}(F,H)$, then $\beta_S(y)=\rightind{y}$. \\
(4) If $v \in \mathcal{X}(F,H)$, then $y$ is either wide or skew. 
\end{myprop}

\begin{quote}
{\bf Property 6:}
If $j$ is non-trivial, then no vertex in $\mathcal{X}(F,H)$ (for some $F$ and $H$) ends in $j$.
\end{quote}
Suppose that Property 6 does not hold for $j$, so there is a vertex in $L_{r'}(y,j) \cap \mathcal{X}(F,H)$ for some $y<j$.
As $j$ is left-aligned, Prop.~\ref{LR_labels} implies that $y$ is either \outputstyle{wide or skew}.

\begin{quote}
{\bf Property 7:}
$j$ is not conflict-inducing for any $(F,H)$.
\end{quote}
Suppose that $j$ violates Property 7 because it is conflict-inducing for some $(F,H)$ 
and for some conflicting pair of indices $(y_1,y_2)$. 
Let $j_1$ be the minimal index for which $L_{r'}(y_1,j_1) \cap \mathcal{R}(F,H) \neq \emptyset$, 
and let $j_2$ be the minimal index for which  $L_{r'}(y_2,j_2) \cap \mathcal{L}(F,H) \neq \emptyset$. 
Since $j \geq \max\{j_1,j_2\}$, and $j$ is left-aligned, we know that both $j_1$ and $j_2$ are left-aligned as well. 
By Prop.~\ref{LR_labels}, this implies 
$\beta_S(y_1) = \rightind{y_1}$ and $\alpha_S(y_2) = \leftind{y_2}$. 
If $y_1 < y_2$, then this yields a contradiction by Prop. \ref{limit_indices}, 
so $\mathcal{A}$ can \outputstyle{reject}.
In the case where $y_1=y_2=y$, we get $I_S(y)=[\leftind{y},\rightind{y}]$, and since $y$ is non-trivial, 
algorithm $\mathcal{A}$ can output $y$ as a \outputstyle{wide index}. 

For the case of Property 8, we need the following simple lemma:

\begin{mylemma}
\label{short_gap}
Suppose that a fragmentation for $(T,T',S)$ contains a fragment $F=([a',b'],[a,b])$ with $0 < b'-a' \leq \sigma(F)$, 
and the first four properties hold for each index contained in $F$ both in the given fragmentation and its reversed version.
Then $\mathcal{A}$ can produce a necessary set of size at most $2k+1$.
\end{mylemma}

\begin{proof}
Since Properties 1 and 3 hold for each index contained in $F$, we obtain  
$|B^+_{r'}(a',b')| = |B^+_{r}(\leftind{a'},\leftind{b'})|$. 
Using Prop.~\ref{limit_indices} we have $B^+_r(a,b) \setminus S = \phi_S(B^+_{r'}(a',b'))$.
Proposition~\ref{notempty} yields $B^+_{r'}(j) \neq \emptyset$ for any $j$, so 
we get $|B^+_{r}(a,b)| > |B^+_{r'}(a',b')|$. 
Hence, fixing an arbitrary set $N \subseteq B^+_r(a,b)$ of size $|B^+_{r'}(a',b')|+1$, 
we get that $N$ is a nonempty necessary set. We claim $|B^+_{r'}(a',b')| \leq 2k$, which implies $|N|\leq 2k+1$. 
Thus, $\mathcal{A}$ can indeed output $N$, proving the lemma.

It remains to show $|B^+_{r'}(a',b')| \leq 2k$.
Recall $|B^+_{r'}(a',b')| = |B^+_{r}(\leftind{a'},\leftind{b'})|$.
As Properties 1 and 3 hold for each index contained in $F^{\mathrm{rev}}$ in the reversed fragmentation,
we get $|B^+_{r'}(a',b')| = |B^+_{r}(\rightind{a'},\rightind{b'})|$ as well.
Using $\rightind{a'} - \leftind{b'} = a'-b' +\sigma(F) \geq 0$, we obtain that 
$B^+_r(\leftind{a'},\leftind{b'}) \cap B^+_{r}(\rightind{a'},\rightind{b'}) \subseteq B^+_r(\leftind{b'})$.
Moreover, if $b'-a' <\sigma(F)$ also holds, then actually $B^+_r(\leftind{a'},\leftind{b'}) 
\cap B^+_{r}(\rightind{a'},\rightind{b'}) = \emptyset$.

By the above paragraph,  $b'-a' < \sigma(F)$ implies $|B^+_{r}(a,b)| \geq 2|B^+_{r'}(a',b')|$, 
so we get $|B^+_{r'}(a',b')| \leq k$.
On the other hand, $b'-a'=\sigma(F)$ yields
$|B^+_{r'}(a',b')|+k \geq |B^+_{r}(a,b)| \geq 2|B^+_{r'}(a',b')|-|B^+_r(\leftind{b'})|$, implying 
$|B^+_{r'}(a',b')| \leq k + |B^+_r(\leftind{b'})|$. 
Taking into account that $|B^+_{r'}(a')|=|B^+_r(\leftind{b'})|=|B^+_{r'}(b')|$
holds by Properties 1 and 3 for $b'$ and for $a'^{\mathrm{rev}}$, we have 
$|B^+_{r'}(a',b')| \geq 2|B^+_r(\leftind{b'})|$. 
Summarizing all these, $|B^+_{r'}(a',b')| \leq 2k$ follows. 
\qed
\end{proof}

\begin{quote}
{\bf Property 8:}
$j$ is not LR-critical for any $(F,H)$.
\end{quote}
Suppose that $j$ violates Property 8, so $j$ is LR-critical for some $(F,H)$. 
In this case, $R^{\mathrm{min}}(F,H)=y^R$ is an index contained in $F$. Since $j$ is left-aligned, 
the R-critical index for $(F,H)$ is also left-aligned, hence Prop.~\ref{LR_labels} yields $\beta_S(y^R)=\rightind{y^R}$.
Let $a'$ be the first index of $[m']$ contained in $F$.

First, if $y^R < a'+\sigma(F)$, then we apply Lemma~\ref{short_gap} as follows. 
Clearly, by  $\beta_S(y^R)=\rightind{y^R}$ we can perform a right split at $y^R$. 
The obtained fragmentation will contain the fragment $F'=([a',y^R],[\leftind{a'},\rightind{y^R}])$, 
so $y^R-a' < \sigma(F) = \sigma(F')$ shows that $\mathcal{A}$ can produce a \outputstyle{necessary set} 
of size at most $2k+1$ by using Lemma~\ref{short_gap}.

Now, suppose  $y^R \geq a'+\sigma(F)$. In this case, there is an index $t$ in $F$ for which $\rightind{t}=\leftind{y^R}$. 
By Properties 3 and 5 for $y^R$, we know that there is a vertex in 
$M^+_r(\leftind{y^R})$ that ends in the fragment $H$. 
Using  Properties 3 and 5 again for $t^{\mathrm{rev}}$ in the reversed instance, 
we know that there must be a vertex $v$ in $M^+_{r'}(t)$ that ends in the fragment $H$. 
By Prop.~\ref{LR_labels}, $v \in \mathcal{L}(F,H) \cup \mathcal{R}(F,H) \cup \mathcal{X}(F,H)$. 
Observe that $v \notin \mathcal{X}(F,H)$, as Property 6 holds for every index in $[m']$. 
Also, $v \notin \mathcal{R}(F,H)$ by the definition of $y^R=R^{\mathrm{min}}(F,H)$. 
Thus, we know that $v \in \mathcal{L}(F,H)$, implying $y^L=L^{\mathrm{max}}(F,H) \geq t$ as well.
As Property 7 holds for each index, we also have $y^L<y^R$.

To finish the case, observe that since $j$ is left-aligned and LR-critical for $(F,H)$,
Prop.~\ref{LR_labels} yields $\alpha_S(y^L)=\leftind{y^L}$. 
Using $\beta_S(y^R)=\rightind{y^R}$ again, we can produce a fragmentation for $(T,T',S)$ that contains 
the fragment $F'=([y^L,y^R],[\leftind{y^L},\rightind{y^R}])$. 
(This can be thought of as performing a right split at $y^R$, 
and a right split at $(y^L)^{\mathrm{rev}}$ in the reversed instance.)
Hence, $y^R-y^L \leq y^R-t = \sigma(F) = \sigma(F')$ shows that 
$\mathcal{A}$ can produce a \outputstyle{necessary set} of size at most $2k+1$ by using Lemma~\ref{short_gap}.

\begin{quote}
{\bf Property 9:} 
If $j$ is non-trivial, then for every $(v,w) \in \leftind{P}^+(j)$ such that
$Q_{r'}^{\mathrm{right}}(v)$ $= y$ is non-trivial, $Q_r^{\mathrm{right}}(w)= \leftind{y}$ holds.
Also, for every $(v,w) \in \leftind{P}^-(j)$ such that  
$Q_{r'}^{\mathrm{left}}(v)=y$ is non-trivial, $Q_r^{\mathrm{left}}(w)= \leftind{y}$ holds.
\end{quote}
Observe that if Property 9 does not hold for an index $j$, then by Prop.~\ref{LR_labels}, 
either $M_{r'}^+(j)$ or $M_{r'}^-(j)$ contains a vertex in $\mathcal{R}(F,H) \cup \mathcal{X}(F,H)$ for some $(F,H)$. 
But this means that one of Properties 6 and 8 must be violated, 
which is a contradiction. Thus, $\mathcal{A}$ can correctly \outputstyle{reject}.

\begin{quote}
{\bf Property 10:}
If $j$ is non-trivial, then for each important trivial index $u \in U$, 
$|L_{r'}(j,u)|=|L_{r}(\leftind{j},\leftind{u})|$ holds if $u>j$, and 
$|L_{r'}(u,j)|=|L_{r}(\leftind{u},\leftind{j})|$ holds if $u<j$.
\end{quote}
Suppose that $j$ violates Property 10, because 
$|L_{r'}(j,u)| \neq |L_{r}(\leftind{j},\leftind{u})|$ for some $u>j$. 
(The case when $u<j$ can be handled in the same way.)
Since $u$ is contained in a trivial fragment, 
$I_S(u)=[\leftind{u},\leftind{u}]$.
Thus, by $I_S(j)=[\leftind{j},\leftind{j}]$ and Prop.~\ref{limit_indices}, we get
$L_{r}(\leftind{j},\leftind{u}) \setminus S =  \phi_S(L_{r'}(j,u))$.
If $|L_{r'}(j,u)| > |L_{r}(\leftind{j},\leftind{u})|$, 
then $\mathcal{A}$ can \outputstyle{reject} the instance.
Otherwise, we can argue as before that $\{s\}$ is a \outputstyle{necessary set} for any 
$s \in L_{r}(\leftind{j},\leftind{u})$.

\bibliography{ref_interval}
\bibliographystyle{abbrv}

\end{document}